\newcounter{magicrownumbers}
\tikzset{>=latex}
\tiny\color{gray},
\tikzstyle{nnf}=[
\tikzstyle{extnode}=[
\tikzstyle{leafnode}=[
\tikzstyle{constnode}=[
\tikzstyle{label}=[
\tikzstyle{acarrow}=[
\tikzstyle{bnarrow}=[
\tikzstyle{bayesnet}=[
\tikzstyle{bnnode}=[
\tikzstyle{cpt}=[
\tikzstyle{graph}=[
\tikzstyle{node}=[
\tikzstyle{bdd}=[
\tikzstyle{bddnode}=[
\tikzstyle{highedge}=[
\tikzstyle{lowedge}=[
\tikzstyle{bddterminal}=[
\lstdefinestyle{compact}{
  \ttfamily\tiny
}
\newcommand{\xmark}{\ding{55}}%
\newcommand{\xMapsto}[2][]{\ext@arrow 0599{\Mapstofill@}{#1}{#2}}
\def\Mapstofill@{\arrowfill@{\Mapstochar\Relbar}\Relbar\Rightarrow}
\newlength{\pardefault}
\newcommand{\Llet}[2]{ {\texttt{let}~#1~\texttt{in}~#2} } %
\newcommand{\Lobs}[1]{ {\texttt{observe}~#1}}
\newcommand{\Lflip}[1]{ {\texttt{flip}~#1}}
\newcommand{\Lfst}[1]{ {\texttt{fst}~#1}}
\newcommand{\Lsnd}[1]{ {\texttt{snd}~#1}}
\newcommand{\Lite}[3]{ {\texttt{if}~#1~\texttt{then}~#2~\texttt{else}~#3}}
\newcommand{\dice}[0]{\texttt{Dice}}
\newcommand{\bigO}[0] {\mathcal{O}}
\newcommand{\lx}[0] {\mathbf{x}}
\newcommand{\lf}[0] {\mathbf{f}}
\newcommand{\tx}[0] {\texttt{x}}
\newcommand{\ty}[0] {\texttt{y}}
\newcommand{\tz}[0] {\texttt{z}}
\newcommand{\comp}[0] {\rightsquigarrow}
\newcommand{\xbroadand}[1]{{\underset{#1}{\land}}}
\newcommand{\xpointor}[1]{{\overset{{.}}{\underset{#1}{\lor}}}}
\newcommand{\xpointphi}[0]{{\overset{{.}}{\varphi}}}
\newcommand{\tup}[0] {\xpointphi}
\newcommand{\defeq}[0] {\triangleq}
\newcommand{\obs}[0]{\gamma}
\newcommand{\bool}[0]{ \mathbf{Bool} }
\newcommand{\wmc}[0]{ \mathtt{WMC} }
\newcommand{\form}[0]{ F }
\newcommand{\true}[0]{ \mathtt{T} }
\newcommand{\false}[0]{ \mathtt{F} }
\newcommand{\prog}[0]{ \texttt{p}}
\newcommand{\te}[0]{ \texttt{e}}  %
\newcommand{\mods}[0]{ \mathtt{Mods}} 
\newcommand{\dbracket}[1]{ \left\llbracket { #1 } \right\rrbracket}
\def\Pr{\mathop{\rm Pr}\nolimits}
\begin{document}

\title[Scaling Exact Inference for Discrete Probabilistic Programs]{Scaling Exact Inference for Discrete Probabilistic Programs}
\author{Steven Holtzen}
\affiliation{
  \institution{University of California, Los Angeles}            %
}
\email{sholtzen@cs.ucla.edu}          %

\author{Guy Van den Broeck}
\affiliation{
  \institution{University of California, Los Angeles}            %
}
\email{guyvdb@cs.ucla.edu}          %

\author{Todd Millstein}
\affiliation{
  \institution{University of California, Los Angeles}            %
}
\email{todd@cs.ucla.edu}          %

\begin{abstract}
Probabilistic programming languages (PPLs) are an expressive means of representing and reasoning about probabilistic models. The computational challenge of {\em probabilistic inference} remains the primary roadblock for applying PPLs in practice. Inference is fundamentally hard, so there is no one-size-fits all solution. In this work, we target scalable inference for an important class of probabilistic programs: those whose probability distributions are {\em discrete}. Discrete distributions are common in many fields, including text analysis, network verification, artificial intelligence, and graph analysis, but they prove to be challenging for existing PPLs.

We develop a domain-specific probabilistic programming language called \dice{} that features a new approach to exact discrete probabilistic program inference. \dice{} exploits program structure in order to {\em factorize} inference, enabling us to perform exact inference on probabilistic programs with hundreds of thousands of random variables. Our key technical contribution is a new reduction from discrete probabilistic programs to \emph{weighted model counting} (WMC). This reduction separates the structure of the distribution from its parameters, enabling logical reasoning tools to exploit that structure for probabilistic inference. We (1) show how to compositionally reduce \dice{} inference to WMC, (2) prove this compilation correct with respect to a denotational semantics, (3) empirically demonstrate the performance benefits over prior approaches, and (4) analyze the types of structure that allow \dice{} to scale to large probabilistic programs.
\end{abstract}

\begin{CCSXML}
<ccs2012>
<concept>
<concept_id>10002950.10003648.10003649</concept_id>
<concept_desc>Mathematics of computing~Probabilistic representations</concept_desc>
<concept_significance>500</concept_significance>
</concept>
<concept>
<concept_id>10002950.10003648.10003662</concept_id>
<concept_desc>Mathematics of computing~Probabilistic inference problems</concept_desc>
<concept_significance>500</concept_significance>
</concept>
</ccs2012>
\end{CCSXML}
\ccsdesc[500]{Mathematics of computing~Probabilistic representations}
\ccsdesc[500]{Mathematics of computing~Probabilistic inference problems}

\maketitle

\section{Introduction}
\label{sec:intro}
The primary analysis task
in probabilistic programming languages is \emph{probabilistic inference}, computing the probability that an
event occurs according to the distribution defined by the program. Probabilistic
inference generalizes many well-known program analysis tasks, such as
reachability, and hence inference for a sufficiently expressive language is an extremely
hard program analysis task. The key to scaling inference is to strategically make
assumptions about the structure of programs and place restrictions on which
programs can be written, while retaining a useful and expressive language.

In this paper, we focus on scaling inference for an important class of
probabilistic programs: those whose probability distributions are {\em
discrete}. Most PPLs today focus on handling continuous random variables. In
the continuous setting one usually desires approximate inference techniques, such as forms
of sampling~\citep{wingate2013automated, kucukelbir2015automatic,
jordan1999introduction, bingham2019pyro, dillon2017tensorflow,
carpenter2016stan, nori2014r2, chaganty2013efficiently}. However, handling
continuous variables typically requires making strong assumptions about the structure of
the program: many of these inference techniques have strict
differentiability requirements that preclude their application to programs with
discrete random variables. For instance, momentum-based sampling algorithms like
HMC and NUTS~\citep{hoffman2014no} and many variational
approximations~\citep{kucukelbir2017automatic} are restricted to continuous latent random
variables and almost-everywhere differentiability of the posterior distribution.
Yet many application domains are naturally discrete: for example mixture models,
networks and graphs, ranking and voting, and text. This key deficiency in some of the most popular
PPLs has led to a recent rise in interest in handling
discreteness in probabilistic programs~\citep{obermeyer2019tensor,
gorinova2019automatic, zhou2019divide}.

In this work we focus entirely on the challenge of designing a fast and efficient
discrete probabilistic program inference engine. We describe \dice{}, a domain-specific language for representing discrete
probabilistic programs, along with a new algorithm for exact inference for such
programs. \dice{} extends a simple first-order, non-recursive functional language
with support for making discrete probabilistic choices. It also provides
first-class {\em observations}, which enables \dice{} to support Bayesian
reasoning in the presence of evidence. %

Discrete programs are not a new challenge, and there are existing PPLs that
support exact inference for discrete probabilistic
programs~\citep{narayanan2016probabilistic, gehr2016psi, Sankaranarayanan2013,
Albarghouthi2017_1, goodman2014design, wang2018pmaf, Claret2013, Pfeffer2007,
bingham2019pyro, geldenhuys2012probabilistic}. However, we identify several compelling
example programs from text analysis, network verification, and discrete
graphical models on which existing methods fail. The reason that they fail is
that the existing methods do not find and automatically exploit the necessary factorizations and structure.
\dice{}'s inference algorithm is inspired by
techniques for exact inference on discrete graphical models, which leverage the
graphical structure to factorize the inference computation. For example, a common
property is {\em conditional independence}: if a variable {\tt z} is conditionally
independent of {\tt x} given {\tt y}, then {\tt y} acts as a kind of {\em
interface} between {\tt x} and {\tt z} that allows inference to be split into
two separate analyses. This kind of structure abounds in typical probabilistic
programs. For example, a function call is conditionally independent of the
calling context given the actual argument value. \dice{}'s inference algorithm
automatically identifies and exploits these independences in order to factorize
inference. This enables \dice{} to scale to extremely large discrete probabilistic
programs: our experiments in Section~\ref{sec:experiments} show \dice{}
performing exact inference on a real-world probabilistic program that is 1.9MB large.

At its core, \dice{} builds on the \emph{knowledge compilation} approach to
probabilistic inference~\cite{Chavira2008, chavira2005compiling, Darwiche09,
  Fierens2015, chavira2006compiling}.
We show how to compile \dice{} programs to \emph{weighted Boolean formulas}
(WBF) and then perform exact inference via \emph{weighted model counting} (WMC)
on those formulas. We use binary decision diagrams (BDDs) to represent these
formulas. We show that during compilation, these BDDs naturally identify and exploit conditional independence and other
forms of structure, thereby avoiding exponential explosion for many classes of
interesting programs. Further, BDDs support efficient WMC,
linear in the size of the BDD.

Employing knowledge compilation for probabilistic inference in \dice{} requires
us to generalize the prior approaches in several ways. First, in order to support
logical compilation of traditional programming constructs such as conditionals,
local variables, and arbitrarily nested tuples, we develop novel compilation
rules that compositionally associate \dice{} programs with weighted Boolean
formulas. A key challenge here is supporting arbitrary observations. To do
this, a \dice{} program, as well as each \dice{} function, is compiled to {\em
two} BDDs. Intuitively, one BDD represents all possible executions of the
program, ignoring observations, and the other BDD represents all executions that
satisfy the program's observations. We show how to use WMC on these formulas to
perform exact Bayesian inference, with arbitrary observations throughout the
program. Second, \dice{} compiles functions {\em modularly}: each function is
compiled to a BDD once, and we exploit efficient BDD composition operations to
reuse this BDD at each call site. This technique produces the same final BDD
that would otherwise be produced, but it allows us to amortize the costly BDD
construction phase across all callers, which we demonstrate can provide
orders-of-magnitude speedups.

In sum, this paper presents the following technical contributions:
\begin{itemize}[leftmargin=*]
 \item We describe the \dice{} language and illustrate its utility through three motivating examples (Section~\ref{sec:motivation}).

 \item We formalize \dice{}'s semantics (Section~\ref{sec:language}) and its compilation to weighted Boolean formulas (Section~\ref{sec:compilation}).  We prove that the compilation rules are correct with respect to the denotational semantics:
 the probability distribution represented by a compiled \dice{}
program is equivalent to that of the original program.

\item We empirically compare \dice{}'s performance to that of prior PPLs with
exact inference (Section~\ref{sec:experiments}). We describe new and
challenging benchmark probabilistic programs from cryptography,
network analysis, and discrete Bayesian networks, and show that \dice{} scales
to orders-of-magnitude larger programs than existing probabilistic programming
languages, and is competitive with specialized Bayesian network inference
engines on certain tasks.

\item We analyze some of the benefits of \dice{}'s compilation strategy in
Section~\ref{sec:analysis}. First we note that \dice{} inference is
\textsf{PSPACE}-hard. Then we characterize cases where \dice{} scales
efficiently, and which types of structure it exploits in the distribution. We illustrate where to find that structure in the program code as well as the compiled  BDD form. We use these results
to provide a technical comparison with prior exact inference algorithms.
\end{itemize}
\dice{} is available at \url{https://github.com/SHoltzen/dice}.

\section{An Overview of \dice{}}
\label{sec:motivation}
This section overviews the \dice{} language and its inference algorithm.  First we use a simple example program to
show how \dice{} exploits program structure to perform inference in
a \emph{factorized} manner. Then we use an example from network verification to show how \dice{} exploits the modular structure of
functions. Finally we use a cryptanalysis example to illustrate how inference in \dice{} is augmented to support Bayesian inference in the presence of \emph{evidence}.

\subsection{Factorizing Inference}

Probabilistic programming languages (PPLs) endow traditional programming
languages with probabilistic operations that enable the construction of
probability distributions~\citep{Huang2016, Albarghouthi2017_1, Claret2013,
 borgstrom2011measure, Kozen1979}, and \dice{} is no exception. Specifically, \dice{}
extends a first-order functional language that supports non-recursive functions
and a form of bounded iteration. Despite its simplicity, this language can
express a wide variety of statistical models, and exact probabilistic inference
in \dice{} is fundamentally hard. In addition to its standalone usage, we
anticipate \dice{} being used as a core language for discrete inference inside
other probabilistic programming systems.

\begin{figure}
\begin{subfigure}[b]{0.5\linewidth}
 \begin{lstlisting}[mathescape=true, basicstyle=\ttfamily\small]
let x = flip$_1$ 0.1 in
let y = if x then flip$_2$ 0.2 else 
    flip$_3$ 0.3 in
let z = if y then flip$_4$ 0.4 else 
    flip$_5$ 0.5 in z
\end{lstlisting}
     \caption{Example \dice{} program.}
     \label{fig:cond_ex}
   \end{subfigure}~
   \begin{subfigure}[b]{0.45\linewidth}
     \centering
    \begin{tikzpicture}
      \def\lvl{20pt}
    \node (f1) at (0, 0) [bddnode] {$f_1$};
    \node[draw, rounded corners] at ($(f1) + (-23bp, 0bp)$)  {.471};

    \node (f2) at ($(f1) + (-20bp, -\lvl)$) [bddnode] {$f_2$};
    \node[draw, rounded corners] at ($(f2) + (-21bp, 0bp)$)  {.48};

    \node (f3) at ($(f1) + (20bp, -\lvl)$) [bddnode] {$f_3$};
    \node[draw, rounded corners] at ($(f3) + (21bp, 0bp)$)  {.47};

    \node (f4) at ($(f2) + (0bp, -\lvl)$) [bddnode] {$f_4$};
    \node[draw, rounded corners] at ($(f4) + (-21bp, 0bp)$)  {.4};

    \node (f5) at ($(f3) + (0bp, -\lvl)$) [bddnode] {$f_5$};
    \node[draw, rounded corners] at ($(f5) + (21bp, 0bp)$)  {.5};
    
    \node (true) at ($(f4) + (0bp, -\lvl)$) [bddterminal] {$\true$};
    \node[draw, rounded corners] at ($(true) + (-21bp, 0bp)$)  {1};

    \node (false) at ($(f5) + (0bp, -\lvl)$) [bddterminal] {$\false$};
    \node[draw, rounded corners] at ($(false) + (21bp, 0bp)$)  {0};

    \begin{scope}[on background layer]
      \draw [highedge] (f1) -- (f2);
      \draw [lowedge] (f1) -- (f3);
      \draw [highedge] (f2) -- (f4);
      \draw [lowedge] (f2) -- (f5);
      \draw [highedge] (f3) -- (f4);
      \draw [lowedge] (f3) -- (f5);

      \draw [highedge] (f4) -- (true);
      \draw [highedge] (f5) -- (true);
      \draw [lowedge] (f4) -- (false);
      \draw [lowedge] (f5) -- (false);
    \end{scope}
    \end{tikzpicture}
    \caption{
      Compiled BDD with weighted model counts.
    }
    \label{fig:cond_bdd}
  \end{subfigure}
  \caption{Illustration of compiling a \dice{} program that exploits
    factorization.}
  \label{fig:motiv_init}
\end{figure}

We begin with a simple motivating example that highlights the challenge of
performing inference efficiently and how \dice{} meets this challenge. Consider
the example \dice{} program in Figure~\ref{fig:cond_ex}. The syntax is standard
except for the introduction of a probabilistic expression $\Lflip{\theta}$,
which flips a coin that returns true with probability $\theta$ and false with
probability $1-\theta$. The subscript on each \texttt{flip} is not part of the
syntax but rather used to refer to them uniquely in our discussion.

The goal of probabilistic inference is to produce a program's output probability
distribution, so in Figure~\ref{fig:cond_ex} we desire the probability that {\tt
z} is true and the probability that {\tt z} is false. Consider computing the
probability that {\tt z} is true, which we denote $\Pr(\tz=\true)$. The most
straightforward way to compute this quantity is via \emph{path enumeration}: we
can consider all possible assignments to all \texttt{flip}s and sum the
probability of all assignments under which $\tz = \true$. A number of existing
PPLs directly implement path enumeration to perform
inference~\citep{Sankaranarayanan2013, Albarghouthi2017_1,
geldenhuys2012probabilistic, filieri2013reliability}. Concretely this would
involve computing the following sum of products:
\begin{align}
  \underbrace{0.1}_{\tx = \true} \cdot \underbrace{0.2}_{\ty = \true} \cdot \underbrace{0.4}_{\tz = \true} ~~+~~
  \underbrace{0.1}_{\tx = \true} \cdot \underbrace{0.8}_{\ty = \false} \cdot \underbrace{0.5}_{\tz = \true} ~~+~~
  \underbrace{0.9}_{\tx = \false} \cdot \underbrace{0.3}_{\ty = \true} \cdot \underbrace{0.4}_{\tz = \true} ~~+~~
  \underbrace{0.9}_{\tx = \false} \cdot \underbrace{0.7}_{\ty = \false} \cdot \underbrace{0.5}_{\tz = \true}
  \label{eq:exhaustive}
\end{align} %

In this work we focus on the problem of scaling inference, so we ask: how does
exhaustive enumeration scale as this program grows in size? In this case we grow
the program by adding one additional layer to the chain of \texttt{flip}s that
depends on the previous. With this growing pattern, the number of terms that a
path enumeration must explore grows exponentially in the number of layers, so
clearly exhaustive enumeration does not scale on this simple example. Despite
its apparent simplicity, many existing inference algorithms cannot scale to
large instances of this example; see Figure~\ref{fig:motiv_scale} in Section~\ref{sec:experiments}.
However, the sum in
Equation~\ref{eq:exhaustive} has redundant computation, and thus can be
factorized as
\begin{align}
  \underbrace{0.1}_{\tx = \true} \cdot
  \Big(\underbrace{0.2}_{\ty = \true} \cdot \underbrace{0.4}_{\tz = \true} +
  \underbrace{0.8}_{\ty = \false} \cdot \underbrace{0.5}_{\tz = \true} \Big) + 
  \underbrace{0.9}_{\tx = \false} \cdot
  \Big(\underbrace{0.3}_{\ty = \true} \cdot \underbrace{0.4}_{\tz = \true} +
  \underbrace{0.7}_{\ty = \false} \cdot \underbrace{0.5}_{\tz = \true} \Big).
  \label{eq:factor}
\end{align}
Such factorizations are abundant in this example, and in many others. \dice{}
exploits these factorizations to scale, and in Section~\ref{sec:experiments} we
show that \dice{} scales to orders of magnitude larger programs than existing
methods in part by exploiting these forms of factorization.
Such factorizations are extremely common in
probabilistic models, and finding and exploiting them is an essential strategy
for scaling exact inference algorithms, for example for graphical
models~\citep{Chavira2008, Darwiche09, koller2009probabilistic,
boutilier1996context, Pearl88b}.

\paragraph{Factorized inference in \dice{}} Inference in \dice{} is designed to
find and exploit factorizations like the one shown above. The key insight is to
separate the logical representation of the state space of the program from the
probabilities, which allows \dice{} to identify factorizations implied by the
structure of the program that are otherwise difficult to detect. This separation
is achieved by compiling each program to a {\em weighted Boolean formula}:
\begin{definition}
  \label{def:wbf}
Let $\varphi$ be a Boolean formula over variables $X$, let $L$ be the set of all
\emph{literals} (assignments to variables) over $X$, and $w : L \rightarrow
\mathbb{R}$ be a \emph{weight function} that associates a real-valued weight with each literal
$L$. The pair $(\varphi, w)$ is a \emph{weighted Boolean formula} (WBF).
\end{definition}

To compile the program in Figure~\ref{fig:cond_ex} into a WBF, we introduce one
Boolean variable $f_i$ for each expression
\texttt{flip$_i$~$\theta$} in the program.  
Our goal is for the resulting boolean formula over these variables to
represent all possible {\tt flip} valuations that cause {\tt z} to be true, so
one choice of WBF is $\varphi_{ex} = f_1f_2f_4 \lor f_1\bar{f_2}f_5 \lor
\bar{f_1}f_3f_4 \lor
\bar{f_1}\bar{f_3}f_5$.  Separately, the weight function represents the specific probabilities for each expression
\texttt{flip$_i$~$\theta$} from the program: the weight of $f_i$ is
$\theta$ if $f_i$ is true and $1-\theta$ otherwise. 

Once the program is associated with a WBF, we can perform probabilistic inference via a
{\em weighted model count} (WMC). Formally, for a formula $\varphi$ over
variables $X$, a sentence $\omega$ is a \emph{model} of $\varphi$ if it is a
conjunction of literals, contains every variable in $X$, and $\omega \models
\varphi$. We denote the set of all models of $\varphi$ as $\mods(\varphi)$.
The \emph{weight of a model}, denoted $w(\omega)$, is the product of the
weights of each literal $w(\omega) \defeq \prod_{l \in \omega} w(l)$.
Then, the following defines the WMC task:
\begin{definition}
  \label{def:wmc}
  Let $(\varphi, w)$ be a weighted Boolean formula. The \emph{weighted
model count} ($\wmc$) of $(\varphi, w)$ is the sum of the weights of each model,
$\wmc(\varphi, w) \triangleq
\sum_{\omega \in \mods(\varphi)} w(\omega)$.
\end{definition}

What has been achieved? So far, not much! The $\wmc$ task is known to be
\#P-hard for arbitrary Boolean formulas. Indeed, our formula $\varphi_{ex}$
above is isomorphic to the structure of Equation~\ref{eq:exhaustive}, so the WMC
calculation over it will be essentially equivalent. However, it has been
observed in the AI literature that certain representations of Boolean formulas
--- such as binary decision diagrams (BDDs) --- both exploit the structure of a
formula to minimize its representation and support \emph{linear time weighted
model counting}, and as such are useful compilation targets~\citep{Chavira2008,
Darwiche2002, Bryant86}. The field of compiling Boolean
formulas to representations that support tractable weighted model counting is
broadly known as \emph{knowledge compilation}, and \emph{inference via knowledge
compilation} is currently the state-of-the-art inference algorithm for certain
kinds of discrete Bayesian networks~\citep{Chavira2008} and probabilistic logic
programs~\citep{Fierens2015}.

\dice{} utilizes the insights of knowledge compilation to perform factorized
inference. First, the generated formula $\varphi$ in a compiled WBF is
represented as a BDD; Figure~\ref{fig:cond_bdd} shows the compiled BDD for the
program in Figure~\ref{fig:cond_ex}. A solid edge denotes the case where the
parent variable is true and a dotted edge denotes the case where the parent
variable is false. This BDD is logically equivalent to $\varphi_{ex}$ but the
BDD's construction process exploits the program's conditional independence to
efficiently produce a compact canonical representation. Specifically, there is a
single subtree for $f_4$, which is shared by both the path coming from $f_2$ and
the path coming from $f_3$, and similarly for $f_5$. These shared sub-trees are
induced by conditional independence: fixing \texttt{y} to the value true ---
and hence guaranteeing that a path to $f_4$ is taken in the BDD --- screens off
the effect of \texttt{x} on \texttt{z}, and hence reduces both the size of the
final BDD and the cost of constructing it. The BDD automatically finds and
exploits such factorization opportunities by caching and reusing repetitious
logical sub-functions.

\dice{} performs inference on the original probabilistic program via $\wmc$ once
the program is compiled to a BDD. Crucially, it does so without exhaustively
enumerating all paths or models. By virtue of the shared sub-functions, the BDD
in Figure~\ref{fig:cond_bdd} directly describes how to compute the WMC in the
factorized manner. Observe that each node is annotated with the weighted model
count, which is computed in linear time in a single bottom-up pass of the BDD.
For instance, the WMC at node $f_2$ is given by taking the weighted sum of the
WMC of its children, $0.2 \times 0.4 + 0.8 \times 0.5$. Finally, the sum taken
at the root of the BDD (the node $f_1$) is exactly the factorized sum in
Equation~\ref{eq:factor}.

\subsection{Leveraging Functional Abstraction}
\begin{figure*}
  \centering
  \begin{subfigure}[b]{0.2\linewidth}
    \centering
    \scalebox{0.7}{
    \begin{tikzpicture}[node distance=0.3cm]
      \node[] (init) {};
      \node[draw, circle, right=of init] (S1) {$S_1$};
      \node[right= of S1] (C) {};
      \node[draw, circle, above=of C] (S2) {$S_2$};
      \node[draw, circle, below=of C] (S3) {$S_3$};
      \node[draw, circle, right=of C] (S4) {$S_4$};
      \node[right=of S4] (final) {};

      \node[fit=(S1)(S2)(S3)(S4), draw, rounded corners] {};

      \draw[->] (init) -- (S1);
      \draw[->] (S1) -- (S2);
      \draw[->] (S1) -- (S3);
      \draw[->] (S2) -- (S4);
      \draw[->] (S3) -- (S4);
      \draw[->] (S4) -- (final);
    \end{tikzpicture}
    }
    \caption{Network diagram.}
    \label{fig:ex_net_a}
  \end{subfigure}
  ~
  \begin{subfigure}[b]{0.4\linewidth}
    \centering

\begin{lstlisting}[mathescape=true,
    basicstyle=\smaller\ttfamily,
]
fun diamond($s_1$:Bool):Bool {
 let route = flip$_1$ 0.5 in
 let $s_2$ = if route then $s_1$ else F in
 let $s_3$ = if route then F else $s_1$ in
 let drop = flip$_2$ 0.0001 in
 $s_2$ $\lor$ ($s_3$ $\land$ $\neg$drop)
}
let net1 = diamond(T) in
let net2 = diamond(net1) in 
diamond(net2)
\end{lstlisting}
    \caption{
      \dice{} program.
    }
    \label{fig:ex_net_b}
  \end{subfigure}~~
    \begin{subfigure}[b]{0.2\linewidth}
      \centering
    \scalebox{0.7}{
    \begin{tikzpicture}
      
    \def\lvl{16pt}
    \node (inc) at (0bp,0bp) [bddnode] {$s_1$};

    \node (f1) at ($(inc) + (-25bp, -\lvl)$) [bddnode] {$f_1$};
    \node (f2) at ($(f1) + (25bp, -\lvl)$) [bddnode] {$f_2$};

    \node (vt) at ($(f2) + (-25bp, -\lvl)$) [bddterminal] {$\true$};

    \node (vf) at ($(f2) + (25bp, -\lvl)$) [bddterminal] {$\false$};
    \begin{scope}[on background layer]
      \draw [highedge] (inc) -- (f1);
      \draw [lowedge] (inc) -- (vf);

      \draw [lowedge] (f1) -- (f2);
      \draw [highedge] (f1) -- (vt);

      \draw [highedge] (f2) -- (vf);
      \draw [lowedge] (f2) -- (vt);
    \end{scope}

  \end{tikzpicture}
  }
    \caption{
      \texttt{diamond} function.
    }
    \label{fig:ex_net_c}
  \end{subfigure}~
  \begin{subfigure}[b]{0.18\linewidth}
    \centering
    \scalebox{0.8}{
    \begin{tikzpicture}
      
    \def\lvl{18pt}
    \node (f11) at (0bp,0bp) [bddnode] {$f_1^1$};
    \node (f21) at ($(f11) + (25bp, -0.5*\lvl)$) [bddnode] {$f_2^1$};

    \node (f12) at ($(f11) + (-25bp, -2*\lvl)$) [bddnode] {$f_1^2$};
    \node (f22) at ($(f12) + (25bp, -0.5*\lvl)$) [bddnode] {$f_2^2$};

    \node (f13) at ($(f12) + (-25bp, -2*\lvl)$) [bddnode] {$f_1^3$};
    \node (f23) at ($(f13) + (25bp, -0.5*\lvl)$) [bddnode] {$f_2^3$};

    \node (vt) at ($(f23) + (-25bp, -\lvl)$) [bddterminal] {$\true$};

    \node (vf) at ($(f23) + (25bp, -\lvl)$) [bddterminal] {$\false$};

    \node[fit=(f11)(f21), rounded corners=1mm, draw] {};
    \node[fit=(f12)(f22), rounded corners=1mm, draw] {};
    \node[fit=(f13)(f23), rounded corners=1mm, draw] {};

    \begin{scope}[on background layer]
      \draw [lowedge] (f11) -- (f21);
      \draw [highedge] (f11) -- (f12);
      
      \draw [lowedge] (f12) -- (f22);
      \draw [highedge] (f12) -- (f13);

      \draw[lowedge] (f21) -- (f12);
      \draw[highedge] (f21) -- (vf);

      \draw [highedge] (f23) -- (vf);
      \draw [lowedge] (f23) -- (vt);

      \draw [lowedge] (f13) -- (f23);
      \draw [highedge] (f13) -- (vt);

      \draw [highedge] (f22) -- (vf);
      \draw [lowedge] (f22) -- (f13);
    \end{scope}

  \end{tikzpicture}
  }
    \caption{
      Final BDD.
    }
    \label{fig:ex_net_d}
  \end{subfigure}
 
  \caption{A sub-network, its description as a probabilistic program, a
    compiled function, and the final BDD.}
  \label{fig:ex_net}
\end{figure*}

The previous section highlights how \dice{} exploits factorization that comes
from conditional independences in the program. One common source of such
independences is functional abstraction: the behavior of a function call is
independent of the calling context, given the actual argument. \dice{} inference
as described above automatically exploits this structure as part of the BDD
construction. In addition, \dice{} exploits functional abstraction in an
orthogonal manner by modularly compiling a BDD for each function once and then
reusing this BDD at each call site, thereby amortizing the cost of the BDD
construction across all callers.

To illustrate the benefits of functional abstraction, we adapt an example from recent work in probabilistic verification of computer networks via probabilistic programs~\cite{gehr2018bayonet}. 
Figure~\ref{fig:ex_net_a} shows a ``diamond'' network that contains four
servers, labeled $S_i$. The network's behavior is naturally probabilistic, to
account for dynamics such as load balancing and congestion. In this case, server
$S_1$ forwards an incoming packet to either $S_2$ or $S_3$, each with
probability $50\%$. In turn, those servers forward packets received from $S_1$
to $S_4$, except that $S_3$ has a $0.1\%$ chance of dropping such a packet.
The {\tt diamond} function in Figure~\ref{fig:ex_net_b} defines the behavior of
this network as a probabilistic program in \dice{}. The argument boolean ${s_1}$
represents the existence of an incoming packet to $S_1$ from the left, and the
function returns a boolean indicating whether a packet was 
delivered to $S_4$.

As mentioned above,
\dice{} compiles functions modularly, so \dice{} first compiles the \texttt{diamond} function to a BDD, shown in Figure~\ref{fig:ex_net_c}.
The variable $s_1$ represents the unknown input to the function, and the $f_i$
variables represent the {\tt flip}s in the function body, as in our previous
example. Next \dice{} will create the BDD for the ``main'' expression in lines
8--10 of Figure~\ref{fig:ex_net_b}. During this process, the BDD for the {\tt
diamond} function is reused at each call site using standard BDD composition
operations like
conjunction (Section~\ref{sec:compilation} describes this in more detail).  
The final BDD for the program is shown in Figure~\ref{fig:ex_net_d}, where each variable $f_i^j$ represents the $i$th {\tt flip} in the $j$th call to {\tt diamond}.

The final BDD automatically identifies and exploits functional abstraction.  For example, the structure of the BDD makes it clear that the third call to {\tt diamond} depends only on the output of the second call to {\tt diamond}, rather than the particular execution path taken to produce that output.  As a result, even though there are three sub-networks, and therefore $2^6$ possible joint assignments 
to \texttt{flip}s, the BDD only has 8 nodes.  More generally, this BDD will grow linearly in the number of composed {\tt diamond} calls, though the number of possible executions grows exponentially.  Hence functional abstraction both produces smaller BDDs, which leads to faster WMC computation, and reduces BDD compilation time by compiling each function once.  We show in Section~\ref{sec:experiments} that these capabilities provide orders of magnitude speedups in inference.

\subsection{Bayesian Inference \& Observations}
\label{sec:motiv_bayes}
\begin{wrapfigure}{r}{0.5\textwidth}
   \centering
\begin{lstlisting}[mathescape=true]
fun EncryptChar(key:int, c:char):Bool {
 let randomChar = ChooseChar() in
 let ciphertext = (randomChar+key)%
 let fail = flip 0.0001 in
 if fail then true else 
   observe ciphertext == c
}
let k = UniformInt(0, 25) in 
let _ = EncryptChar(k, 'H') in 
$\cdots$ // encrypt $n$ total characters
in k
\end{lstlisting}
  \caption{A frequency analyzer for a noisy Caesar cipher.}
  \label{fig:caesar}
\end{wrapfigure}

Bayesian inference is a general and popular technique for reasoning about the
probability of events in the presence of {\em evidence}. \dice{}, similar to
other PPLs, supports Bayesian reasoning through an {\tt observe} expression.
Specifically, the expression ``{\tt observe~e}'' represents evidence (or an {\em
observation}) that {\tt e} is true; the expression always evaluates to true, but it has the side effect that executions on which {\tt e} is not true are defined to have 0 probability.

\dice{} supports first-class observations, including inside of functions. An
example is shown in Figure~\ref{fig:caesar}, which shows another rich class of discrete
probabilistic inference problems that come from \emph{text analysis}. For this
problem the goal is to decrypt a given
piece of ciphertext by inferring the most likely encryption key. We assume that
the plaintext was encrypted using a \emph{Caesar cipher}, which simply shifts
characters by a fixed but unknown constant, so the encryption key is an integer between 0
and 25 (e.g., with key 2, ``abc'' becomes ``cde'').

The task of decrypting encrypted ciphertext can be cast as a probabilistic
inference task by using \emph{frequency analysis}~\cite{katz1996handbook}. In
the English language each letter has a certain probability of being used: for
instance, the frequency of letter ``E'' is 12.02\%. In Figure~\ref{fig:caesar},
the function \texttt{EncryptChar} is a {\em generative model} for how each
letter in the ciphertext was created. The function takes as an argument the
encryption {\tt key} as well as a received ciphertext character {\tt c}. First a
plaintext character {\tt randomChar} is chosen according to its empirical
distribution (the {\tt ChooseChar} function is not shown but straightforward).
Then this character is encrypted with the given key and we {\tt observe}
that the ciphertext is the actual ciphertext character {\tt c} that we received.
To make the inference problem more challenging and realistic, we assume that
there is a chance that the encryptor mistakenly forgets to encrypt a character,
in which case we do not perform the observation.
Initially, the key (\texttt{k}) is assumed to be uniformly random (line 6).
After invoking \texttt{EncryptChar} once for each received ciphertext character (lines 7--8), the posterior
distribution on the key is returned.

The interaction of probabilistic inference with observations is subtle.
Observations have a non-local and ``backwards'' effect on the probability distribution, which must
be carefully preserved when performing inference.  In our example, the observation inside of {\tt EncryptChar} affects the posterior distribution of its argument key.  These non-local effects are
the bane of sampling-based inference algorithms: observations can impose complex
constraints --- such as the need in our example for \texttt{ChooseChar} to draw the right character --- that make it challenging for sampling algorithms to find sufficiently many
valid samples (we highlight this challenge in Section~\ref{sec:experiments}).

The WBF compilation strategy outlined in the previous section is inadequate for
capturing the semantics of the \texttt{EncryptChar} function: this function
always returns \texttt{true}, so its compiled BDD would be trivial. Clearly this
is incorrect, since the \texttt{EncryptChar} function has an additional,
implicit effect on the program, by making certain encryption keys more or less
likely to be the correct one. To handle observations, we augment our compilation
strategy to produce a second logical formula, which we call the \emph{accepting
formula} and denote $\obs$. The accepting formula represents all possible
assignments to {\tt flip}s that cause all {\tt observe}s in the program to be
satisfied. Together the formulas $\varphi$ and $\obs$ capture the meaning of the
program: we can compute the posterior distribution on \texttt{k} by computing
weighted model counts of the form
$\wmc(\varphi \land \obs,w ) / \wmc(\obs, w)$ for each value of \texttt{k}. Note that $\obs$ serves two
roles: it constrains $\varphi$ to only those executions that satisfy the
observations, and its weighted model count computes the normalizing constant for
the final probability distribution.

\section{The \dice{} Language}
\label{sec:language}

\dice{} is a first-order functional language augmented with 
constructs for probabilistic programming. This section describes the language
formally, providing its syntax and compositional semantics.

\subsection{Syntax}
The core syntax of
\dice{} is given in Figure~\ref{fig:grammar}. We enforce an A-normal
form via the usage of atomic expressions (\texttt{aexp})~\cite{DBLP:conf/pldi/FlanaganSDF93}, which simplifies the semantics and
compilation rules. A program is a sequence of functions followed by the "main"
expression. Each function is non-recursive and can only call functions that
precede it. The language supports booleans, tuples, and typical operations over
those types. In addition to this core syntax our \dice{} implementation supports
convenient syntactic sugar for logical operations ($\land$, $\lor$, and $\neg$), statically
bounded loops, bounded-size integers, and arbitrary function arity, as we describe in
Section~\ref{sec:extensions}. We utilize these extensions in our examples freely.

\dice{} supports two probabilistic expressions. First, the expression {\tt flip
$\theta$}, where $\theta$ is a real number between 0 and 1, denotes the
distribution that has the value true with probability $\theta$ and false with
probability $1-\theta$. Second, the expression {\tt observe~e} enables Bayesian
reasoning by incorporating {\em evidence}. Specifically, {\tt observe~e}
represents the {\em observation} that {\tt e} has the value true. Semantically,
executions on which {\tt e} does not have the value true are defined to have 0
probability, which has the effect of implicitly increasing the probabilities of
other executions. We define the expression {\tt observe e} to always evaluate to true.

\begin{figure}
  \centering
\begin{lstlisting}[mathescape=true]
$\tau$ ::= $\mathbf{Bool}$ | $\tau_1 \times \tau_2$
$v$ ::= T | F | ($v$, $v$)
aexp ::= $x$ | $v$
e ::=  aexp | fst aexp | snd aexp | (aexp, aexp) | let $x$ = e in e | flip $\theta$ 
       | if aexp then e else e | observe aexp | $f$(aexp)
func ::= fun $f$($x$:$\tau$)$:\tau$ { e }
p ::= e | func p
\end{lstlisting}
\caption{Syntax for the core \dice{} language.  The metavariable $f$ ranges over function names, $x$ over variable names, and $\theta$ over real numbers in the range $[0,1]$.}
  \label{fig:grammar}
\end{figure}

\subsection{Semantics}
\begin{figure}
  \centering
  \begin{mdframed}
  \begin{align*}
    \dbracket{v_1}(v) \defeq \big(\delta(v_1)\big)(v)
    \quad\quad \dbracket{\Lfst{(v_1, v_2)}}(v) \defeq \big(\delta(v_1)\big)(v)
    \quad\quad \dbracket{\Lsnd{(v_1, v_2)}}(v) \defeq \big(\delta(v_2)\big)(v)
  \end{align*}
  \begin{align*}
    \dbracket{\Lite{v_g}{\te_1}{\te_2}}(v) \defeq
                                         \begin{cases}
                                         \dbracket{\te_1}(v) ~& \text{if } v_g = \true\\  
                                         \dbracket{\te_2}(v) ~& \text{if } v_g = \false\\
                                         0 \quad& \text{otherwise}
                                         \end{cases} \quad\quad
    \dbracket{\Lflip{\theta}}(v) \defeq& \begin{cases}
                                    \theta ~& \text{if }v = \true\\
                                    1-\theta ~& \text{if }v=\false\\
                                    0 ~& \text{otherwise}
                                  \end{cases}
  \end{align*}
  \begin{align*}
    \dbracket{\Lobs{v_1}}(v) \defeq
                          \begin{cases}
                            1 ~& \text{if } v_1 = \true \text{ and } v = \true,\\
                            0 ~& \text{otherwise}\\
                          \end{cases} \qquad\quad
      \dbracket{f(v_1)}(v) \defeq& \Big(\big(T(f)\big)(v_1)\Big)(v)
  \end{align*}
  \begin{align*}
     \dbracket{\Llet{x = \te_1}{\te_2}}(v)\defeq
    \sum_{v'}\dbracket{\te_1}(v') \times \dbracket{\te_2[x \mapsto v']}(v) 
  \end{align*}
\end{mdframed} 
\caption{Semantics for \dice{} expressions. The function $\delta(v)$ is a
probability distribution that assigns a probability of 1 to the value $v$ and 0
to all other values. The implicit context $T$ maps function names to their
semantics. }
  \label{fig:semantics}
\end{figure}

The semantics of \dice{} programs is largely standard.  We overview the semantics and highlight its key aspects and design choices.
We begin with the semantics of \dice{} expressions, which are naturally
represented as a probability distribution on values. Formally, let $V$ be the
set of all \dice{} values. Then, a \emph{discrete probability distribution} on
$V$ is a function $\Pr : V \rightarrow [0,1]$ such that $\sum_{v \in V} \Pr(v) =
1$.

Figure~\ref{fig:semantics} provides the semantics for \dice{} expressions.
First we will discuss the semantics of expressions without function calls and observations,
which are deferred to
Sections~\ref{sec:semantics_func} and \ref{sec:semantics_obs} respectively.
The \emph{semantic function} $\dbracket{\cdot}$ maps expressions to unnormalized
probability distributions (i.e., distributions that do not necessarily sum to 1).
The semantics of values and tuple access are straightforward. For example, the
semantics of the expression {\tt fst~($\false$,$\true$)} is the probability
distribution that assigns probability 1 to $\false$ and 0 to all other values.
The semantics for conditionals follows from its usual semantics. In well-formed (i.e., closed)
programs the conditional guard $v_g$ is always a value, because the language uses A-normal form. Hence, the semantics of \texttt{if} selects either the \emph{then}-branch or
\emph{else}-branch's semantics depending on the value of $v_g$. For
completeness of the semantics, we define the semantics of \texttt{if} to be the
always-zero function if the argument is not a Boolean. 

The semantics
for flips produces the corresponding Bernoulli distribution. For example, the
expression {\tt flip 0.8} denotes the
distribution that assigns $\true$ the probability 0.8, $\false$ the probability 0.2, and all other values the probability 0.

The most interesting case in the semantics is for {\tt let}, as it shows the
path enumeration that is
required when sequencing probabilistic expressions. Consider the example:
\begin{align}
  \Llet{x = \Lflip{0.1}}{\Lflip{0.4} \lor x} \tag{\textsc{ExLet}}\label{eq:let}
\end{align}
To compute the probability that (\ref{eq:let}) results in some value $v$, we must consider all possible ways in which that value could result, based on all possible values $v'$ for $x$. Concretely, to
evaluate $\dbracket{\Llet{x = \Lflip{0.1}}{\Lflip{0.4} \lor x}}(\true)$, the
following sum is computed:
  $\dbracket{\Lflip{0.1}}(\true) \times \dbracket{\Lflip{0.4 \lor x[x \mapsto \true]}}(\true)
  + \dbracket{\Lflip{0.1}}(\false) \times \dbracket{\Lflip{0.4 \lor x[x \mapsto \false]}}(\true)
  = 0.1 \times 1.0 + 0.9 * 0.4 = 0.46$.

\subsubsection{Functions and Programs} 
\label{sec:semantics_func}
\dice{} supports non-recursive functions.
We generalize the semantics of expressions to functions in a natural way.  Specifically, the semantics of a function
{\tt f} is a {\em conditional probability distribution}, which is a function
from each value $v$ to a probability distribution for~{\tt f($v$)}.
 Formally, the semantics of a function $\dbracket{{\tt func}} : V
\rightarrow V \rightarrow [0,1]$ is defined as follows:
\begin{align}
  \dbracket{\texttt{fun}~f(x:\tau):\tau' \{\te\}}(v) \defeq
  \dbracket{\te[x \mapsto v]}
\end{align}

We can now give a semantics to function calls.  To do so, we extend the semantics judgment to include a {\em function table} $T$, which is a finite map from function names to their conditional probability distributions.  Formally our semantics judgment for expressions now has the form 
  $\dbracket{\te}^T : V \rightarrow [0,1]$, and similarly for the semantics of function definitions above, but we leave $T$ implicit when it is clear from the context.  Figure~\ref{fig:semantics} provides the semantics of a function call:  the function's conditional probability distribution is found in $T$, and the probability distribution associated with the actual argument $v$ is retrieved.

Finally, we define the semantics of programs $\dbracket{{\tt p}}^T : V
\rightarrow [0,1]$. Intuitively, each function is given a semantics in the
context of the prior functions, and then the semantics of the program is defined
as the semantics of the ``main'' expression. We formalize this semantics
inductively via the following two rules, where $\bullet$ denotes the empty
sequence and $\eta({\tt func})$ denotes the name of the function {\tt func}:
\begin{align}
  \dbracket{\bullet~ \te}^T \defeq
  \dbracket{\te}^T \qquad \qquad
  \dbracket{{\tt func}~{\tt p}}^T \defeq
   \dbracket{\tt p}^{T \cup \big\{\eta({\tt func}) \mapsto \dbracket{\tt func}^T \big\}}.
\end{align}

\subsubsection{Observations \& Bayesian Conditioning}
\label{sec:semantics_obs}
Observations complicate the goal of associating a probability distribution with
each program expression. Our semantics of {\tt observe} in
Figure~\ref{fig:semantics} follows prior work by assigning probability 0 to a
failed observation~\cite{borgstrom2011measure, Kozen1979, Claret2013, Huang2016,
nori2014r2}. Now consider the following example program:
\begin{align*}
  \texttt{let x = flip 0.6 in let y = flip 0.3 in let \_ = observe x $\lor $ y in x}
  \tag{ObsProg}\label{eq:obsprog}
\end{align*}
Because the {\tt observe} expression is falsified when both {\tt x} and {\tt y} are false, that scenario has probability 0.  Hence
according to our semantics
$\dbracket{\textsc{ObsProg}}(\true) = 0.6$ and $\dbracket{\textsc{ObsProg}}(\false) =
0.12$.  As a result the meaning of this program is not a valid probability distribution.

The standard approach to handling this issue is to treat the semantics as
producing an {\em unnormalized} distribution, which need not sum to 1 and which
is normalized at the very end to produce a valid probability distribution for
the entire program. Here we explore the subtle properties of this unnormalized
distribution, which will serve a crucial purpose later during our compilation
strategy.
Let $\dbracket{\te}_A$ denote the normalizing constant
and $\dbracket{\te}_D$ denote the normalized
distribution for an expression. These two quantities can be straightforwardly
computed from the unnormalized semantics in Figure~\ref{fig:semantics}:
\begin{align}
  \dbracket{\te}_A \defeq \sum_{v} \dbracket{\te}(v), \qquad\qquad
  \dbracket{\te}_D(v) \defeq \frac{1}{\dbracket{\te}_A} \dbracket{\te}(v).
\end{align}
For instance, in the above example $\dbracket{\textsc{ObsProg}}_A = 0.12+0.6=0.72$, 
$\dbracket{\textsc{ObsProg}}_D(\true) = 0.6 / 0.72 \approx 0.83$, and 
$\dbracket{\textsc{ObsProg}}_D(\false) = 0.12 / 0.72 \approx 0.17$. In the event
that $\dbracket{\te}_A = 0$, the distributional semantics is also defined to be
zero.

By construction, $\dbracket{\cdot}_D$ always yields a
probability distribution (or the always-zero function in the event that the
accepting semantics is zero), so we call it the \emph{distributional semantics}.
This is the quantity that we need in order to answer inference queries. What does $\dbracket{\cdot}_A$ represent? Typically it is not given a
meaning but rather simply considered to be an arbitrary normalizing constant that is only
computed for the entire program. And indeed, the normalizing constant is
irrelevant for the purposes of performing global inference: the probabilities in
the unnormalized semantics can be scaled arbitrarily without changing
$\dbracket{\cdot}_D$. This ``normalize at the end'' mode of operation is
standard for many PPLs that use an unnormalized semantics~\citep{Fierens2015,
Claret2013}.

However, when reasoning about partial programs, the distributional semantics
alone is not sufficient. For example, consider these two functions:
\begin{align}
&\texttt{fun f(x:Bool):Bool \{ let y = x $\vee$ flip(0.5) in let z = observe y in y \}}\\
&\texttt{fun g(x:Bool):Bool \{ true \}}
\end{align}
Because the observation in {\tt f} requires {\tt y} to be true, the two
functions have the identical distributional semantics: they both return true
with probability 1, regardless of the argument \texttt{x}. However, these two
functions are not equivalent! Specifically, the observation in \texttt{f} has
the effect of changing the probability distribution of the argument {\tt x} when
the function is called. Concretely,
\begin{align*}
&\dbracket{\texttt{let x = flip 0.1 in let obs = f(x) in x}}_D(\true) = 0.1/0.55 \\
&\dbracket{\texttt{let x = flip 0.1 in let obs = g(x) in x}}_D(\true) = 0.1
\end{align*}

The quantity $\dbracket{\cdot}_A$ carries exactly the information needed to
distinguish these functions. Specifically, $\dbracket{\te}_A$ represents the
probability that {\tt e} has an {\em accepting} execution, which satisfies all
observations, so we call it the \emph{accepting semantics}. In the above
example, $\dbracket{{\tt g}(\false)}_A = 1$ but $\dbracket{{\tt f}(\false)}_A =
0.5$: the function call {\tt f($\false$)} will succeed only half of the time.
This quantity allows us to precisely compute the effect of the observation on
any caller.

In summary, the semantics in Figure~\ref{fig:semantics} computes an unnormalized
distribution. However, since the normalizing constant is exactly the accepting
probability, the semantics has the effect of computing two key quantities on
each program fragment, both of which are necessary to characterize its meaning:
its normalized probability distribution and its probability of accepting.
Later this accepting semantics will be explicitly represented during compilation
as the accepting formula.

\section{Probabilistic Inference for \dice{}}
\label{sec:compilation}

This section formalizes our approach to probablistic inference in \dice{} via
reduction to {\em weighted model counting} (WMC). In this style, a probabilistic
model is compiled to a {\em weighted Boolean formula} (WBF) such that WMC
queries on the WBF exactly correspond to inference queries on the original
model. This approach has been successfully used to perform exact inference in
discrete Bayesian networks as well as probabilistic
databases and logic programs~\citep{Chavira2008, Fierens2015, VdBFTDB17}. However, to our knowledge it has not
been previously applied to a probablistic programming language with traditional
programming language constructs, functions, and first-class observations.

The bulk of this section formalizes our novel algorithm for compiling \dice{}
programs to WBF. We describe this compilation in stages: first on the Boolean
sub-language, then with the addition of tuples, and finally with the addition of
functions. We also state and prove a correctness theorem, which formally relates
WMC queries over a program's compiled WBF to the semantics from the previous
section. Finally we illustrate how we use BDDs to represent WBFs, which enables
the approach to automatically perform factorized inference.

\subsection{Compiling Boolean \dice{} Expressions}
\label{sec:comp_bool}
\begin{figure}
  \centering
  \begin{mdframed}
    \begin{align*}
     \frac{}
      {\true \rightsquigarrow (\true, \true, \emptyset)}~\textsc{(C-True)}
      \quad\qquad
      \frac{}
      {\false \rightsquigarrow (\false, \true, \emptyset)}~\textsc{(C-False)}
      \quad\qquad
      \frac{}
      { x \comp (\lx, \true, \emptyset)}~\textsc{(C-Ident)}
    \end{align*}
    \begin{align*}
      \frac{\text{fresh}~\lf}
      {\texttt{flip}~\theta \comp \Big(\lf, \true, (\lf \mapsto \theta, \true, \overline{\lf} \mapsto
      1-\theta)\Big)}~\textsc{(C-Flip)}
      \qquad\quad
      \frac{{\tt aexp} \comp (\varphi, \true, \emptyset)}
      {\Lobs{\tt aexp} \comp (\true, \varphi, \emptyset)}~\textsc{(C-Obs)}
    \end{align*}

  \infrule[C-Ite]{
    {\tt aexp} \comp (\varphi_g, \true, \emptyset) \andalso
    \te_T \comp (\varphi_T, \obs_T, w_T) \andalso
    \te_E \comp (\varphi_E, \obs_E, w_E)
  }{\Lite{\tt aexp}{\te_T}{\te_E} \comp
    \Big(\big((\varphi_g \land \varphi_{T}\big) \lor
    \big((\overline{\varphi}_g \land \varphi_{E}\big) ,
    \big((\varphi_g \land \obs_T\big) \lor
    \big((\overline{\varphi}_g \land \obs_E\big) ,
    w_T \cup w_E
    \Big)}
  
  \infrule[C-Let]{\te_1 \comp (\varphi_1, \obs_1, w_1) \andalso
    \te_2 \comp (\varphi_2, \obs_2, w_2)}
  {\Llet{x =\te_1}{\te_2}
    \comp 
 \big(\varphi_2[\lx \mapsto \varphi_1], \obs_1 \land \obs_2[\lx \mapsto \varphi_1],
    w_1 \cup w_2\big)}
\end{mdframed}
\caption{Compiling Boolean expressions to WBFs.}
\label{fig:comp_bool}
\end{figure}

The formal compilation judgment for Boolean \dice{} expressions has the form
$\te \rightsquigarrow (\varphi, \obs, w)$, where $\varphi$ and $\obs$ are
logical formulas and $w$ is a weight function (recall Definition~\ref{def:wbf}).
This judgment form will be extended later to accommodate other language
features. We call $\varphi$ the {\em unnormalized formula}: it represents all
possible assignments to variables and {\tt flip}s for which $\te$ evaluates to
true, ignoring observations. We call $\obs$ the {\em accepting formula}: it
represents all possible assignments to variables and {\tt flip}s that cause all
observations in $\te$ to succeed. Before showing the formal rules, we present
two examples to build intuition on the compilation to WBF and how it is used to
perform inference.

\begin{example}
  \label{ex:let}
The expression (\ref{eq:let}) from the previous section
compiles to the unnormalized formula $\varphi = f_1 \lor f_2$, where
$f_1$ and $f_2$ are Boolean variables associated with $\Lflip{0.1}$ and
$\Lflip{0.4}$ respectively.  Since there are no observations, $\obs = \true$ for this example.  The weight function $w$ assigns weights to the literals of $f_1$ and $f_2$ that
correspond with their probabilities in (\ref{eq:let}).  Then we have that $\dbracket{\ref{eq:let}}(\true) = \wmc(\varphi, w) = 0.46$ and $\dbracket{\ref{eq:let}}(\false) = \wmc(\overline{\varphi}, w) = 0.54$.
\end{example}

\begin{example}
The program (\ref{eq:obsprog}) from the previous section compiles to the unnormalized formula $\varphi = f_1$ and the accepting formula $\obs = f_1 \lor f_2$, where
$f_1$ corresponds with $\Lflip{0.6}$ and $f_2$ with $\Lflip{0.3}$. Hence the formula $\varphi \land \obs$ is true if and only if the
program evaluates to $\true$ and satisfies all observations, and similarly $\overline{\varphi}
\land \obs$ is true if and only if the program evaluates to $\false$ and satisfies all observations.  Then, with the appropriate weight function $w$, we can perform
Bayesian inference on (\ref{eq:obsprog}) via two weighted model counts:
$\dbracket{(\textsc{\ref{eq:obsprog}})}_D(\true) = \wmc(\varphi \land \obs, w) /
\wmc(\obs, w) \approx 0.83$ and $\dbracket{(\textsc{\ref{eq:obsprog}})}_D(\false) = \wmc(\overline{\varphi} \land \obs, w) /
\wmc(\obs, w) \approx 0.17$.
\end{example}

The formal compilation rules are shown in Figure~\ref{fig:comp_bool}. 
The above examples show how \emph{closed} programs are compiled, but expressions can also have free variables
in them.  The rule \textsc{C-Ident} handles a free variable $x$ simply by introducing a corresponding Boolean variable
$\lx$.   To illustrate the rule
\textsc{C-Flip}, $\Lflip{0.4} \rightsquigarrow (f,
\true, w)$ where $w$ maps $f$ to $0.4$ and $\bar{f}$ to $0.6$, and $f$ is a
fresh Boolean variable. Hence $\wmc(f \land \true, w) = 0.4 =
\dbracket{\Lflip{0.4}}(\true)$ and $\wmc(\bar{f}, w) = 0.6 =
\dbracket{\Lflip{0.4}}(\false)$.

The rule \textsc{C-Obs} handles \texttt{observe}s. Since an expression's
unnormalized formula ignores observations, the unnormalized formula for
$\Lobs{\tt aexp}$ is simply $\true$. The metavariable {\tt aexp} ranges over
values and identifiers and hence compiles to an accepting formula of $\true$ and
an empty weight function ({\tt aexp} stands for \emph {atomic expression}). Finally, the unnormalized formula of {\tt aexp}
becomes the accepting formula of $\Lobs{\tt aexp}$, in order to capture all ways
that the observation is satisfied.

The rule \textsc{C-Ite} encodes the usual logical semantics of conditionals.
Finally, the \textsc{C-Let} rule shows how to represent expression sequencing.
The {\em logical substitution} $\varphi_1[\lx \mapsto \varphi_2]$ replaces all
occurrences of $\lx$ in $\varphi_1$ with the formula $\varphi_2$. For the
accepting formula, the expression $\Llet{x=\te_1}{\te_2}$ only accepts if both
expressions accept, so we simply conjoin their accepting formulas. To illustrate
the rule, we show the derivation through the rules for our example
(\ref{eq:let}), assuming the obvious rule for compiling logical disjunction
(which is syntactic sugar for a conditional expression):

\infrule[ExLetCompilation]{
  \dfrac{\text{fresh}~f_1}{\Lflip{0.1} \rightsquigarrow (f_1, \true, w_1)}
  \andalso
  \dfrac{
    \dfrac{}{x \comp (\lx, \true, \emptyset)}
    \quad
    \dfrac{\text{fresh}~f_2}{\Lflip{0.4} \comp (f_2, \true, w_2)}
  }{\Lflip{0.4} \lor x \rightsquigarrow (f_2 \lor \lx, \true, w_2)}
}
{\Llet{x = \Lflip{0.1}}{\Lflip{0.4} \lor x} \comp (f_2 \lor \lx[\lx \mapsto
  f_1], \true, w_1 \cup w_2)}
This compilation matches Example~\ref{ex:let} above and shows how logical
substitution captures expression sequencing. The union of two weight functions,
denoted $w_1 \cup w_2$, is simply the union of the two maps $w_1$ and $w_2$;
this is well-defined because no two subexpressions can share \texttt{flip}s, so
there can be no conflicts.

The statement of correctness for Boolean \dice{} expressions connects our compilation rules to the formal semantics from the previous section:

\begin{lemma}[Boolean Expression Correctness]
  \label{thm:bool_correct}
  Let $\te$ be a Boolean \dice{} expression with free variables $x_1,\ldots,x_n$ and suppose $\te \rightsquigarrow (\varphi, \obs, w)$.  Then for any Boolean values $v_1,\ldots,v_n$:
\begin{itemize}[leftmargin=*]
\item $\dbracket{\te[x_i \mapsto v_i]}_A
= \wmc(\obs[\lx_i \mapsto v_i], w)$

\item for any Boolean value $v$, $\dbracket{\te[x_i \mapsto v_i]}_D(v)
= \frac{\wmc(((\varphi \Leftrightarrow v) \land \obs)[\lx_i \mapsto v_i], w)}{\wmc(\obs[\lx_i \mapsto v_i], w)}$.
\end{itemize}
\end{lemma}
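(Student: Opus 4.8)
The plan is to prove the lemma by structural induction on the Boolean \dice{} expression $\te$, equivalently on the derivation of $\te \rightsquigarrow (\varphi,\obs,w)$ since the rules of Figure~\ref{fig:comp_bool} are syntax-directed. Rather than establish the two bullets directly, I would first prove a single strengthened claim at the level of the \emph{unnormalized} semantics of Figure~\ref{fig:semantics}, writing $\varphi \Leftrightarrow \true$ for $\varphi$ and $\varphi \Leftrightarrow \false$ for $\overline{\varphi}$: for every Boolean value $v$,
\[
  \dbracket{\te[x_i \mapsto v_i]}(v) \;=\; \wmc\bigl(((\varphi \Leftrightarrow v) \land \obs)[\lx_i \mapsto v_i],\; w\bigr).
\]
Both bullets follow from this. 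Since in the Boolean sublanguage every expression denotes a distribution supported on $\{\true,\false\}$, summing the claim over $v\in\{\true,\false\}$ gives
\[
\dbracket{\te[x_i\mapsto v_i]}_A = \wmc\bigl((\varphi\land\obs)[\lx_i\mapsto v_i],w\bigr) + \wmc\bigl((\overline{\varphi}\land\obs)[\lx_i\mapsto v_i],w\bigr) = \wmc\bigl(\obs[\lx_i\mapsto v_i],w\bigr),
\]
where the last equality holds because, over the common variable set of $\varphi$ and $\obs$, the models of $\varphi\land\obs$ and of $\overline{\varphi}\land\obs$ partition those of $\obs$; this is the first bullet, and dividing the claim by it yields the second. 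The one wrinkle in this step is that $\varphi$ and $\obs$ need not mention the same \texttt{flip} variables, so I would record up front the \emph{smoothness} fact that, because every \texttt{flip} variable $\lf$ has $w(\lf)+w(\overline{\lf}) = \theta+(1-\theta) = 1$, adjoining to a formula variables that do not occur in it leaves its weighted model count unchanged.

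The induction then rests on one non-routine ingredient, a \emph{WMC substitution lemma}: if $\psi$ is a formula over $\{\lx\}\cup F_2$ and $\varphi_1$ is a formula over a variable set $F_1$ disjoint from $F_2$, then
\[
\begin{aligned}
&\wmc\bigl(\psi[\lx \mapsto \varphi_1] \land \obs_1,\; w_1 \cup w_2\bigr)\\
&\qquad= \sum_{v' \in \{\true,\false\}} \wmc\bigl((\varphi_1 \Leftrightarrow v') \land \obs_1,\; w_1\bigr)\cdot \wmc\bigl(\psi[\lx\mapsto v'],\; w_2\bigr).
\end{aligned}
\]
I would prove this by splitting each model $\omega$ of the left-hand formula into its restrictions $\omega_1$ to $F_1$ and $\omega_2$ to $F_2$, observing that $\omega\models\psi[\lx\mapsto\varphi_1]$ exactly when $\omega_2$ together with the binding $\lx\mapsto(\omega_1\models\varphi_1)$ satisfies $\psi$, and then regrouping the sum according to the truth value of $\omega_1\models\varphi_1$. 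The hypothesis $F_1\cap F_2=\emptyset$ --- which is exactly the side condition in Figure~\ref{fig:comp_bool} that subexpressions share no \texttt{flip}s, so that $w_1\cup w_2$ is a disjoint union of maps --- is what makes the inner factorization legitimate, and smoothness lets me pass between $w_1\cup w_2$ and its summands.

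With these preliminaries the cases are short. \textsc{C-True}, \textsc{C-False}, \textsc{C-Ident}, \textsc{C-Flip}, and \textsc{C-Obs} are immediate unfoldings of the semantics against the one-model or empty-model WBFs they produce; in \textsc{C-Ident} and \textsc{C-Obs} the substitution $[\lx_i\mapsto v_i]$ collapses the relevant formula to a Boolean constant, matching the point-mass semantics of a value. In \textsc{C-Ite} the guard is a value or a free variable in a well-formed program, so after the substitution $\varphi_g$ becomes a Boolean constant, the compiled $\varphi$ and $\obs$ reduce to the formulas of the selected branch, and the claim follows from that branch's induction hypothesis (smoothness absorbing the unused half of $w_T\cup w_E$). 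The crux, and the step I expect to be the main obstacle, is \textsc{C-Let}: here $\dbracket{\Llet{x=\te_1}{\te_2}}(v)=\sum_{v'}\dbracket{\te_1}(v')\times\dbracket{\te_2[x\mapsto v']}(v)$ by the \texttt{let} semantics, the induction hypotheses on $\te_1$ and $\te_2$ rewrite the two factors, and the WMC substitution lemma --- applied with $\psi=(\varphi_2\Leftrightarrow v)\land\obs_2$ and $\varphi_1$ the unnormalized formula of $\te_1$ --- turns the resulting sum into $\wmc\bigl(((\varphi_2[\lx\mapsto\varphi_1]\Leftrightarrow v)\land\obs_1\land\obs_2[\lx\mapsto\varphi_1])[\lx_i\mapsto v_i],w_1\cup w_2\bigr)$, which is exactly the right-hand side of the claim for $\Llet{x=\te_1}{\te_2}$. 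The real effort in this case is the substitution bookkeeping: commuting the free-variable substitution $[\lx_i\mapsto v_i]$ past the \texttt{let}-substitution $[\lx\mapsto\varphi_1]$, and keeping track of which \texttt{flip} variables belong to $w_1$ versus $w_2$ so that the disjointness hypothesis of the substitution lemma is actually met at each use.
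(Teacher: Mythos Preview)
Your proposal is correct and follows essentially the same route as the paper: it too reduces both bullets to the single unnormalized claim $\dbracket{\te[x_i\mapsto v_i]}(v)=\wmc(((\varphi\Leftrightarrow v)\land\obs)[\lx_i\mapsto v_i],w)$ (stated there as Lemma~\ref{lem:bool_cor}), proves that by structural induction, and recovers the accepting-probability bullet by summing over $v$ and invoking mutual exclusion. The only organizational difference is that where you package the \textsc{C-Let} argument into a standalone ``WMC substitution lemma,'' the paper instead applies its Independent Conjunction lemma and the mutual-exclusion case of inclusion--exclusion inline, then simplifies the resulting disjunction $\bigvee_{v'}(\varphi_1\Leftrightarrow v')\land\cdots$ directly to the substituted form; your lemma is just these two steps fused, so the content is the same.
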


As in the earlier definition of the distributional semantics, in the event that a division by zero occurs in the above lemma, the
result is defined to be zero. This lemma implies that we can answer inference queries on the
original expression via two WMC queries on the compiled WBF. The following key
lemma directly implies the one above; a stronger version of it is proven in
Appendix~\ref{sec:pf_expr_correct}:

\begin{lemma}
  \label{lem:bool_cor}
  Let $\te$ be a Boolean \dice{} expression with free variables $x_1,\ldots,x_n$
  and suppose $\te \rightsquigarrow (\varphi, \obs, w)$.  Then for any Boolean
  values $v_1,\ldots,v_n$ and Boolean value $v$, 
$$\dbracket{\te[x_i \mapsto v_i]}(v)
= \wmc(((\varphi \Leftrightarrow v) \land \obs)[\lx_i \mapsto v_i], w).$$
\end{lemma}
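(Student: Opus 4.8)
The plan is to prove Lemma~\ref{lem:bool_cor} by structural induction on the Boolean expression $\te$, running the compilation rules of Figure~\ref{fig:comp_bool} in lockstep with the denotational semantics of Figure~\ref{fig:semantics}. Write $\sigma = [x_i \mapsto v_i]$ for the grounding substitution, and assume harmlessly (by $\alpha$-renaming) that any \texttt{let}-bound variable is distinct from the $x_i$, so that $\sigma$ commutes with the inner substitutions. Abbreviate $\psi_v \defeq (\varphi \Leftrightarrow v) \land \obs$, so the goal reads $\dbracket{\te\sigma}(v) = \wmc(\psi_v\sigma, w)$. Throughout I maintain the auxiliary invariant that $\dbracket{\te\sigma}(v') = 0$ whenever $v'$ is not a Boolean value (an immediate consequence of type soundness, or of a trivial parallel induction); this is what lets every sum over values collapse to a sum over $\{\true,\false\}$.

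\paragraph{The easy cases} The rules \textsc{C-True}, \textsc{C-False}, \textsc{C-Ident}, and \textsc{C-Flip} are settled by direct computation: after grounding, $\psi_v\sigma$ is either a Boolean constant or a single literal $\lf$/$\overline{\lf}$, so $\mods(\cdot)$ and $w$ are small enough to evaluate $\wmc$ by hand and check it equals $\delta(\cdot)$ or the Bernoulli value. For \textsc{C-Obs}, $\tt aexp$ is atomic so its compiled formula is a constant or a variable; the factor $(\true \Leftrightarrow v)$ annihilates the $v = \false$ case, and $\wmc(\varphi\sigma,\emptyset) \in \{0,1\}$ records exactly whether the observed value is $\true$, matching $\dbracket{\Lobs{\cdot}}$. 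For \textsc{C-Ite}, A-normal form makes the guard a value and well-typedness makes it Boolean, so $\varphi_g\sigma$ is a constant $b \in \{\true,\false\}$; then $\varphi\sigma$ and $\obs\sigma$ collapse to the then/else branch's formulas just as the \texttt{if} clause of Figure~\ref{fig:semantics} collapses to that branch's semantics, and the inductive hypothesis for that branch closes the case --- modulo one bookkeeping step about weight functions, described next.

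\paragraph{Two auxiliary facts} I would isolate two small lemmas about $\wmc$. (i) \emph{Irrelevant normalized variables are inert}: if a formula $F$ mentions no variable in $\dom(w')$ and $w'$ assigns $w'(\lf) + w'(\overline{\lf}) = 1$ to each such variable --- which holds for every weight function produced by compilation, since each \texttt{flip} contributes weights $\theta$ and $1-\theta$ --- then $\wmc(F, w \cup w') = \wmc(F, w)$, because summing a model out over an unused $\lf$ multiplies its weight by $1$. This reconciles the branch formulas in \textsc{C-Ite} (which mention only one branch's \texttt{flip}s) with the combined weight function $w_T \cup w_E$, and more generally lets us be cavalier about whether a $\wmc$ ranges over exactly the variables of its formula or over all \texttt{flip}s in scope. (ii) \emph{A substitution/decomposition law}: if $\psi$ is a formula over $V_2 \uplus \{\lx\}$ (plus possibly the free $\lx_i$), $\varphi_1$ and $\obs_1$ are formulas over a disjoint set $V_1$, and $\dom(w_1)$, $\dom(w_2)$ are the literals of $V_1$, $V_2$, then
\[
  \wmc\!\big((\psi[\lx \mapsto \varphi_1] \land \obs_1)\sigma,\; w_1 \cup w_2\big)
  \;=\; \sum_{b \in \{\true,\false\}} \wmc\!\big(((\varphi_1 \Leftrightarrow b) \land \obs_1)\sigma,\; w_1\big) \cdot \wmc\!\big((\psi[\lx \mapsto b])\sigma,\; w_2\big).
\]
This is proved by writing each model $\omega$ over $V_1 \uplus V_2$ as $\omega_1 \cup \omega_2$: $\omega_1$ pins the truth value $b$ of $\varphi_1\sigma$; $\omega \models \psi[\lx \mapsto \varphi_1]\sigma$ iff $\omega_2 \models (\psi[\lx \mapsto b])\sigma$ (replace each copy of $\varphi_1$ by $b$); $\omega \models \obs_1\sigma$ iff $\omega_1 \models \obs_1\sigma$; and $w(\omega) = w_1(\omega_1)\,w_2(\omega_2)$ since the domains are disjoint. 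Regrouping the resulting triple sum by $b$ yields the identity. The disjointness hypothesis is exactly the "no two subexpressions share \texttt{flip}s" invariant noted after \textsc{C-Let}, which also makes $w = w_1 \cup w_2$ a genuine disjoint union.

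\paragraph{The \textsc{C-Let} case, which is the crux} This is the step I expect to be the main obstacle, and (ii) is tailored to it. The compiled formulas are $\varphi = \varphi_2[\lx \mapsto \varphi_1]$ and $\obs = \obs_1 \land \obs_2[\lx \mapsto \varphi_1]$, and since $\lx$ does not occur in the constant $v$ we have $(\varphi_2[\lx \mapsto \varphi_1]) \Leftrightarrow v = (\varphi_2 \Leftrightarrow v)[\lx \mapsto \varphi_1]$; taking $\psi = (\varphi_2 \Leftrightarrow v) \land \obs_2$, the target $\wmc(\psi_v\sigma, w_1 \cup w_2)$ is precisely the left-hand side of (ii). Apply (ii); then apply the inductive hypothesis to $\te_1$ (free variables and values $x_i, v_i$) to rewrite the first factor as $\dbracket{\te_1\sigma}(b)$, and the inductive hypothesis to $\te_2$ (free variables and values $x_i,v_i$ together with $x, b$) to rewrite the second factor as $\dbracket{\te_2(\sigma[x \mapsto b])}(v)$. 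The right-hand side of (ii) has now become $\sum_{b \in \{\true,\false\}} \dbracket{\te_1\sigma}(b) \cdot \dbracket{\te_2\sigma[x \mapsto b]}(v)$, which by the non-Boolean-is-zero invariant equals $\sum_{v'} \dbracket{\te_1\sigma}(v') \cdot \dbracket{\te_2\sigma[x \mapsto v']}(v)$, that is, exactly $\dbracket{(\Llet{x = \te_1}{\te_2})\sigma}(v)$ by the \texttt{let} clause of Figure~\ref{fig:semantics}. This closes the induction.
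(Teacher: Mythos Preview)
Your proposal is correct and follows essentially the same route as the paper: structural induction on $\te$, with the \textsc{C-Let} case as the crux, discharged by splitting models according to the truth value of $\varphi_1$. Your auxiliary fact~(ii) packages into one statement what the paper unbundles into its Independent Conjunction lemma, Mutual Exclusion (inclusion--exclusion), and the logical identity $\bigvee_b\,(\varphi_1 \Leftrightarrow b) \land \psi[\lx \mapsto b] \equiv \psi[\lx \mapsto \varphi_1]$; and your fact~(i) makes explicit a weight-irrelevance step the paper leaves implicit.
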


\subsection{Tuples \& Typed Compilation}
\label{sec:comp_tup}

Next we extend our compilation rules to support arbitrarily nested tuples. The
primary purpose of tuples is to empower \dice{} functions by enabling multiple
arguments and return values. Intuitively, this involves generalizing the
compilation target from a single Boolean formula $\varphi$ to tuples of Boolean
formulas.  Formally, this
extension requires that we generalize the compilation judgment, which now has
the following form:
\begin{align*}
\Gamma \vdash \te : \tau \comp (\xpointphi, \obs, w).
\end{align*}
First, our compilation is now {\em typed}: $\Gamma$ is the usual type
environment for free variables and $\tau$ is the type of $\te$. The types are
necessary to determine how to properly encode program variables in the compiled
logical formulas. Second, compilation produces a collection of Boolean formulas,
one per occurrence of the type $\mathbf{Bool}$ in $\tau$. The new metavariable
$\xpointphi$ is defined inductively as either a Boolean formula $\varphi$ or a
pair of the form $(\xpointphi_1, \xpointphi_2)$.

As a concrete example of compiling a program that contains tuples:
\begin{align*}
   \{\} \vdash \Llet{x=\Lflip{0.2}}{(x, \true)} : \bool \times \bool \rightsquigarrow \Big((f_1, \true), \true, [f_1 \mapsto 0.2, \bar{f}_1 \mapsto 0.8]\Big).
\end{align*}
Here, the resulting compiled formula $\xpointphi$ is a pair of Boolean formulas
$(f_1, \true)$. %

\begin{figure}
  \centering
  \begin{mdframed}
 \begin{align*}
 \frac{\Gamma(x) = \tau}
      {\Gamma \vdash x : \tau \comp (\form_\tau(x), \true, \emptyset)}~\textsc{(C-Ident)} \quad
  \frac{\Gamma(x_1) = \tau_1 \quad \Gamma(x_2) = \tau_2}
   {\Gamma \vdash (x_1, x_2) : \tau_1 \times \tau_2 \comp
   ((\form_{\tau_1}(x_1), \form_{\tau_2}(x_2)), \true, \emptyset)}~\textsc{(C-Tup)}
 \end{align*}
 \begin{align*}
 \frac{\Gamma(x) = \tau_1 \times \tau_2}
  {\Gamma \vdash \Lfst{x} : \tau_1 \rightsquigarrow (\form_{\tau_1}(x_l), \true,\emptyset)}~\textsc{(C-Fst)}
  \quad
 \frac{\Gamma(x) = \tau_1 \times \tau_2}
  {\Gamma \vdash \Lsnd{x} : \tau_2 \rightsquigarrow (\form_{\tau_2}(x_r), \true,\emptyset)}~\textsc{(C-Snd)}
\end{align*}
    \infrule[C-Ite]{
      \Gamma \vdash {\tt aexp} : \bool \comp (\varphi_g, \true, \emptyset) \andalso
    \Gamma \vdash \te_T : \tau \comp (\xpointphi_T, \obs_T, w_T) \andalso
    \Gamma \vdash \te_E : \tau \comp (\xpointphi_E, \obs_E, w_E)
  }{\Gamma \vdash \Lite{\tt aexp}{\te_T}{\te_E} : \tau \comp 
    \Big(\big((\varphi_g \xbroadand{\tau} \xpointphi_{T}\big) \xpointor{\tau}
    \big((\overline{\varphi}_g \xbroadand{\tau} \xpointphi_{E}\big),
    \big((\varphi_g \land \obs_T\big) \lor
    \big((\overline{\varphi}_g \land \obs_E\big) ,
    w_T \cup w_E
    \Big)
  }

  \infrule[C-Let]{\Gamma \vdash \te_1 : \tau_1 \comp (\xpointphi_1, \obs_1, w_1) \andalso
    \Gamma \cup \{x : \tau_1\} \vdash \te_2 : \tau_2 \comp (\xpointphi_2, \obs_2, w_2)}
  {\Gamma \vdash \Llet{x:\tau_1 =\te_1}{\te_2} : \tau_2
    \comp 
 \big(\xpointphi_2[\lx \xmapsto{\tau} \xpointphi_1], \obs_1 \land \obs_2[\lx \xmapsto{\tau} \xpointphi_1],
    w_1 \cup w_2\big)}
\end{mdframed}
\caption{Typed compilation for tuples. These assume, without loss of generality
but for simplicity, that \texttt{fst}, \texttt{snd}, and tuple construction are
only ever performed with identifiers as arguments.}
\label{fig:comp_typed}
\end{figure}

Figure~\ref{fig:comp_typed} shows the new rules for compiling tuples and also
presents updated versions of the rules from Figure~\ref{fig:comp_bool}, other
than the Boolean-specific rules.
The extended compilation for tuples is structurally very similar to Boolean
compilation, but requires generalizing the Boolean operations in a natural way
to accommodate tuples (Appendix~\ref{app:notation} summarizes this new notation).
The new version of \textsc{C-Ident} uses the
\emph{form function}
$\form_\tau(x)$, which constructs the logical representation of a variable $x$ based on its type $\tau$.  It is defined inductively as $\form_\bool(x) \defeq \lx$ and $\form_{\tau_1
\times \tau_2}(x) \defeq (\form_{\tau_1}(x_l), \form_{\tau_2}(x_r))$. Note the
subscripts $x_l$ and $x_r$ that lexically distinguish the left and right
elements. This function also allows us to naturally define the compilation for
tuple creation as well as {\tt fst} and {\tt snd} in Figure~\ref{fig:comp_typed}.

The \textsc{C-Ite} rule shows how we generalize the compilation of conditionals to accommodate tuples.  The rule requires that we conjoin a Boolean expression $\varphi_g$
(the compiled guard) with a potential tuple of formulas (the compiled then and else
branches). To do this, we define \emph{broadcasted conjunction}, denoted
$\varphi_g \xbroadand{\tau} \xpointphi$, as conjoining $\varphi_g$ with all the
Boolean expressions in the tuple $\xpointphi$. Formally, we define it as $\varphi_a
\xbroadand{\bool} \varphi_b \defeq \varphi_a \land \varphi_b$ and $\varphi_a
\xbroadand{\tau_1 \times \tau_2} (\xpointphi_{b1}, \xpointphi_{b2}) \defeq \big(\varphi_a
\xbroadand{\tau_1} \xpointphi_{b1},~ \varphi_a \xbroadand{\tau_2} \xpointphi_{b2}\big)$.
In addition to broadcasted conjunction, \textsc{C-Ite} also requires
\emph{point-wise disjunction}, denoted $\xpointphi_1 \xpointor{\tau} \xpointphi_2$.
Point-wise disjunction is defined inductively as 
$\varphi_1 \xpointor{\bool} \varphi_2 \defeq \varphi_1 \lor \varphi_2$ and
 $(\xpointphi_{11}, \xpointphi_{12}) \xpointor{\tau_1 \times \tau_2}
      (\xpointphi_{21}, \xpointphi_{22}) \defeq (\xpointphi_{11} \xpointor{\tau_1}
      \xpointphi_{21}, \xpointphi_{12} \xpointor{\tau_2} \xpointphi_{22})$.

Finally, to generalize the compilation of {\tt let} expressions, in the
\textsc{C-Let} rule we employ \emph{typed substitution} $\xpointphi_2[\lx
\xmapsto{\tau_1} \xpointphi_1]$ to substitute the compiled version of $\te_1$ into
the compiled version of $\te_2$. We define typed substitution inductively as
follows:
\begin{align*}
  \varphi_2[\lx \xmapsto{\bool} \varphi_1] \defeq \varphi_2[\lx \mapsto
\varphi_1],
\quad
\varphi_2[\lx \xmapsto{\tau_{a} \times \tau_{b}} (\xpointphi_a,
\xpointphi_b)] \defeq \varphi_2[\lx_l \xmapsto{\tau_{a}} \xpointphi_a][\lx_r \xmapsto{\tau_{b}}
\xpointphi_b],
\end{align*}
\begin{align*}
(\xpointphi_1, \xpointphi_2)[\lx \xmapsto{\tau} \xpointphi] \defeq 
(\xpointphi_1[\lx \xmapsto{\tau} \xpointphi], \xpointphi_2[\lx \xmapsto{\tau} \xpointphi]).
\end{align*}

We can state and prove a natural generalization of our key lemma from the previous subsection, Lemma~\ref{lem:bool_cor}.  The lemma
depends on \emph{pointwise iff}, denoted $\xpointphi_1 \xLeftrightarrow{\tau}
\xpointphi_2$ and defined inductively as follows:  $\varphi_1 \xLeftrightarrow{\bool} \varphi_2
\defeq \varphi_1 \Leftrightarrow \varphi_2$ and ${(\xpointphi_1,
\xpointphi_2)}\xLeftrightarrow{\tau_1 \times \tau_2}{(\xpointphi'_1, \xpointphi'_2)} \defeq \left(
{\xpointphi_1}\xLeftrightarrow{\tau_1}{\xpointphi_1'} \right) \land \left(
{\xpointphi_2}\xLeftrightarrow{\tau_2}{\xpointphi_2'} \right)$.
Then, the following correctness lemma is proved in Appendix~\ref{sec:pf_expr_correct}:
\begin{lemma}[Typed Correctness Without Functions]
  \label{lem:correct_no_proc}
 Let $\te$ be a \dice{} expression without function calls, and suppose $\{x_i :
\tau_i\} \vdash \te : \tau \rightsquigarrow (\xpointphi, \obs, w)$. Then for any
values $\{v_i : \tau_i\}$ and $v : \tau$, we have that $\dbracket{\te[x_i
\mapsto v_i]}(v) = \wmc\left(\big(({\xpointphi}\xLeftrightarrow{\tau} {v}) \land
\obs\big)[\lx_i \xmapsto{\tau_i} v_i], w\right)$.
\end{lemma}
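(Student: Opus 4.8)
The plan is to proceed by structural induction on the expression $\te$, generalizing the single-Boolean argument of Lemma~\ref{lem:bool_cor} to handle tuple-valued formulas via the pointwise-iff operator. The cases for values, identifiers, tuple construction, \texttt{fst}, and \texttt{snd} are immediate from the semantics in Figure~\ref{fig:semantics} and the \textsc{C-*} rules: each produces $\obs = \true$, an empty weight function, and a compiled $\xpointphi$ that (after applying the form function) is the literal encoding of the relevant value, so the WMC reduces to checking $\xpointphi \xLeftrightarrow{\tau} v$, which evaluates to $1$ on the correct value and $0$ otherwise — matching $\delta$. The \texttt{flip} and \texttt{observe} cases carry over essentially verbatim from the proof of Lemma~\ref{lem:bool_cor} since those expressions have type $\bool$.

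The two substantive cases are \textsc{C-Ite} and \textsc{C-Let}. For \textsc{C-Ite}, I would case-split on the guard value $v_g$; since the program is in A-normal form, the compiled guard $\varphi_g$ is a literal (or constant) determined by the substitution, so exactly one of $\varphi_g$, $\overline{\varphi}_g$ is satisfied. The broadcasted conjunction $\varphi_g \xbroadand{\tau} \xpointphi_T$ then collapses to $\xpointphi_T$ on the surviving branch and to the unsatisfiable tuple on the other, and similarly for the accepting formula; the pointwise disjunction $\xpointor{\tau}$ combines these so that the whole WMC equals the WMC of the selected branch, which the induction hypothesis identifies with $\dbracket{\te_T[\ldots]}(v)$ or $\dbracket{\te_E[\ldots]}(v)$ — exactly the semantic clause for \texttt{if}. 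Here the main bookkeeping is verifying that $\varphi_g \xbroadand{\tau} \xpointphi \xpointor{\tau} \overline{\varphi}_g \xbroadand{\tau} \xpointphi'$ is logically equivalent, after the guard is fixed, to whichever of $\xpointphi, \xpointphi'$ is selected — a straightforward induction on $\tau$ that I would isolate as a small sub-lemma about the tuple operations.

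For \textsc{C-Let}, the semantic clause $\dbracket{\Llet{x=\te_1}{\te_2}}(v) = \sum_{v'} \dbracket{\te_1}(v') \cdot \dbracket{\te_2[x\mapsto v']}(v)$ must be matched against the WMC of $\xpointphi_2[\lx \xmapsto{\tau_1} \xpointphi_1]$ conjoined with $\obs_1 \land \obs_2[\lx \xmapsto{\tau_1} \xpointphi_1]$. The key algebraic fact is that the set of \texttt{flip}-variables occurring in $\te_1$ and those occurring in $\te_2$ are disjoint, so a model of the substituted formula factors uniquely as a model $\omega_1$ of the $\te_1$-variables together with a model of the $\te_2$-variables, and the weight multiplies accordingly; moreover, fixing $\omega_1$ determines a value $v'$ such that $\xpointphi_1$ evaluated at $\omega_1$ satisfies $\xpointphi_1 \xLeftrightarrow{\tau_1} v'$. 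Partitioning $\mods(\cdot)$ by this induced $v'$ turns the single WMC into $\sum_{v'} \big(\text{WMC of }\te_1\text{-part picking out }v'\big) \cdot \big(\text{WMC of }\te_2[x\mapsto v']\text{-part}\big)$, and two applications of the induction hypothesis (one to $\te_1$ with target value $v'$, one to $\te_2[x\mapsto v']$ with target $v$) finish the case. I expect this \textsc{C-Let} case to be the main obstacle: the delicate points are (i) proving the substitution-commutes-with-evaluation lemma $\wmc(\xpointphi_2[\lx\xmapsto{\tau_1}\xpointphi_1], w) $ decomposes as claimed, which needs the disjoint-support invariant stated carefully enough to survive the induction, and (ii) handling the accepting formula $\obs_1 \land \obs_2[\ldots]$ simultaneously with $\xpointphi_2$, since the target in the lemma is $(\xpointphi \xLeftrightarrow{\tau} v) \land \obs$ and the substitution must be pushed through the conjunction consistently. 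As the excerpt notes, the actual proof in Appendix~\ref{sec:pf_expr_correct} establishes a strengthened statement; I would likewise strengthen the induction hypothesis to track the free variables and the syntactic disjointness of \texttt{flip}-labels explicitly so that the \textsc{C-Let} step goes through.
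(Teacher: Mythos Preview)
Your proposal is correct and follows essentially the same route as the paper's proof: structural induction on $\te$, with the \textsc{C-Let} case driven by the disjointness of \texttt{flip} variables between $\te_1$ and $\te_2$ (what the paper packages as the \emph{Independent Conjunction} lemma) together with a partition over the intermediate value $v'$ (the paper's \emph{Mutual Exclusion} step). The only presentational difference is in \textsc{C-Ite}: you case-split on the concrete guard value after substitution, whereas the paper keeps the guard symbolic and again invokes Independent Conjunction (the compiled guard has empty weight map, hence shares no \texttt{flip} variables with either branch) followed by Mutual Exclusion; both arguments arrive at the same place, and your version is arguably the more direct one given that all free variables are already substituted in the lemma's hypothesis.
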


\subsection{Functions \& Programs}
\label{sec:comp_proc}
\begin{figure}
  \centering
  \begin{mdframed}
\begin{align*}
 \frac{\Gamma \cup \{x_1: \tau_1\}, \Phi \vdash \te : \tau_2 \comp (\xpointphi, \obs, w)}
{ \Gamma, \Phi \vdash \texttt{fun}~f(x_1 : \tau_1) : \tau_2 ~ \{\te\}
  \comp (\xpointphi, \obs, w)}~\textsc{(C-Func)}
  \quad 
  \frac{\Gamma, \Phi \vdash \te : \tau \comp (\xpointphi, \obs, w)}
  {\Gamma, \Phi \vdash \bullet ~ \te : \tau \comp (\xpointphi, \obs, w)}~\textsc{(C-Prog1)}
\end{align*}
\infrule[C-Prog2]{
  \Gamma, \Phi \vdash \texttt{fun}~f(x_1: \tau_1) : \tau_2 ~ \{\te\}
  \comp (\xpointphi_f, \obs_f, w_f)\\
  \Gamma \cup \{f \mapsto \tau_1 \rightarrow \tau_2\}, \Phi \cup \{f \mapsto (\lx_1, \xpointphi_f, \obs_f, w_f)\}
  \vdash \texttt{p} : \tau \comp (\xpointphi, \obs, w)}
{\Gamma, \Phi \vdash \texttt{fun}~f(x_1 : \tau_1) : \tau_2 ~ \{\te\} ~\texttt{p} : \tau \comp (\xpointphi, \obs, w)}

\infrule[C-FuncCall]{\Gamma(f) = \tau_1 \rightarrow \tau_2 \andalso
  \Gamma(x_1) = \tau_1 \\ 
  \Phi(f) = (\lx_{arg}, \xpointphi, \obs, w) \andalso
(\xpointphi', \obs', w') = \texttt{RefreshFlips}(\lx_{arg}, \xpointphi, \obs, w)}
{\Gamma, \Phi \vdash f(x_1) : \tau_2 \rightsquigarrow (\xpointphi'[\lx_{arg} \xmapsto{\tau_1} \lx_1], \obs'[\lx_{arg}
  \xmapsto{\tau_1} \lx_1], w')}
  \end{mdframed}
  \caption{Compiling functions and programs. These assume without loss of
    generality but for simplicity that function calls are only ever given
    identifiers as arguments.}
  \label{fig:comp_func}
\end{figure}

We conclude our development of \dice{} compilation by introducing functions and
programs in Figure~\ref{fig:comp_func}. We introduce a new piece of
context $\Phi$ into our judgment, which maps function names to their compiled
function bodies. Function names are mapped to a 4-tuple
$(\lx_{arg}, \xpointphi, \obs, w)$ where $\lx_{arg}$ is the logical variable for
the function's formal argument and the other items are respectively the function
body's compiled unnormalized formula, accepting formula, and weight function.

The judgment $\Gamma,\Phi \vdash {\tt func} \rightsquigarrow (\xpointphi, \obs, w)$ compiles function definitions.  As shown in \textsc{C-Func}, we simply compile the function's body in an appropriate type environment.
The judgment $\Gamma,\Phi \vdash {\tt p} : \tau \rightsquigarrow (\xpointphi, \obs, w)$
compiles programs by compiling each function in order, followed by the ``main''
expression. The rules \textsc{C-Prog1} and \textsc{C-Prog2} perform this
compilation. After each function is compiled, its compiled WBF is added to
$\Phi$ and its name and type are added to $\Gamma$, for use in subsequent compilation.

The final judgment form for expressions is $\Gamma,\Phi \vdash \te : \tau
\rightsquigarrow (\xpointphi, \obs, w)$, and \textsc{C-FuncCall} shows the rule for
compiling function calls. The rule simply looks up the function's compiled WBF
and substitutes the actual argument for the formal argument. One subtlety is
that we must ensure that the {\tt flip}s in each call to a function are
independent of one another. Our compilation approach makes it straightforward to
do so: simply replace all of the variables in $\xpointphi$ and $\obs$, aside from
the formal argument $\lx_{arg}$, with fresh variables. We use an auxiliary
function $\texttt{RefreshFlips}(\lx_{arg}, \xpointphi, \obs, w)$ for this purpose.
We now state the full correctness theorem for \dice{} compilation:

\begin{theorem}[Compilation Correctness]
  \label{thm:compilation_correct}
  Let $\prog$ be a \dice{} program and $ \emptyset, \emptyset \vdash \prog : \tau \rightsquigarrow
(\xpointphi, \obs, w)$.  Then:
(1) $\dbracket{\prog}_A
= \wmc(\obs, w)$, 
and (2) for any value $v : \tau$, $\dbracket{\prog}_D(v)
= \wmc((\xpointphi \xLeftrightarrow{\tau} v) \land \obs, w) / \wmc(\obs, w)$.
\end{theorem}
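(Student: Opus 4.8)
The plan is to reduce both conclusions to one \emph{unnormalized} identity and then obtain that identity by strengthening Lemma~\ref{lem:correct_no_proc} to cover function calls and whole programs. Concretely, I will first establish that for a closed program with $\emptyset, \emptyset \vdash \prog : \tau \rightsquigarrow (\xpointphi, \obs, w)$ and every value $v$ of type $\tau$, $\dbracket{\prog}(v) = \wmc((\xpointphi \xLeftrightarrow{\tau} v) \land \obs, w)$. Granting this, claim (2) is immediate from claim (1) together with the definition $\dbracket{\prog}_D(v) = \dbracket{\prog}(v)/\dbracket{\prog}_A$. Claim (1) follows by summing over $v$: since $\xpointphi$ is a tuple of Boolean formulas shaped like $\tau$ over the program's flip variables, every model of $\obs$ — which assigns a truth value to all of those variables — evaluates $\xpointphi$ to exactly one value $v : \tau$, so the sets $\mods((\xpointphi \xLeftrightarrow{\tau} v) \land \obs)$ partition $\mods(\obs)$ and, because each flip's two literal weights sum to one, $\sum_{v : \tau} \wmc((\xpointphi \xLeftrightarrow{\tau} v) \land \obs, w) = \wmc(\obs, w)$. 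The left-hand side equals $\sum_{v : \tau} \dbracket{\prog}(v) = \dbracket{\prog}_A$, using the routine type-soundness fact that $\dbracket{\prog}(v) = 0$ for $v$ not of type $\tau$.

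The core of the argument is an extension of Lemma~\ref{lem:correct_no_proc} that permits function calls, parameterized by a notion of \emph{agreement} between a semantic function table $T$ and a compilation context $\Phi$. Write $T \approx \Phi$ when, for every $f$ with $\Phi(f) = (\lx_{arg}, \xpointphi_f, \obs_f, w_f)$, and all values $v_1 : \tau_1$ and $v : \tau_2$, we have $(T(f)(v_1))(v) = \wmc(((\xpointphi_f \xLeftrightarrow{\tau_2} v) \land \obs_f)[\lx_{arg} \xmapsto{\tau_1} v_1], w_f)$. I then prove: if $\{x_i : \tau_i\}, \Phi \vdash \te : \tau \rightsquigarrow (\xpointphi, \obs, w)$ and $T \approx \Phi$, then for all values $v_i : \tau_i$ and $v : \tau$, $\dbracket{\te[x_i \mapsto v_i]}^T(v) = \wmc(((\xpointphi \xLeftrightarrow{\tau} v) \land \obs)[\lx_i \xmapsto{\tau_i} v_i], w)$. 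The proof is by induction on the compilation derivation; every rule except \textsc{C-FuncCall} is handled exactly as the corresponding case in the proof of Lemma~\ref{lem:correct_no_proc} in Appendix~\ref{sec:pf_expr_correct}, threading $T$ and $\Phi$ through unchanged. The \textsc{C-FuncCall} case expands $\dbracket{f(v_1)}^T(v)$ to $(T(f)(v_1))(v)$, applies the agreement hypothesis for $f$, and reconciles the two descriptions of the call site via the substitution-composition identity ($[\lx_{arg} \xmapsto{\tau_1} \lx_1]$ then $[\lx_1 \xmapsto{\tau_1} v_1]$ equals $[\lx_{arg} \xmapsto{\tau_1} v_1]$) and a lemma asserting that \texttt{RefreshFlips} leaves the relevant weighted model counts unchanged.

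Finally I lift the extended expression lemma to programs by induction on the program-compilation derivation (\textsc{C-Prog1}, \textsc{C-Prog2}), maintaining the invariant that the semantic table $T$ built by the two program-semantics rules and the compilation context $\Phi$ built by \textsc{C-Prog2} satisfy $T \approx \Phi$ at every prefix. The base case \textsc{C-Prog1} is the extended expression lemma applied to the main expression with the fully assembled $T$, $\Phi$. In the \textsc{C-Prog2} step, writing \texttt{func} for the function being compiled and $(\xpointphi_f, \obs_f, w_f)$ for its compiled form, I must show $T \cup \{\eta(\texttt{func}) \mapsto \dbracket{\texttt{func}}^T\} \approx \Phi \cup \{f \mapsto (\lx_1, \xpointphi_f, \obs_f, w_f)\}$; the new entry's agreement condition is the extended expression lemma instantiated at the body $\te$ (whose only free variable is $x_1 : \tau_1$), using $\dbracket{\texttt{func}}^T(v_1) = \dbracket{\te[x_1 \mapsto v_1]}^T$, while the old entries' conditions persist because extending $T$ with a new name leaves $T(g)$ unchanged for the previously-bound names $g$.

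I expect the \texttt{RefreshFlips} bookkeeping in the \textsc{C-FuncCall} case to be the main obstacle. The sublemma I need is that $\wmc(\psi, w)$ is invariant under any injective renaming of variables that fixes $\lx_{arg}$ and transports $w$ accordingly — so re-instantiating a function body with fresh flips does not change the count obtained for a fixed argument value — and that the fresh variables can always be chosen disjoint from everything already present in the caller, which is what keeps the weight-function unions in \textsc{C-Let} and \textsc{C-Ite} conflict-free. Combining this renaming invariance with the substitution-composition identity above is precisely what makes the \textsc{C-FuncCall} side coincide with the agreement hypothesis. Tuple-typed arguments demand an auxiliary induction on $\tau_1$ when manipulating typed substitution and \texttt{RefreshFlips}, but introduce no difficulty beyond what already appears in the proof of Lemma~\ref{lem:correct_no_proc}.
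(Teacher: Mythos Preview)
Your proposal is correct and follows essentially the same approach as the paper: you reduce Compilation Correctness to the unnormalized identity (the paper's Theorem~\ref{thm:prog_correctness}), define an agreement relation between $T$ and $\Phi$ (the paper calls this \emph{Table Compatibility}), extend Lemma~\ref{lem:correct_no_proc} to handle function calls under that assumption, and then lift to programs by induction on \textsc{C-Prog1}/\textsc{C-Prog2} while maintaining compatibility. Your derivation of claim~(1) via a partition of $\mods(\obs)$ is exactly the paper's mutual-exclusion step $\sum_v \wmc((\xpointphi \xLeftrightarrow{\tau} v)\land\obs,w)=\wmc(\bigvee_v(\xpointphi \xLeftrightarrow{\tau} v)\land\obs,w)=\wmc(\obs,w)$; the remark about literal weights summing to one is unnecessary once both sides are taken over the same variable set.
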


As before, division by zero is defined to be zero, and we prove this theorem as a corollary of the
following stronger property, which is proven in Appendix~\ref{app:prog_correct}:
\begin{theorem}[Typed Program Correctness]
  \label{thm:prog_correctness}
  Let $\prog$ be a \dice{} program $\emptyset, \emptyset \vdash \prog : \tau \rightsquigarrow
(\xpointphi, \obs, w)$. Then for any $v : \tau$, we have that $\dbracket{\prog}(v)
= \wmc((\xpointphi \xLeftrightarrow{\tau} v) \land \obs, w)$.
\end{theorem}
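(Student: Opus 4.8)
The plan is to derive Theorem~\ref{thm:prog_correctness} from a single stronger statement about arbitrary \dice{} expressions --- including ones with function calls --- compiled under a pair of contexts, proved by two nested inductions. First I would define a \emph{compatibility} relation $\Phi \sim T$ between a compiled function context and a semantic function table: $\dom(\Phi) = \dom(T)$, and for every $f$ with $\Phi(f) = (\lx_{arg}, \xpointphi_f, \obs_f, w_f)$ of declared type $\tau_1 \rightarrow \tau_2$, every $v_1 : \tau_1$ and every $v : \tau_2$, $\big((T(f))(v_1)\big)(v) = \wmc\big(((\xpointphi_f \xLeftrightarrow{\tau_2} v) \land \obs_f)[\lx_{arg} \xmapsto{\tau_1} v_1], w_f\big)$. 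The strengthened claim is: if $\Phi \sim T$ and $\Gamma, \Phi \vdash \te : \tau \rightsquigarrow (\xpointphi, \obs, w)$, where $\Gamma$ types the free value variables $x_i : \tau_i$ of $\te$ (and possibly some function names), then for all $v_i : \tau_i$ and $v : \tau$, $\dbracket{\te[x_i \mapsto v_i]}^T(v) = \wmc\big(((\xpointphi \xLeftrightarrow{\tau} v) \land \obs)[\lx_i \xmapsto{\tau_i} v_i], w\big)$. Theorem~\ref{thm:prog_correctness} is the instance $\Gamma = \Phi = \emptyset$ with $T$ the empty table, and Theorem~\ref{thm:compilation_correct} follows by summing the identity over all $v : \tau$: since a full assignment to the Boolean variables occurring in $\xpointphi$ determines a unique such $v$, the sum telescopes to $\wmc(\obs, w)$, giving the accepting-semantics equation, and then dividing gives the distributional one.

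The inner induction is on the structure of $\te$. For every rule except \textsc{C-FuncCall} the argument is exactly the one already used for Lemma~\ref{lem:correct_no_proc}: those rules (\textsc{C-True}, \textsc{C-False}, \textsc{C-Ident}, \textsc{C-Flip}, \textsc{C-Obs}, \textsc{C-Tup}, \textsc{C-Fst}, \textsc{C-Snd}, \textsc{C-Ite}, \textsc{C-Let}) are untouched by the addition of functions, their proofs are purely structural, and every subexpression is compiled under the same $\Phi$, which is still $\sim$-compatible with the same $T$ --- so they go through verbatim with the induction hypothesis taken relative to $\sim$. The genuinely new case is $\te = f(x_1)$ via \textsc{C-FuncCall}, with $\Phi(f) = (\lx_{arg}, \xpointphi_f, \obs_f, w_f)$ and $(\xpointphi', \obs', w') = \texttt{RefreshFlips}(\lx_{arg}, \xpointphi_f, \obs_f, w_f)$. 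For this I would establish two auxiliary facts. First, a \emph{renaming lemma}: \texttt{RefreshFlips} applies a bijection $\rho$ that fixes $\lx_{arg}$ and maps flip variables to fresh ones, and such a bijection induces a weight-preserving bijection on models, so $\wmc(\rho(\psi)[\lx_{arg}\xmapsto{\tau_1} u], w') = \wmc(\psi[\lx_{arg}\xmapsto{\tau_1} u], w_f)$ for any $\psi$ over the variables of $(\xpointphi_f,\obs_f)$ and any value $u$. Second, a \emph{substitution-composition lemma} for typed substitution: $\big(\chi[\lx_{arg} \xmapsto{\tau_1} \lx_1]\big)[\lx_1 \xmapsto{\tau_1} v_1] = \chi[\lx_{arg} \xmapsto{\tau_1} v_1]$ when $\lx_1$ is fresh for $\chi$. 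Composing these with the definition of $\Phi \sim T$ and the semantic clause $\dbracket{f(v_1)}(v) = \big((T(f))(v_1)\big)(v)$ yields the desired identity for $\te = f(x_1)$.

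The outer induction is on the list of function definitions, showing that the $\Phi$ and $T$ threaded through \textsc{C-Prog1}/\textsc{C-Prog2} and through the denotational rules for programs stay $\sim$-compatible (non-recursiveness ensures the body of $f$ is compiled with a $\Phi$ not mentioning $f$, matching the semantics). The base case \textsc{C-Prog1} ($\bullet\,\te$) is immediate from the inner induction applied to the main expression $\te$ under the hypothesised $\sim$-compatible $\Phi$. For \textsc{C-Prog2}, \textsc{C-Func} compiles the body $\te$ of $f$ in $\Gamma \cup \{x_1 : \tau_1\}$; by the inner induction on $\te$ together with the semantic equation $\dbracket{\texttt{fun}~f(x_1:\tau_1):\tau_2\{\te\}}(v_1) = \dbracket{\te[x_1\mapsto v_1]}$, the extended pair $\Phi \cup \{f \mapsto (\lx_1, \xpointphi_f, \obs_f, w_f)\}$ and $T \cup \{f \mapsto \dbracket{\texttt{fun}~f(x_1:\tau_1):\tau_2\{\te\}}^T\}$ is again $\sim$-compatible, so the outer induction hypothesis closes the remaining program.

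The main obstacle I anticipate is making \textsc{C-FuncCall} fully rigorous, and specifically the variable-name bookkeeping it rests on. I would need a global freshness invariant --- that the flip variables introduced along any compilation derivation are pairwise distinct and disjoint from all variables already in scope (the $\lx_i$ of free value variables and the $\lx_{arg}$ of any enclosing function) --- so that (a) every $w_1 \cup w_2$ in \textsc{C-Ite}/\textsc{C-Let} is a conflict-free merge, (b) \texttt{RefreshFlips} really produces variables unused elsewhere, and (c) the renaming $\rho$ in the renaming lemma does not collide with the substituted $\lx_1$. Threading this invariant through both inductions, and into the statement of $\sim$, is where most of the care goes; the weighted-model-count manipulations themselves (decomposing $\wmc$ over disjunctions with disjoint-support weights, invariance of $\wmc$ under substituting a formula for a variable, and the telescoping sum over $v$) are routine and already present in the proof of Lemma~\ref{lem:correct_no_proc}.
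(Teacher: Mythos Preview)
Your proposal is correct and takes essentially the same approach as the paper: define a compatibility relation between $\Phi$ and $T$, prove the expression-level correctness lemma (your inner induction, the paper's Theorem~\ref{thm:correctness_fcall}) by reusing the Lemma~\ref{lem:correct_no_proc} cases plus a new \textsc{C-FuncCall} case, and then run an outer induction on the function list to maintain compatibility through \textsc{C-Prog1}/\textsc{C-Prog2}. Your renaming lemma, substitution-composition lemma, and explicit freshness invariant are more careful than the paper, which dispatches the \textsc{C-FuncCall} case in one line by appealing to ``Defn.\ of \texttt{RefreshFlips}'' without spelling out the bijection-on-models argument or the name-hygiene conditions; so your extra bookkeeping is genuinely filling in details the paper leaves implicit rather than a different route.
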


\subsection{Binary Decision Diagrams as WBF}
\label{sec:bdd_comp}
Weighted model counting on WBFs is still \#P-hard, so our compilation above is
not necessarily advantageous. Now we reap the benefits of this translation by
representing WBF with binary decision diagrams (BDDs), a data structure that
facilitates efficient inference by exploiting the program structure to minimize
the size of the WBF. A BDD is a popular data structure for representing Boolean
formulas, and there is a rich literature of using BDDs to represent the
state space of non-probabilistic programs during model
checking~\citep{Clarke2000, Jhala2009}.

\begin{figure}
  \centering
 \begin{align*}
\dfrac{
  \dfrac{\text{fresh}~f_1}{\Lflip{0.1} \rightsquigarrow
    \left(
    \scalebox{0.7}{
    \begin{tikzpicture}[baseline=-2ex]
    \def\lvl{14pt}
    \node (f1) at (0, 0) [bddnode] {$f_1$};
    \node (true) at ($(f1) + (-20bp, -\lvl)$) [bddterminal] {$\true$};
    \node (false) at ($(f1) + (20bp, -\lvl)$) [bddterminal] {$\false$};
    \begin{scope}[on background layer]
      \draw [highedge] (f1) -- (true);
      \draw [lowedge] (f1) -- (false);
    \end{scope}
    \end{tikzpicture}}~,
    \scalebox{0.7}{
    \begin{tikzpicture}
      \node (true) [bddterminal] {$\true$};
    \end{tikzpicture}}~, w_1\right)}
  \andalso
  \dfrac{
    \dfrac{}{x \comp
      \left(
    \scalebox{0.7}{
    \begin{tikzpicture}[baseline=-2ex]
    \def\lvl{14pt}
    \node (f1) at (0, 0) [bddnode] {$\lx$};
    \node (true) at ($(f1) + (-20bp, -\lvl)$) [bddterminal] {$\true$};
    \node (false) at ($(f1) + (20bp, -\lvl)$) [bddterminal] {$\false$};
    \begin{scope}[on background layer]
      \draw [highedge] (f1) -- (true);
      \draw [lowedge] (f1) -- (false);
    \end{scope}
  \end{tikzpicture}}~,
    \scalebox{0.7}{
    \begin{tikzpicture}
      \node (true) [bddterminal] {$\true$};
    \end{tikzpicture}}~, \emptyset\right)}
    \quad
  \dfrac{\text{fresh}~f_2}{\Lflip{0.4} \comp
    \left(
    \scalebox{0.7}{
    \begin{tikzpicture}[baseline=-2ex]
    \def\lvl{14pt}
    \node (f1) at (0, 0) [bddnode] {$f_2$};
    \node (true) at ($(f1) + (-20bp, -\lvl)$) [bddterminal] {$\true$};
    \node (false) at ($(f1) + (20bp, -\lvl)$) [bddterminal] {$\false$};
    \begin{scope}[on background layer]
      \draw [highedge] (f1) -- (true);
      \draw [lowedge] (f1) -- (false);
    \end{scope}
    \end{tikzpicture}}~,
    \scalebox{0.7}{
    \begin{tikzpicture}
      \node (true) [bddterminal] {$\true$};
    \end{tikzpicture}}~, w_2\right)}
  }{\Lflip{0.4} \lor x \comp
      \left(
    \scalebox{0.7}{
    \begin{tikzpicture}[baseline=-4ex]
    \def\lvl{14pt}
    \node (f2) at (0, 0) [bddnode] {$\lx$};
    \node (x) at ($(f2) + (-20bp, -\lvl)$) [bddnode] {$f_2$};
    \node (false) at ($(x) + (-20bp, -\lvl)$) [bddterminal] {$\false$};
    \node (true) at ($(x) + (20bp, -\lvl)$) [bddterminal] {$\true$};
    \begin{scope}[on background layer]
      \draw [lowedge] (f2) -- (x);
      \draw [highedge] (f2) -- (true);
      \draw[lowedge] (x) -- (false);
      \draw[highedge] (x) -- (true);
    \end{scope}
    \end{tikzpicture}}~,
    \scalebox{0.7}{
    \begin{tikzpicture}
      \node (true) [bddterminal] {$\true$};
    \end{tikzpicture}}~, w_2\right)}
  }{\Llet{x = \Lflip{0.1}}{\Lflip{0.4} \lor x} \comp
      \left(
    \scalebox{0.7}{
    \begin{tikzpicture}[baseline=-4ex]
    \def\lvl{14pt}
    \node (f2) at (0, 0) [bddnode] {$f_1$};
    \node (x) at ($(f2) + (-20bp, -\lvl)$) [bddnode] {$f_2$};
    \node (false) at ($(x) + (-20bp, -\lvl)$) [bddterminal] {$\false$};
    \node (true) at ($(x) + (20bp, -\lvl)$) [bddterminal] {$\true$};
    \begin{scope}[on background layer]
      \draw [lowedge] (f2) -- (x);
      \draw [highedge] (f2) -- (true);
      \draw[lowedge] (x) -- (false);
      \draw[highedge] (x) -- (true);
    \end{scope}
    \end{tikzpicture}}~,
    \scalebox{0.7}{
    \begin{tikzpicture}
      \node (true) [bddterminal] {$\true$};
    \end{tikzpicture}}~, w_1 \cup w_2\right)}
\end{align*}
  \caption{A BDD derivation tree for \textsc{(ExLetCompilation)}.}
  \label{fig:bdd_deriv}
\end{figure}

The compilation rules in the previous subsections were deliberately designed to
facilitate BDD compilation. Consider the example compilation
(\textsc{ExLetCompilation}) from Section~\ref{sec:comp_bool}. Each step in this
derivation can be translated into a corresponding BDD operation, as illustrated
by the \emph{BDD derivation tree} in Figure~\ref{fig:bdd_deriv}. The final BDD
is compiled compositionally, at each step exploiting program structure to
produce a minimal, canonical representation (for the given variable ordering).
The operations necessary for constructing this derivation tree --- BDD
conjunction, disjunction, and substitution --- are all standard operations that
are available in BDD packages such as \texttt{CUDD}~\citep{Somenzi2010}. 

The
cost of \dice{} inference is dominated by the cost of constructing the
corresponding BDD derivation tree: that step is computationally hard in general, while WMC on the final BDD is linear time in the size of the BDD.
However, BDDs can exploit program structure in order to allow compilation to scale efficiently on many examples.
The remainder of this paper is devoted to showing that the BDD can
be efficient to construct for useful programs. In Section~\ref{sec:experiments}
we show this experimentally, and Section~\ref{sec:analysis} characterizes the
hardness of \dice{} inference.

\section{\dice{} Implementation \& Empirical Evaluation}
\label{sec:experiments}

Now we describe our implementation and empirical evaluation of \dice{}. \dice{} is
implemented in \texttt{OCaml} and uses \texttt{CUDD} as its backend for
compiling BDDs~\citep{Somenzi2010}. First we 
describe extensions to the core \dice{} syntax that make programming
more ergonomic and enable us to more easily implement some of the benchmark
programs. Then we describe our empirical evaluation
of \dice{}'s performance in comparison with prior PPLs on a suite of benchmarks. In
Section~\ref{sec:analysis} we give context to these experiments and discuss why
\dice{} succeeds on many benchmarks where others fail.

\subsection{\dice{} Extensions \& Ergonomics}
\label{sec:extensions}
Our implementation extends 
the core \dice{} syntax from Figure~\ref{fig:grammar} in several ways. We relax the constraint on A-normal form, allowing more arbitrary placement of expressions. We also include syntactic sugar for the usual Boolean operators 
 $\land, \lor$ and $\neg$.  Finally, we include support for bounded integers and bounded iteration, both of which are described in more detail next.  
Further details of our implementation can be found in Appendix~\ref{sec:impl_details}.

\subsubsection{Bounded Integers}

\dice{} supports probability distributions over integers with the \texttt{discrete} keyword: for
instance, the expression \texttt{discrete(0.1, 0.4, 0.5)} defines a discrete
distribution over $\{0,1,2\}$ where $0$ has probability $0.1$, $1$ has
probability $0.4$, and $2$ has probability $0.5$. 
There are a number of possible strategies for encoding integers into a WBF. The simplest --- and the one we implemented --- is a
\emph{one-hot encoding}.  Specifically, a distribution over $n$ integers is represented as tuple of $n$ Boolean variables, each representing one integer value, and {\tt flip}s are used to ensure that each variable is true with the specified probability.  For example, here is the encoding of our example distribution above:
\begin{align*}
  \texttt{discrete(0.1, 0.4. 0.5)} \rightsquigarrow
  \Bigg\{~
    \begin{aligned}
    &\texttt{let v0 = flip(0.1) in}\\
    &\texttt{let v1 = $\neg$v0 $\land$ flip(0.4/(0.4 + 0.5)) in} \\
    &\texttt{let v2 = $\neg$v0 $\land$ $\neg$v1 in (v0, (v1, v2))}
  \end{aligned}
\end{align*}
Formally, for a discrete distribution \texttt{discrete}$(\theta_1, \theta_2,
\cdots, \theta_n)$, the encoded value \texttt{v}$_i$ is true only if (1)
$\bigwedge_{k < i} \neg\texttt{v}_k$ holds and (2) a coin flipped with
probability $\theta_i / \sum_{j \ge i} \theta_j$ is true. \dice{} also supports
the standard modular arithmetic operations like $(+)$ and $(\times)$ on integers.

\subsubsection{Statically Bounded Iteration}
Iteration and loops are challenging program constructs to support in PPLs.
\dice{}, like many other PPLs, supports \emph{bounded iteration}: loops that
always terminate after a finite number of
iterations~\citep{cusumano2018incremental, gehr2016psi, Claret2013, Pfeffer2007,
goodman2014design}. It does so via the syntax \texttt{iterate(f, init, k)}, where 
\texttt{f} is a function name, {\tt init} is an initialization expression, and {\tt k} is an integer indicating the number of times to call {\tt f}:
\begin{align*}
  \texttt{iterate(f, init, k)}
  \rightsquigarrow \underbrace{f(f(\cdots f}_{k \text{ times}}(\texttt{init}))).
\end{align*}
Many useful examples --- such as the network reachability example from
Section~\ref{sec:motivation} --- can be expressed as bounded iteration.

\subsection{Empirical Performance Evaluation}
We have faithfully implemented the compilation strategy and use of BDDs as
described in Section~\ref{sec:compilation}. Section~\ref{sec:motivation}
highlights some program structure that BDD compilation exploits, and
Section~\ref{sec:analysis} explores this structure further, but the question
remains: does this structure exist in practice, and can \dice{} effectively
exploit it? We investigate these questions from three angles:
\begin{description}
\item[Q1: Comparison with Existing PPLs] How quickly can \dice{} perform exact
  inference on benchmark probabilistic programs from the literature? We evaluate
  this question in Section~\ref{sec:exp_baselines}.
\item[Q2: Exploiting Functions] What are the performance benefits of modular
  compilation for functions? We evaluate this question in
  Section~\ref{sec:exp_procedure} by comparing \dice{}'s performance with and
  without inlining function calls.
\item[Q3: Comparison with Bayesian Network Solvers] Discrete Bayesian networks are a
  special case of \dice{} programs and are a good source of challenging and
  realistic inference problems. A natural question here is: how does \dice{}
  compare against state-of-the-art Bayesian network solvers that are specialized
  for this class of programs? In Section~\ref{sec:bn} we compare \dice{}
  against \texttt{Ace}~\citep{Chavira2008}, a state-of-the-art discrete Bayesian
  network solver.
\end{description}

In our evaluation we compare \dice{} against state-of-the-art
PPLs that employ two different classes of exact inference algorithms:

\begin{description}
\item[Algebraic Methods] The first class are \emph{algebraic} inference methods
that represent the probability distribution as a symbolic expression or algebraic decision
diagram (ADD)~\cite{gehr2016psi, Claret2013, dehnert2017storm,
narayanan2016probabilistic}. We discuss this class of inference algorithms more
thoroughly in Section~\ref{sec:algebraic}. In this class, we compare
experimentally against \textsc{Psi}~\citep{gehr2016psi}.\footnote{We used
\textsc{Psi} version \texttt{2d21f9fe04cf3aac533e08ccc2df18179947baad}}
\item[Enumerative Methods] The second class of inference methods work by
exhaustively \emph{enumerating} all paths through the probabilistic program,
possibly using dynamic programming to reduce the search space~\citep{
wingate2013automated, Sankaranarayanan2013, Albarghouthi2017_1,
goodman2014design, Chistikov2015, filieri2013reliability,
geldenhuys2012probabilistic}. Both \textsc{Psi} and
\textsc{WebPPL}~\citep{goodman2014design} have a mode that supports dynamic-programming
exact inference, and we compare against them experimentally.
\end{description}

Comparing the performance of probabilistic program inference is challenging
because performance is closely tied to the intricacies of how the program is
structured: semantically equivalent programs may have vastly differing
performance. Throughout our experiments we made a best-effort attempt at
representing the programs in a way that was maximally performant in each
language. The tables in this section report the mean value over at least 5 runs
for each experiment; the appendix contains expanded tables that
include standard deviations for each result. All experiments were single-threaded
and performed on the same server with a 2.66GHz CPU and 512GB of RAM. The
timings were recorded using
\texttt{hyperfine},\footnote{\url{https://github.com/sharkdp/hyperfine}} a
utility that performs statistical timing analysis of Unix shell commands.

\subsubsection{Baselines}
\label{sec:exp_baselines}

\begin{table*}

  \centering
  \begin{tabular}{lrrr|rr}
    \toprule
    Benchmark
    & {Psi (ms)}
    & DP (ms)
    & {\dice{} (ms)}
    & {\# Paths}
    & {BDD Size}
    \\
    \midrule
    Grass & 167 & 57 & \bfseries 14 & 95 & 15\\
    Burglar Alarm & 98 & 10 & \bfseries 13 & 250 & 11\\
    Coin Bias & 94 & 23 & \bfseries 13 & 4 & 13\\
    Noisy Or & 81 & 152 & \bfseries 13 & 1640 & 35 \\
    Evidence1 & 48 & 32 & \bfseries 13 & 9 &  5\\
    Evidence2 & 59 & 28 & \bfseries 13 & 9 &  6 \\
    Murder Mystery & 193 & 75 & \bfseries 10 & 16 &  6 \\
    \bottomrule
  \end{tabular}
  \caption{\emph{Baselines.} Comparison of inference algorithms (times are
milliseconds). The total time for \dice{} is reported under the ``\dice{}''
column, and the total size of the final compiled BDD is reported in the ``BDD
Size'' column. }
  \label{tab:experiments}
\end{table*}

Table~\ref{tab:experiments} summarizes the results of our performance
experiments on well-known baselines, 
which includes all of the discrete programs that
\textsc{Psi} and \textsc{R2} were evaluated on~\citep{gehr2016psi, nori2014r2,
 borgstrom2011measure}.\footnote{See the
appendix for the source code for all the benchmarks used in this
section. One discrete benchmark, ``Digit Recognition'', was omitted
from this table because the version of Psi that we tested with does not support
this program.}
Each row is a different benchmark.
The
``Psi'', ``DP'', and ``\dice{}'' columns give the amount of time (in
milliseconds) for respectively (1) Psi's default inference
algorithm~\citep{gehr2016psi}, (2) Psi's dynamic programming inference
algorithm that is specialized for finite discrete programs, and (3) the total
time for \dice{} to compile a BDD and perform weighted model counting.
These examples are small and thus relatively easy for exact
inference, but they serve as an important sanity check. Generally these examples
are too trivial to differentiate the performance of \dice{} and Psi.

We include two other columns, ``\# Paths'' and ``BDD Size'', that give a proxy
for how hard each inference problem is. The ``\# Paths'' column gives how many
paths would be explored by a path enumeration algorithm. The ``BDD Size''
gives the final compiled BDD generated by \dice{}, which in conjunction with the
``\# Paths'' column gives a metric for how much structure \dice{} is exploiting.

\subsubsection{Modular Compilation}
\label{sec:exp_procedure}
\begin{figure*}
  \centering
  \begin{subfigure}[b]{0.3\linewidth}
      \begin{tikzpicture}
    \begin{axis}[
		height=3.5cm,
    ymin=0,
    ymax=1000000,
		grid=major,
    width=3.5cm,
    xmode=log,
    ymode=log,
    xtick={1,10,100,1000,10000},
    ytick={1,10,100,1000,10000,100000},
    xlabel={\# Characters},
    ylabel={Time (ms)},
    legend columns=6,
    legend style={at={(13em,9em)},anchor=north}
    ]
    \addplot[mark=none, thick] table [x index={0}, y index={1}] {\symencerror}; %
    \addlegendentry{\dice{}};
    \addplot[mark=none, densely dotted, thick] table [x index={0}, y index={2}] {\symencerror}; %
    \addlegendentry{\dice{} (Inline)};
    \addplot[thick, blue] coordinates {
      (1,78)
	  } node[draw, fill=blue, cross out, scale=0.4] {}; %
    \addlegendentry{Psi};
    \addplot[mark=none, blue, thick, densely dotted] table [x index={0}, y index={1}] {\psiencerror}; %
    \addlegendentry{Psi DP};
    \addplot[mark=none, thick, red] table [x index={0}, y index={1}] {\webpplencerror}; %
    \addlegendentry{WebPPL Exact};
    \addplot[mark=none, thick, red, densely dotted] table [x index={0}, y index={2}] {\webpplencerror}; %
    \addlegendentry{Rejection};
  \end{axis}
\end{tikzpicture}
\caption{Caesar cipher with errors.}
\label{fig:exp_caesar_error}
\end{subfigure}~
\begin{subfigure}[b]{0.2\linewidth}
  \centering
  \begin{tikzpicture}
    \begin{axis}[
		height=3.5cm,
		width=3.5cm,
		grid=major,
    xmode=log,
    ymode=log,
    xtick={1,10,100,1000,10000},
    ytick={1,10,100,1000,10000,100000},
    xlabel={Length},
    ymin=0,
    ymax=100000
    ]
    \addplot[mark=none, thick, blue, densely dotted] table [x index={0}, y index={1}] {\psidiamond}; %
    \addplot[thick, blue] coordinates {
      (500,15200)
	  } node[draw, fill=blue, cross out, scale=0.4] {};
    \addplot[mark=none, thick, blue] table [x index={0}, y index={2}] {\psidiamond}; %
    \addplot[mark=none, thick] table [x index={0}, y index={1}] {\syminfdiamond}; %
    \addplot[mark=none, thick, densely dotted] table [x index={0}, y index={2}] {\syminfdiamond}; %
    \addplot[mark=none, thick, red] table [x index={0}, y index={1}] {\webppldiamond}; %
    \end{axis}
  \end{tikzpicture}
  \caption{Diamond net.}
  \label{fig:diamondnetscale}
\end{subfigure}~
\begin{subfigure}[b]{0.2\linewidth}
  \centering
  \begin{tikzpicture}
    \begin{axis}[
		height=3.5cm,
		width=3.5cm,
		grid=major,
    xmode=log,
    ymode=log,
    xtick={1,10,100,1000,10000},
    ytick={1,10,100,1000,10000,100000},
    xmin=5,
    xlabel={Length},
    ymin=0,
    ymax=100000,
    ]
    \addplot[mark=none, thick, blue] table [x index={0}, y index={1}] {\psiinfnetworking}; %
    \addplot[thick, blue] coordinates {
      (500,24340)
	  } node[draw, fill=blue, cross out, scale=0.4] {};
    \addplot[mark=none, thick, blue, densely dotted] table [x index={0}, y index={2}] {\psiinfnetworking}; %
    \addplot[thick, blue] coordinates {
      (500,58810)
	  } node[draw, fill=blue, cross out, scale=0.4] {};
    \addplot[mark=none, thick] table [x index={0}, y index={1}] {\syminfladder}; %
    \addplot[mark=none, thick, densely dotted] table [x index={0}, y index={2}] {\syminfladder}; %
    \addplot[mark=none, thick, red] table [x index={0}, y index={1}] {\webpplladder}; %
    \end{axis}
  \end{tikzpicture}
  \caption{Ladder network.}
  \label{fig:laddernetscale}
\end{subfigure}~
\begin{subfigure}[b]{0.2\linewidth}
  \centering
      \begin{tikzpicture}
    \begin{axis}[
		height=3.5cm,
    width=3.8cm,
    ymin=0,
    ymax=1000000,
		grid=major,
    xmode=log,
    ymode=log,
    xtick={1,10,100,1000,10000},
    ytick={1,10,100,1000,10000,100000},
    xlabel={Length},
    ]
    \addplot[mark=none, blue, densely dotted, thick] table [x index={0}, y index={1}] {\psimc};
    \addplot[mark=none, blue, thick] table [x index={0}, y index={2}] {\psimc};
    \addplot[mark=none, red, thick] table [x index={0}, y index={1}] {\webmc};
    \addplot[mark=none, thick] table [x index={0}, y index={1}] {\dicemc};
  \end{axis}
\end{tikzpicture}
\caption{Scaling Figure~\ref{fig:motiv_init}.}
\label{fig:motiv_scale}
\end{subfigure}
\caption{Log-log scaling plots illustrating the benefits of separate
compilation of functions. An ``x''-mark denotes a runtime error was encountered
at that point. The time reported for \dice{} inference includes the time
required to compile and perform WMC. The
standard deviation for the run-times are negligible.}
  \label{fig:scaling}
\end{figure*}
We return to the motivating examples from Section~\ref{sec:motivation} to see
how \dice{} compares with existing methods, and against a version of itself
where all function calls are inlined. Figure~\ref{fig:scaling} shows how
different algorithms scale as the size of the problem grows (note that all plots
are in log-log scale). Figure~\ref{fig:motiv_scale} was introduced and discussed
in Section~\ref{sec:motivation}.

\paragraph{Encryption}
Figure~\ref{fig:caesar} introduced the Caesar cipher motivating example, and
Figure~\ref{fig:exp_caesar_error} shows how exact inference on this example
scales as the number of characters being encrypted increases. \dice{} is about
an order of magnitude faster than the case when function calls are inlined, and
multiple orders of magnitude faster than WebPPL and Psi. In particular, Psi's
default algebraic inference fails to handle the encryption of even a single
character; we explore why in Section~\ref{sec:algebraic}.

Approximate inference approaches generally struggle with these kinds of
programs, due to the low probability of finding samples that satisfy the
observations. To illustrate this, we also report the time it took for rejection
sampling to draw 10 accepted samples. WebPPL supports rejection sampling, and
Figure~\ref{fig:exp_caesar_error} shows how it scales for this particular
example program. This figure shows that rejection sampling scales exponentially
in this case, and thus is not a feasible route around the state-space explosion
problem.

\paragraph{Network Reachability}
Next we examine how separate compilation helps in the network reachability task
described in Figure~\ref{fig:ex_net}. Figure~\ref{fig:diamondnetscale} shows how
exact inference scales in the number of diamond subnetworks. We see a modest
benefit over inlining: compiling the {\tt diamond} function multiple times is
not very expensive since it is so small. Note that modular function compilation
is not strictly beneficial: for this example, the inlined version is faster than
the modular version after about $10^2$ iterations. Also note that both versions
of \dice{} are multiple orders of magnitude faster than \textsc{Psi} and
\textsc{WebPPL} due to the exponential number of paths.

We expect to see overall linear scaling of \dice{} for many network
topologies due to conditional independence. To evaluate this, Figure~\ref{fig:laddernetscale}
shows a version where instead of diamonds we use a \emph{ladder network} of the
following structure:
     \begin{tikzpicture}[node distance=0.2cm, baseline=-1em]
      \node[] (init1) {$\dots$};
      \node[below = of init1] (init2) {$\dots$};
      \node[draw, circle, right=of init1] (S1) {};
      \node[draw, circle, right=of init2] (S2) {};
      \node[draw, circle, right=of S1] (S3) {};
      \node[draw, circle, right=of S2] (S4) {};
      \node[right=of S3] (final1) {$\dots$};
      \node[right=of S4] (final2) {$\dots$};
    
      \draw[->] (init1) -- (S1);
      \draw[->] (init2) -- (S2);

      \draw[->] (S1) -- (S3);
      \draw[->] (S1) -- (S4);

      \draw[->] (S2) -- (S3);
      \draw[->] (S2) -- (S4);

      \draw[->] (S3) -- (final1);
      \draw[->] (S4) -- (final2);
    \end{tikzpicture}.
The goal is to determine the probability of a packet reaching the end of a
network that consists of a chain of ladder subnetworks where each has a similar
probabilistic routing policy to the diamond network. \dice{} continues to scale
well, while this example is challenging for the other methods, in part since the
number of paths is exponential in the length of the network.

\subsubsection{Discrete Bayesian Networks}
\label{sec:bn}
There is currently a lack of challenging discrete probabilistic program
benchmarks in the literature. To more rigorously establish the relative
performance of \dice{} and existing algorithms, here we evaluate the performance
of \dice{} on discrete Bayesian networks that we translated into equivalent Psi and
\dice{} programs. These benchmarks were selected from the Bayesian Network
Repository, an online repository of well-known Bayesian
networks.\footnote{\url{https://www.bnlearn.com/bnrepository/}} These programs
are (1) \emph{realistic}: each has been used to answer scientific research
questions in various domains such as medical diagnosis, weather modeling, and
insurance modeling; and (2) \emph{challenging}: many of these examples have on
the order of thousands or tens of thousands of random variables.

First, we will compare the performance of \dice{} and Psi on this task; then we
compare \dice{} against a specialized Bayesian network tool. We will show that
\dice{} significantly outperforms Psi on all of these examples and is
competitive with the specialized Bayesian network solver.

\begin{wrapfigure}{R}{0.2\linewidth}
  \centering
       \begin{tikzpicture}[node distance=0.75cm, baseline=-1em]
      \node[draw, circle] (C) {\textsc{c}};
      \node[draw, circle, right of=C, above of=C] (S) {\textsc{s}};
      \node[draw, circle, left of=C, above of=C] (P) {\textsc{p}};
      \node[draw, circle, left of=C, below of=C] (X) {\textsc{x}};
      \node[draw, circle, right of=C, below of=C] (D) {\textsc{d}};

      \draw[->] (P) -- (C);
      \draw[->] (S) -- (C);
      \draw[->] (C) -- (X);
      \draw[->] (C) -- (D);
    \end{tikzpicture}
    \caption{The ``Cancer'' Bayesian network.}
    \label{fig:cancer}
\end{wrapfigure}

\paragraph{Comparison with Psi}
\label{sec:singlemarg}
\begin{table*}

  \centering
  \begin{tabular}{l
    rrr
    |r
    S[table-format=2.1e2]
    S[table-format=2.1e2]}
    \toprule
    Benchmark
    & {Psi (ms)}
    & DP (ms)
    & {\dice{} (ms)}
    & {\# Parameters}
    & {\# Paths}
    & {BDD Size}
    \\
    \midrule
    Cancer & 772 & 46 & \bfseries 13 & 10 & 1.1e3 & 28 \\
    Survey & 2477 & 152 & \bfseries 13 & 21 & 1.3e4 & 73 \\
    Alarm & \xmark & \xmark & \bfseries 25 & 509 & 1.0 e 36 & 1.3e3 \\
    Insurance & \xmark & \xmark & \bfseries 212 & 984  & 1.2 e 40 & 1.0e5 \\
    Hepar2 & \xmark & \xmark & \bfseries 54 & 48 & 2.9e69 &  1.3e3 \\
    Hailfinder & \xmark & \xmark & \bfseries 618 & 2656  & 2.0e76 & 6.5e4 \\
    Pigs & \xmark & \xmark & \bfseries 72 & 5618 & 7.3e492 & 35\\
    Water & \xmark & \xmark & \bfseries 2590 & $1.0\times 10^4$ & 3.2e54 & 5.1e4\\
    Munin & \xmark & \xmark & \bfseries 1866 & $8.1 \times 10^5$  & 2.1e1622 & 1.1e4\\
    \bottomrule
  \end{tabular}
  \caption{\emph{Single Marginal Inference}. Comparison of inference algorithms (times are milliseconds). A
``\xmark'' denotes a timeout at 2 hours of running. The total time for \dice{}
is reported under the ``\dice{}'' column, and the total size of the final
compiled BDD is reported in the ``BDD Size'' column.}
  \label{tab:bnexperiments}
\end{table*}
Table~\ref{tab:bnexperiments} compares Psi against \dice{} on the task of
computing a \emph{single marginal} of a leaf node of a Bayesian network, a
standard Bayesian network query. As an example of this task,
Figure~\ref{fig:cancer} shows the ``Cancer'' Bayesian
network~\citep{korb2010bayesian}, a simple 5-node network for modeling the
probability that a patient has cancer (the \textcircled{c} node) given a
collection of symptoms (\textcircled{\textsc{x}} and \textcircled{\textsc{d}})
and causes (\textcircled{\textsc{p}} and \textcircled{\textsc{s}}). The
single-marginal task for this example is to compute the marginal probability
of the leaf node $\Pr(\textcircled{\textsc{x}})$.

Table~\ref{tab:bnexperiments} compares the performance of \dice{} and Psi on the
single-marginal inference task for a variety of Bayesian networks. The size of
the network --- a proxy for the difficulty of the inference task --- is
given by the number of parameters (the ``\# Parameters'' column in the table).
Psi fails to complete the inference within the allotted two hours on any of the
medium or larger sized Bayesian networks.

\paragraph{Comparison with a Bayesian Network Solver}
\begin{table}
  \centering
  \sisetup{detect-weight=true,detect-inline-weight=math}
  \begin{tabular}{l
    r
    r
    |S[table-format=2.1e1]}
    \toprule
    Benchmark
    & {\dice{} (ms)}
    & {\texttt{Ace} (ms)}
    & {BDD Size}
    \\
    \midrule
    Alarm & \bfseries 159 & 422 & 4.3e5\\
    Hailfinder & 1280 & \bfseries 522 & 2.1e5 \\
    Insurance & \bfseries 222 & 492 & 2.3e5\\
    Hepar2& \bfseries 163 & 495 & 5.4e5\\
    Pigs & 11243 & \bfseries 985 & 2.6e5\\
    Water & 3320  & \bfseries 605 & 6.8e4\\
    Munin & 4021194 & \bfseries 3500 & 2.2e7\\
    \bottomrule
  \end{tabular}
    \caption{\emph{All marginals}. A comparison between \dice{} and \texttt{Ace}
    on the all-marginal discrete Bayesian network inference task.}
    \label{tbl:full-joint}
\end{table}
As a final test of the \dice{}'s performance, in Table~\ref{tbl:full-joint} we
compare against \texttt{Ace}, a state-of-the-art Bayesian network
solver~\citep{Chavira2008}. The task here is to compute \emph{all marginal
probabilities}, a strictly harder task than the single-marginal task considered
earlier. We note that Psi fails to complete even a single marginal inference
task on any of these examples within 2 hours, so it is omitted from this table.

Part of what makes the all-marginals inference task challenging is that it
requires the computation of many queries: one for each node in the Bayesian
network. One of the benefits of our compilation is that a single (potentially
expensive) compilation, once completed, can be efficiently reused to perform
many marginal probability queries. We highlight this capability in
Table~\ref{tbl:full-joint}, which shows the cost of compiling the full joint
distribution of the example discrete Bayesian networks.
  These compilations take on the order of
several seconds; however, once compiled, computing each marginal probability ---
or any other query with a small BDD, such as disjoining together several
variables --- takes milliseconds. For comparison, Psi cannot compute a single
marginal on any of these examples within two hours.

\texttt{Ace}, similar to \dice{}, reduces the Bayesian network probabilistic
inference task to weighted model counting (with a very different encoding
scheme). This gives \texttt{Ace} an inherent advantage over \dice{} on this
task: \texttt{Ace} does not support arbitrary program constructs --- such as
conditional branching, procedures, and \texttt{observe} statements --- and hence can
specialize directly for Bayesian networks, a limited subclass of \dice{}
programs.

Despite these inherent advantages, Table~\ref{tbl:full-joint} shows that \dice{} is
competitive with \texttt{Ace} on a number of challenging Bayesian network
inference tasks. \texttt{Ace} significantly outperforms \dice{} only on the very
largest network, ``Munin''. These results suggest that even though \dice{} is a
general-purpose PPL, it is still a competitive
exact inference algorithm for medium-sized Bayesian networks. %

\section{Discussion \& Analysis}
\label{sec:analysis}

The previous section demonstrates empirically that \dice{} can perform exact
inference orders of magnitude faster than existing inference algorithms on a range of benchmarks.  In this section we provide discussion and analysis that provide context for these results.
First we ask in Section~\ref{sec:hardness}: how hard is exact inference in
\dice{}? We show that inference is \textsf{PSPACE}-hard, which means that it is likely harder than inference on discrete Bayesian networks.  This begs the
question: why do the experiments in Section~\ref{sec:experiments} succeed at
all? We explore this question in Section~\ref{sec:inference_fast} by identifying different forms of program structure that \dice{} exploits in order to scale.
Finally, Section~\ref{sec:algebraic} considers algebraic representations as an
alternative compilation target for probabilistic programs and discusses the forms of structure that they
are and are not capable of exploiting.

\subsection{Computational Hardness of Exact \dice{} Inference}
\label{sec:hardness}
The experiments in Section~\ref{sec:experiments} raise a natural question: how
hard is the exact inference challenge for \dice{} programs? The complexity of
exact inference has been well-studied in the context of discrete Bayesian
networks. In particular, the decision problem of determining whether or not the
probability of an event in a Bayesian network exceeds a certain threshold is
\textsf{PP}-complete~\citep{kwisthout2009computational,
littman1998computational}. The canonical \textsf{PP}-complete problem is
\textsc{MajSat}, the problem of deciding whether or not the majority of truth
assignments satisfy a logical formula. It is clear that exact \dice{} is
\textsf{PP}-hard: indeed, some of our experiments in
Section~\ref{sec:experiments} utilize a polynomial-time reduction from discrete
Bayesian networks to \dice{} programs. However, in fact exact inference for
\dice{} is \textsf{PSPACE}-hard, and therefore likely harder than discrete
Bayesian network inference as \textsf{PP$\subseteq$PSPACE}:

\begin{theorem}
  Exact inference in \dice{} is \textsf{PSPACE}-hard.
  \label{thm:hard}
\end{theorem}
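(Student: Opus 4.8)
The plan is to reduce from \textsc{TQBF} --- deciding the truth of a fully quantified Boolean formula $\Phi = Q_1 x_1\, Q_2 x_2 \cdots Q_n x_n.\ \phi(x_1,\dots,x_n)$ with each $Q_i \in \{\exists,\forall\}$ --- which is the canonical \textsf{PSPACE}-complete problem. The key observation is that \dice{}'s non-recursive functions, even though every \dice{} program terminates, let us encode the \emph{alternating, branching} evaluation of a QBF in program text of only polynomial size; the exponential blow-up is confined to the (purely semantic) path enumeration and never appears in the program itself.

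Given $\Phi$, I would build the \dice{} program whose function list is $\texttt{eval}_{n+1}, \texttt{eval}_n, \dots, \texttt{eval}_2$, listed in that order so that each may call the one textually preceding it as the language requires, followed by a ``main'' expression. Here $\texttt{eval}_{n+1}$ takes a nested tuple $(x_1,\dots,x_n)$ of Booleans and returns $\phi(x_1,\dots,x_n)$ written using the sugar $\land,\lor,\neg$. For $2 \le i \le n$, $\texttt{eval}_i$ takes $(x_1,\dots,x_{i-1})$ and returns $\texttt{let } a = \texttt{eval}_{i+1}((x_1,\dots,x_{i-1},\true))\ \texttt{in let } b = \texttt{eval}_{i+1}((x_1,\dots,x_{i-1},\false))\ \texttt{in } a \star b$, where $\star = \lor$ if $Q_i = \exists$ and $\star = \land$ if $Q_i = \forall$. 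The main expression is the analogous $\texttt{let } a = \texttt{eval}_2(\true)\ \texttt{in let } b = \texttt{eval}_2(\false)\ \texttt{in } a \star b$ with $\star$ chosen according to $Q_1$ (this avoids needing a $0$-ary $\texttt{eval}_1$, since \dice{} has no unit type). The whole program has size $O(n^2 + |\phi|)$, contains no \texttt{flip}s, and is therefore produced in polynomial time and is \emph{deterministic}.

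Correctness follows by a routine downward induction on $i$: using the semantics of function calls, \texttt{let}, and \texttt{if}/values from Figure~\ref{fig:semantics} together with the desugaring of $\land,\lor$, one shows $\dbracket{\texttt{eval}_i(v_1,\dots,v_{i-1})} = \delta(b)$, where $b$ is the truth value of $Q_i x_i \cdots Q_n x_n.\ \phi(v_1,\dots,v_{i-1},x_i,\dots,x_n)$; the base case $i=n+1$ is just the fact that a closed Boolean expression denotes the point mass on its value. Hence the program denotes $\delta(\true)$ if $\Phi$ is true and $\delta(\false)$ otherwise, so $\dbracket{\prog}_A = 1$ and $\dbracket{\prog}_D = \dbracket{\prog}$. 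Consequently, deciding whether $\dbracket{\prog}_D(\true) \ge 1/2$ --- a question that any exact inference procedure must answer --- is \textsf{PSPACE}-hard, and since \textsf{PSPACE} is closed under complement the direction of the threshold does not matter.

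I expect the only real difficulty to be \emph{conceptual} rather than technical: one must recognize that the two recursive calls in each $\texttt{eval}_i$ body make the semantic unfolding a binary tree of depth $n$ --- hence $2^n$ leaves --- while the program itself stays polynomial because each function is \emph{written} once and merely \emph{invoked} exponentially often, exactly the gap between program size and path count highlighted in Section~\ref{sec:motivation}. The remaining details (the function-ordering constraint, constructing the tuple arguments, the missing unit type, and the induction) are all straightforward. As an alternative reduction that leans on \dice{}'s \texttt{iterate} construct, one could instead fix a deterministic Turing machine deciding a \textsf{PSPACE}-complete language in space $p(n)$, encode a configuration as a tuple of $O(p(n))$ Booleans, write a polynomial-size \texttt{step} function implementing its transition relation, and return the accepting bit of $\texttt{iterate}(\texttt{step}, \textit{init}(w), 2^{c\,p(n)})$; the binary numeral $2^{c\,p(n)}$ has only $O(p(n))$ digits, so this too is a polynomial-time reduction yielding a deterministic program whose output probability decides membership.
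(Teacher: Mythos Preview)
Your main reduction is correct and is essentially the paper's approach: both reduce from QBF by introducing one non-recursive function per quantifier level, each making two calls to the next and combining the results with $\lor$ or $\land$ according to the quantifier, with the innermost function evaluating the propositional matrix; your write-up simply supplies more detail than the paper's sketch. One caution on your \texttt{iterate} alternative: the paper defines \texttt{iterate}$(f,\textit{init},k)$ as syntactic sugar that literally unfolds to $k$ nested applications, so writing $k = 2^{c\,p(n)}$ in binary still yields an exponential-size core-\dice{} program after desugaring, and that route is not a polynomial-time reduction to the language of Figure~\ref{fig:grammar}---stick with the QBF encoding.
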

A proof sketch is in Appendix~\ref{sec:hard}. This result depends on the
expressiveness of functions, which Bayesian networks lack. We leave for future
work the investigation of tighter complexity bounds for \dice{} inference.

\subsection{When is \dice{} Inference Fast?}
\label{sec:inference_fast}
\dice{} inference, in the worst case, is extremely hard. Why, then, do the
experiments in Section~\ref{sec:experiments} succeed? Put another way: when can
we guarantee that the BDD derivation tree is efficient to construct (i.e.,
polynomial in the size of the program)? In this section we explore two sources
of tractability in \dice{} inference, both of which are structural properties
that a programmer can consciously exploit while designing \dice{} programs. The
first source of structure is \emph{independence}, which implies the existence of
factorizations. The second is a more subtle property called \emph{local
structure} that implies that, even in some cases without independence, it can
still be efficient to construct the BDD derivation
tree~\citep{boutilier1996context, chavira2005compiling}.
These forms of structure were first introduced in the context of
graphical models for capturing conditional probability tables with various forms
of structure. We show that these insights can be generalized to \dice{} programs.

\subsubsection{Independence}
\label{sec:indep}
The independence property implies that two program parts communicate only over a
limited interface. It is the key reason why \dice{} performs so well in many of
the benchmarks (Section~\ref{sec:exp_baselines}). Programs naturally have
conditional independence, implied by their control flow, function boundaries,
etc. In the motivating example in Figure~\ref{fig:cond_bdd}, variable \texttt{z}
does not depend on \texttt{x} given an assignment to \texttt{y}. This is
commonly called \emph{conditional independence} of \texttt{x} and \texttt{z}
given \texttt{y}, and it partially explains why \dice{} scales to thousands of
conditionally independent layers in Figure~\ref{fig:motiv_scale}.

\dice{} naturally exploits conditional independence. We can formalize this by
giving bounds on the cost of composing BDDs that are conditionally independent.
In general, the operation $B_1 \land B_2$ on two BDDs $B_1$ and $B_2$ has time
and space complexity $\bigO(|B_1| \times |B_2|)$, and similarly for $B_1 \lor
B_2$~\citep{meinel1998algorithms}. This implies a worst-case exponential blowup
as BDDs are composed. However, \dice{} can exploit conditional independence ---
among other properties --- to avoid this exponential blowup in practice:

\begin{proposition}
Let $B_1$ and $B_2$ be BDDs that share no variables other than some variable
$z$, and let $|B|$ be the size of the BDD $B$. Then we say $B_1$ and $B_2$ are
\emph{conditionally independent given $z$}, and computing $B_1 \land B_2$ and
$B_1 \lor B_2$ has time and space complexity $\bigO(|B_1| + |B_2|)$ for a
variable order that orders the variables in $B_1$ before $z$ and $z$ before the
variables in $B_2$.
\label{prop:composition}
\end{proposition}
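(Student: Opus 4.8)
The plan is to analyze the standard \textsc{Apply} algorithm that BDD packages use to compute $B_1 \land B_2$ and $B_1 \lor B_2$: it is a memoized recursion over pairs of nodes $(n_1,n_2)$, one drawn from each operand, and its running time is proportional to the number of distinct pairs it expands. In the worst case that number is $\bigO(|B_1|\times|B_2|)$, but I will show that the hypothesized variable order forces the recursion to expand only $\bigO(|B_1|)$ pairs, each in $\bigO(1)$ amortized time and creating at most one new node, whence the $\bigO(|B_1|+|B_2|)$ bound. First I would recall the shape of \textsc{Apply}: on a pair $(n_1,n_2)$, if either node is a terminal the answer is read off immediately and a pointer to an existing sub-BDD is returned with no further recursion (for $\land$: $\true\land n = n$, $\false\land n=\false$; for $\lor$ dually); otherwise it cofactors on whichever of $\mathrm{var}(n_1),\mathrm{var}(n_2)$ is smaller in the order, recurses on the two resulting pairs, and hash-conses the result through the unique table, which enforces the reduction rules on the fly. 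Memoization ensures each pair reached is expanded exactly once.

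Next I would extract the structural consequences of the ordering. Let $V_i$ be the set of variables occurring in $B_i$, so that $V_1\cap V_2\subseteq\{z\}$, every element of $V_1\setminus\{z\}$ precedes $z$, and $z$ precedes every element of $V_2\setminus\{z\}$. Since $z$ is maximal in the order among $V_1$, every $z$-labeled node of $B_1$ has both children terminal; and since $z$ is minimal among $V_2$, the root of $B_2$ is labeled by the minimal variable of $V_2$, which is $z$ whenever $z\in V_2$ and which in every case sits above all of $V_1\setminus\{z\}$. Now trace the recursion from the two roots. As long as the current first argument $n_1$ is an internal node labeled by a variable of $V_1\setminus\{z\}$, that variable is strictly smaller than $\mathrm{var}(n_2)$, so \textsc{Apply} cofactors $B_1$ only and the second argument remains the root $r_2$ of $B_2$; hence the only expanded pairs with internal, non-$z$ first component are of the form $(n_1,r_2)$, of which there are at most $|B_1|$. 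The recursion can leave this regime only by reaching a $z$-node of $B_1$ or a terminal of $B_1$: in the former case both arguments are $z$-nodes, \textsc{Apply} performs one simultaneous cofactor, and the two child calls have a terminal first argument, so they are base cases returning sub-BDDs of $B_2$; in the latter case the call is already a base case. In particular no recursive call is ever issued whose first argument is a proper internal node of $B_2$. Summing, \textsc{Apply} expands $\bigO(|B_1|)$ subproblems, does $\bigO(1)$ work per subproblem, creates $\bigO(|B_1|)$ fresh nodes, and reuses sub-BDDs of $B_2$ by sharing; the total time and space are therefore $\bigO(|B_1|+|B_2|)$. The same argument applies verbatim to $\lor$ using its base cases.

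I expect the main obstacle to be the careful bookkeeping around the interface variable $z$: establishing that the $z$-nodes of $B_1$ are precisely a bottom layer with terminal children, that the single step where both operands are cofactored on $z$ does not trigger a descent into the interior of $B_2$, and that \textsc{Apply}'s terminal short-circuits really do cost $\bigO(1)$ (they return a pointer rather than copying $B_2$). I would also state the customary RAM / uniform-hashing assumption under which each memo lookup and unique-table insertion is $\bigO(1)$; this is the standard accounting in the BDD literature and can simply be cited. Finally I would remark that variables belonging to neither $B_1$ nor $B_2$ are immaterial even if the order interleaves them, so the hypothesis may be read as a constraint on the relative order of $V_1\cup V_2$ without loss of generality.
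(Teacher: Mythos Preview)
Your argument is correct, but it takes a different route from the paper's own proof sketch. The paper argues by direct construction: observing that under the stated order the $z$-nodes of $B_1$ sit just above the terminals and the root of $B_2$ is (at most) a $z$-node, it builds $B_1\land B_2$ by traversing $B_1$ and surgically rerouting each $\true$-edge out of a $z$-node to the appropriate cofactor $B_2\mid z$ or $B_2\mid\bar z$, leaving $\false$-edges alone. You instead analyze the standard memoized \textsc{Apply} procedure and show that the ordering pins the second argument to the root of $B_2$ throughout the descent through $B_1$, so only $\bigO(|B_1|)$ pairs are ever expanded and the base-case short-circuits hand back sub-BDDs of $B_2$ by pointer. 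Your approach is more detailed and has the practical advantage of showing that an unmodified BDD package already achieves the bound; the paper's construction is shorter and makes the shape of the resulting BDD more visually explicit. One small wording fix: when you reach a $z$-node of $B_1$, the second argument is the root $r_2$, which is a $z$-node only if $z$ actually occurs in $B_2$; if it does not, \textsc{Apply} cofactors on the first argument alone and the children are still terminals, so your conclusion stands, but the sentence ``both arguments are $z$-nodes'' overstates slightly.
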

A proof sketch can be found in Appendix~\ref{sec:comp_proof}.
Proposition~\ref{prop:composition} implies that compositional rules that utilize
conjunction and disjunction to compose \dice{} programs --- like \textsc{C-Let}
--- can be efficient in the presence of conditional independence.
One useful source of conditional independence is function calls: they
are conditionally independent from all other expressions given their arguments
and return value. The motivating example in Figure~\ref{fig:ex_net} illustrates
an example of this form of conditional independence. Each call to the
\texttt{diamond} procedure is independent of all prior calls given only the
immediately previous call. It follows that the size of the BDD for the example
in Figure~\ref{fig:ex_net_d} grows as $\bigO(|\texttt{diamond}| \times c)$,
where $c$ is the number of calls to the \texttt{diamond} procedure and
$|\texttt{diamond}|$ is the size of the compiled BDD for the procedure.

\begin{figure}
\begin{subfigure}[b]{0.6\linewidth}
 \begin{lstlisting}[mathescape=true, basicstyle=\ttfamily\small]
let z = flip$_1$ 0.5 in
let x = if z then flip$_2$ 0.6 else flip$_3$ 0.7 in
let y = if z then flip$_4$ 0.7 else x in (x, y)
\end{lstlisting}
     \caption{Context-specific independence.}
     \label{fig:sens_ex}
   \end{subfigure}~
   \begin{subfigure}[b]{0.35\linewidth}
     \centering
     \scalebox{0.8}{
    \begin{tikzpicture}
      \def\lvl{17pt}
    \node (f1a) at (0, 0) [bddnode] {$f_1$};
    \node (f1b) at ($(f1a) + (40bp, 0)$) [bddnode] {$f_1$};

      \node (x) at ($(f1a) + (0, \lvl)$)  {\texttt{l}};
      \node (y) at ($(f1b) + (0, \lvl)$)  {\texttt{r}};
    \node (f4) at ($(f1a) + (-20bp, -\lvl)$) [bddnode] {$f_4$};
    \node (f3) at ($(f1a) + (20bp, -\lvl)$) [bddnode] {$f_3$};
    \node (f2) at ($(f1b) + (20bp, -\lvl)$) [bddnode] {$f_2$};

    \node (true) at ($(f1a) + (0bp, -2.5*\lvl)$) [bddterminal] {$\true$};
    \node (false) at ($(f1b) + (0bp, -2.5*\lvl)$) [bddterminal] {$\false$};

    \begin{scope}[on background layer]
      \draw [highedge] (f1a) -- (f4);
      \draw [lowedge] (f1a) -- (f3);
      \draw [lowedge] (f1b) -- (f3);
      \draw [highedge] (f1b) -- (f2);

      \draw [highedge] (f4) -- (true);
      \draw [highedge] (f3) -- (true);
      \draw [highedge] (f2) -- (true);
      \draw [lowedge] (f4) -- (false);
      \draw [lowedge] (f3) -- (false);
      \draw [lowedge] (f2) -- (false);
      
      \draw[->] (x) -- (f1a);
      \draw[->] (y) -- (f1b);
    \end{scope}
    \end{tikzpicture}
    }

    \caption{Compiled BDD.}
    \label{fig:sens_bdd}
 \end{subfigure}\\
\begin{subfigure}[b]{0.6\linewidth}
 \begin{lstlisting}[mathescape=true, basicstyle=\ttfamily\small]
fun foo(a:Bool, b:Bool, c:Bool):Bool {
  a $\lor$ b $\lor$ c
}
\end{lstlisting}
     \caption{Structure without independence.}
     \label{fig:xor_ex}
   \end{subfigure}~
   \begin{subfigure}[b]{0.35\linewidth}
     \centering
     \scalebox{0.8}{
    \begin{tikzpicture}
      \def\lvl{14pt}
    \node (f1) at (0, 0) [bddnode] {a};
    \node (f2a) at ($(f1) + (-20bp, -\lvl)$) [bddnode] {b};
    \node (f3a) at ($(f2a) + (-20bp, -\lvl)$) [bddnode] {c};
    \node (true) at ($(f3a) + (60bp, -\lvl)$) [bddterminal] {$\true$};
    \node (false) at ($(f3a) + (-20bp, -\lvl)$) [bddterminal] {$\false$};
    \begin{scope}[on background layer]

      \draw [highedge] (f1) -- (true);
      \draw [lowedge] (f1) -- (f2a);
      \draw [highedge] (f2a) -- (true);
      \draw [lowedge] (f2a) -- (f3a);
      \draw [highedge] (f3a) -- (true);
      \draw [lowedge] (f3a) -- (false);
    \end{scope}
    \end{tikzpicture}
    }

    \caption{Compiled BDD.}
    \label{fig:xor_bdd}
 \end{subfigure}

   \caption{\dice{} programs and their compiled BDDs illustrating different
degrees of structure. }
   \label{fig:context_specific}
\end{figure}

\dice{} exploits another, more fine-grained form of independence called {\em
context-specific independence}. Historically, context-specific independence has
led to significant speedups in graphical model
inference~\citep{boutilier1996context}. We briefly sketch its benefits here. Two BDDs $B_1$ and $B_2$ are
\emph{contextually independent given $z=v$}, for some variable $z$ and value
$v$, if $B_1[z \mapsto v]$ and $B_2[z \mapsto v]$ share no
variables~\cite{boutilier1996context}. As for conditional independence,
composing contextually independent BDDs can often be efficient.

An example program that exhibits context-specific independence is shown in
Figure~\ref{fig:sens_ex}. The variables $x$ and $y$ are correlated if $z =
\false$ or if $z$ is unknown, but they are independent if $z= \true$. Thus, $x$
is independent of $y$ given $z = \true$. Figure~\ref{fig:sens_bdd} shows how our
compilation strategy exploits this independence. Since the program evaluates to
a tuple, it is compiled to a tuple of two BDDs. However, in our implementation
these BDDs share nodes wherever possible, so they can be equivalently viewed as
a single, \emph{multi-rooted BDD}. The left and right element of the tuple are represented
by the  \texttt{l} and \texttt{r} roots respectively. The program's context-specific independence implies that there will
be no shared sub-BDD between \texttt{l} and \texttt{r} if $f_1$ is true. We
refer to \citet{boutilier1996context} for more on the performance benefits of exploiting
context-specific independence in probabilistic graphical models.

\subsubsection{Local Structure}
Finally, it is possible for the BDD compilation process to be efficient even in
the absence of independence if the program has structure that is amenable to
efficient BDD compilation. \citet{chavira2005compiling} showed that exploiting
local structure led to significant speedups in Bayesian network inference, and
this performance was one of the primary motivations for developing \texttt{Ace}.
Local structure is a broad category of structural properties that can make
performance more efficient, including determinism, context-specific
independence, and other properties~\citep{boutilier1996context,
gogate2011samplesearch, sang2005performing, Chavira2008}.

At its core, local structure is a property that makes compiling a BDD more
efficient than naively using a conditional probability table to represent a
probability distribution. Figure~\ref{fig:xor_ex} gives an example \dice{}
function that computes the disjunction of three arguments.
Figure~\ref{fig:xor_bdd} shows the compiled BDD for this function. It is compact
and hence exploiting the program structure. Note that, if the number of
variables disjoined together were to increase, the size of the BDD --- and the
cost of compiling it --- would increase only linearly with the number of
variables. This stands in stark contrast to an approach to inference that is
agnostic to local structure (such as simple variable elimination), which would
not identify that this or-function is a compact way of representing the
distribution.

\dice{} implicitly exploits local structure during
inference. For instance, the Bayesian network ``Hepar2'' has many examples of
determinism, sparse probability tables, and context-specific independence;
\dice{} exploits these properties to be competitive with the performance
of \texttt{Ace} on this example and others in Table~\ref{tbl:full-joint}.

\begin{wrapfigure}{r}{0.3\linewidth}
  \centering
       \scalebox{0.8}{
    \begin{tikzpicture}
      \def\lvl{16pt}
    \node (f1) at (0, 0) [bddnode] {$x$};
    \node (f2) at ($(f1) + (-40bp, -\lvl)$) [bddnode] {$y$};
    \node (f2b) at ($(f1) + (40bp, -\lvl)$) [bddnode] {$y$};
    \node (f3) at ($(f2) + (-20bp, -\lvl)$) [bddnode] {$z$};
    \node (tft) at ($(f2) + (20bp, -\lvl)$) [bddterminal] {$.04$};
    \node (f3c) at ($(f2b) + (-20bp, -\lvl)$) [bddnode] {$z$};
    \node (f3d) at ($(f2b) + (20bp, -\lvl)$) [bddterminal] {.315};

    \node (ttt) at ($(f3) + (-15bp, -\lvl)$) [bddterminal] {$.008$};
    \node (ttf) at ($(f3) + (15bp, -\lvl)$) [bddterminal] {$.012$};
    \node (ftt) at ($(f3c) + (-15bp, -\lvl)$) [bddterminal] {$.108$};
    \node (ftf) at ($(f3c) + (15bp, -\lvl)$) [bddterminal] {$.162$};

    \begin{scope}[on background layer]
      \draw [highedge] (f1) -- (f2);
      \draw [lowedge] (f1) -- (f2b);
      \draw [highedge] (f2) -- (f3);
      \draw [highedge] (f2b) -- (f3c);
      \draw [lowedge] (f2b) -- (f3d);
      \draw [highedge] (f2b) -- (f3c);
      \draw [lowedge] (f2b) -- (f3d);

      \draw [highedge] (f3) -- (ttt);
      \draw [lowedge] (f3) -- (ttf);
      \draw [lowedge] (f2) -- (tft);
      \draw [highedge] (f3c) -- (ftt);
      \draw [lowedge] (f3c) -- (ftf);
    \end{scope}
    \end{tikzpicture}
    }
    \caption{An ADD representation of the distribution in Equation~\ref{eq:exhaustive}.}
    \label{fig:add}
  \end{wrapfigure}
\subsection{Algebraic Representations}
\label{sec:algebraic}
Previous sections have shown that BDDs naturally capture and exploit
factorization and procedure reuse. While these are common and useful
program properties, they are not the only possible ones, and different
compilation targets will naturally exploit others. In this section we consider
\emph{algebraic compilation targets} as a foil to our approach, to highlight their relative
strengths and weaknesses.

In contrast to our WMC approach that explicitly separates
the logical representation from probabilities, algebraic approaches integrate
probabilities directly into the compilation target. A common algebraic target
are \emph{algebraic decision diagrams} (ADDs)~\citep{bahar1997algebric}, which
are similar to binary decision diagrams except that they have numeric values as
leaves.  This makes them a natural choice for compactly encoding
probability distributions in the probabilistic
programming and probabilistic model checking communities, with different
encoding strategies from \dice{}~\citep{Claret2013,
dehnert2017storm, Kwiatkowska2011}. As an example, Figure~\ref{fig:add}
shows an ADD for the program in
Figure~\ref{fig:cond_ex} if it returned a tuple of $x$, $y$, and $z$. ADDs encode probabilities of total
assignments of variables: in this example, a probability of $0.008$ is given to the assignment $x =
y = z = \true$.

ADDs have several similarities with BDDs. First, they support composition
operations and so can offer a compositional compilation target~\citep{Claret2013}, albeit very different from the one
described by our compilation rules. Second, they support
efficient inference once the ADD is constructed.
Despite these similarities, ADDs have strikingly different scaling properties
from BDDs because they exploit different underlying structure of the program.
The key difference is that BDDs are agnostic to the \texttt{flip} parameters:
they naturally exploit logical program structure such as independence and local structure in order
to scale without needing to know what any probabilities are. As the previous
subsections have argued, BDDs excel at this task. In contrast, ADDs naturally exploit \emph{global repetitious
probabilities}: repeated probabilities of possible worlds in the entire
distribution. This is shown in Figure~\ref{fig:add}, which collapses states
with the same probability --- for example, if $x = y = \false,$ then the ADD
terminates with a node that does not depend on $z$'s value:
\begin{tikzpicture}
    \node (ftf) [bddterminal] {$.315$};
  \end{tikzpicture}.

Global repetitious probabilities are an orthogonal property to independence.
ADDs do not exploit independence in the same way as \dice{}. ADDs must explicitly
represent the probability of each total instantiation of the variables of interest, corresponding to each possible value of the returned tuple. In our example, this means that
the ADD cannot exploit the conditional independence of $z$ and $x$ given $y$, and instead needs to enumerate their joint probabilities.

Hence, unlike \dice{}'s BDD representation, the size of a compiled ADD is
sensitive to the precise parameters chosen for \texttt{flip}s in the program.
If these parameters are chosen such that the probability of each total
assignment is distinct, and we are interested in a tuple of all the random variables, then the number of leaves in the ADD will equal the
number of of paths in the probabilistic program. As shown in
Table~\ref{tab:experiments}, this can be prohibitively large for many examples;
the BDD size is typically many orders of magnitude smaller than the
number of paths on these real-world programs.

\section{Related Work}
\label{sec:related_work}
There is a large literature on probabilistic programming languages and inference
algorithms. At a high level, \dice{} is distinguished from existing PPLs by being the first to use weighted model counting to perform exact inference for a PPL that includes traditional programming language constructs, functions, and first-class observations.  In this section we survey the existing literature on
probabilistic program inference and provide context for how each relates to
\dice{}.

\paragraph{Path-based inference algorithms}
The most common class of probabilistic program inference algorithms today are
\emph{operational}, meaning that they work by executing the probabilistic
program on concrete values. Common examples include sampling
algorithms~\citep{carpenter2016stan, Hur2015, pfeffer2007general,
chaganty2013efficiently, wood2014new, van2015particle,
mansinghka2013approximate, goodman:uai08, saad2016, Mansinghka2018} and
variational approximations~\citep{bingham2019pyro, dillon2017tensorflow,
wingate2013automated, kucukelbir2015automatic, InferNET14}. Other approaches use
symbolic techniques to perform inference but are similar in spirit, in the sense that they
separately enumerate paths through the program~\citep{Sankaranarayanan2013,
Albarghouthi2017_1, geldenhuys2012probabilistic, filieri2013reliability}.
These approaches do not factorize the program: they consider entire execution
paths as a whole.
\citet{Chistikov2015} proposes performing \emph{weighted model integration} --- a
generalization of weighted model counting to the continuous
domain~\citep{BelleIJCAI15, zeng2020efficient, dos2019exact} --- to perform
inference by integrating along paths through a probabilistic program. 

Additionally, sampling and variational algorithms are distinguished from our
approach by being approximate rather than exact inference algorithms. In
general, these techniques can be applied to both discrete and continous
distributions, though they often rely on program continuity or differentiation
to be effective~\citep{carpenter2016stan, hoffman2014no, gram2018hamiltonian,
wingate2013automated, kucukelbir2015automatic, InferNET14}. In contrast to all
of these approaches, \dice{} performs factorized, exact inference on non-smooth,
non-differentiable, discrete programs.

\paragraph{Algebraic inference algorithms}
A number of PPL inference algorithms work by translating the probabilistic
program into an algebraic expression that encodes its probability distribution,
and then using symbolic algebra tools in order to manipulate that expression and
perform probabilistic inference. Examples include
Psi~\citep{gehr2016psi}, Hakaru~\citep{narayanan2016probabilistic}, and
approaches that employ algebraic decision diagrams~\citep{Claret2013,
dehnert2017storm}. Algebraic representations exploit fundamentally different
program structure from our approach based on weighted model counting; see
Section~\ref{sec:algebraic} for a discussion.

\paragraph{Graphical model compilation}
There exists a large number of PPLs that perform inference by converting the
program into a probabilistic graphical model~\cite{ pfeffer2009figaro,
McCallum2009, InferNET14, bornholt2014uncertain}. These compilation strategies
are limited by the semantics of graphical models: key program structure --- such
as functions, conditional branching, \emph{etc}. --- is usually lost during compilation
and so cannot be exploited during inference. Further, graphical
models can express conditional independence via the graphical structure, but
typical inference algorithms such as variable elimination cannot exploit more
subtle, context-specific forms of independence that our
approach exploits, as shown in Section~\ref{sec:indep}~\citep{Darwiche09}.

\paragraph{Probabilistic Logic Programs}
Closest to our approach are techniques for exact inference in probabilistic
logic programs~\cite{de2007problog, Riguzzi2011TPLP, Fierens2015,
Vlasselaer2015}. Similar to our work, these techniques reduce probabilistic
inference to weighted model counting and employ representations that support
efficient WMC, such as BDDs~\cite{Bryant86} or sentential decision
diagrams~\cite{darwiche2011sdd}. Unlike that work, \dice{} supports traditional
programming language constructs, including functions, and it supports
first-class observations rather than only observations at the very end of the
program. We show how to exploit functional abstraction for modular compilation,
and first-class observations require us to explicitly account for an {\em
accepting} probability in both the semantics and the compilation strategy.

\paragraph{Programmer-Guided Inference Decomposition}
Several PPLs provide a sublanguage that allows the programmer to provide information that can be used to decompose program inference into multiple separate parts~
\citep{pfeffer2018structured,Mansinghka2018,holtzen2018sound}. Hence the goal is similar in spirit to our goal of automated program factorization. These
approaches are complementary to ours: \dice{} automatically finds and exploits
program factorizations and local structure, while these approaches can perform sophisticated decompositions through explicit programmer guidance.

\paragraph{Static Analysis \& Model Checking}
Forms of symbolic model checking often represent the reachable state space of a
program as a BDD~\citep{Jhala2009, BiereModelChecking}. Our work can be thought
of as enriching this representation with probabilities: we track the possible
assignments to each \texttt{flip} and the accepting formula in order to do exact
Bayesian inference via WMC. Static analysis techniques have
also been generalized to analyze probabilistic programs. For example,
probabilistic abstract interpretation~\citep{Cousot2012} provides a general
framework for static analysis of probabilistic programs. However, these
techniques seek to acquire lower or upper bounds on probabilities, while we
target exact inference. Probabilistic model checking (PMC) is a mature
generalization of traditional model checking with multiple high-quality
implementations~\citep{dehnert2017storm, Kwiatkowska2011}. The goal of PMC is
typically to verify that a system meets a given probabilistic temporal logic
formula. They can also be used to perform probablistic inference, but they
have not used weighted model counting for inference and instead typically rely
on ADDs, which gives them different scaling properties than \dice{} as we discussed
earlier. 
\citet{vazquez2020_mce} recently described an approach to
learn Boolean task specifications on Markov decision processes. This work
shares some core technical machinery with our approach but differs markedly
in its goals and encoding strategy.

\section{Conclusion}
\label{sec:conclusion}
We presented a new approach to exact inference for discrete probabilistic
programs and implement it in the \dice{} probabilistic programming language. We (1) showed how to reduce exact
inference for \dice{} to weighted model counting, (2) proved this translation
correct, (3) demonstrated the performance of this inference strategy over
existing methods, and (4) characterized the efficiency of compiling \dice{} in
key scenarios.

In the future we hope to extend \dice{} in several ways. First, we believe that
the insights of \dice{} can be cleanly integrated into many existing
probabilistic programming systems, even those with approximate inference that
can handle continuous random variables. We see this as
an exciting avenue for extending the reach of approximate inference algorithms,
which currently struggle with discreteness. Second, we believe that \dice{} can
be extended to handle more powerful data structures and programming constructs,
notably forms of unbounded loops and recursion. And finally, we hope to further explore
the landscape of weighted model counting approaches.

\section*{Acknowledgments}
This work is partially supported by NSF grants \#IIS-1943641, \#IIS-1956441,
\#CCF-1837129, DARPA grant \#N66001-17-2-4032, a Sloan Fellowship, and gifts by
Intel and Facebook research. The authors would like to thank Jon Aytac and
Philip Johnson-Freyd for feedback on paper drafts.

\bibliography{bib}

\clearpage
\appendix
\section{Notation}
\label{app:notation}
\begin{description}
\item[Form] Denoted $\form_\tau(x)$, converts a syntactic variable $x$ into
  a tuple of type $\tau$:
  \begin{itemize}
  \item $\form_\bool(x) = \lx.$
  \item $\form_{\tau_1 \times \tau_2}(x) = (\form_{\tau_1}(x_l), \form_{\tau_2}(x_r))$.
  \end{itemize}
\item[Typed substitution] $\varphi_1[\lx \xmapsto{\tau} \tup_2]$, where $\Gamma(x)
  = \tau$ and $\Gamma \vdash \varphi_2 : \tau$. Intuitively, remaps each
  identifier in $\lx$ to its corresponding Boolean formula in $\tup_2.$
\begin{align*}
  \varphi_2[\lx \xmapsto{\bool} \varphi_1] \defeq \varphi_2[\lx \mapsto
\varphi_1],
\quad
\tup_2[\lx \xmapsto{\tau_{a} \times \tau_{b}} (\xpointphi_a,
\xpointphi_b)] \defeq \tup_2[\lx_l \xmapsto{\tau_{a}} \xpointphi_a][\lx_r \xmapsto{\tau_{b}}
\xpointphi_b],
\end{align*}
\begin{align*}
(\xpointphi_1, \xpointphi_2)[\lx \xmapsto{\tau} \xpointphi] \defeq 
(\xpointphi_1[\lx \xmapsto{\tau} \xpointphi], \xpointphi_2[\lx \xmapsto{\tau} \xpointphi]).
\end{align*}
\item[Broadcasted conjunction] $\varphi_a \xbroadand{\tau} \tup_b$, where
    $\Gamma(\varphi_a) = \bool$ and $\Gamma \vdash \tup_b : \tau$: conjoins a
    Boolean expression $\varphi_a$ with a tuple $\tup_b$. Intuitively, conjoins each
    element in $\tup_b$ with the Boolean expression $\varphi_a$:
    \begin{itemize}
    \item $\varphi_a \xbroadand{\bool} \tup_b \defeq \varphi_a \land \tup_b$.
    \item $\varphi_a \xbroadand{\tau_1 \times \tau_2} (\tup_b^l, \tup_b^r)
      \defeq \left(\varphi_a \xbroadand{\tau_1} \tup_b^l,~ \varphi_a \xbroadand{\tau_2} \tup_b^r\right)$
    \end{itemize}
  \item[Point-wise disjunction] $\tup_1 \xpointor{\tau} \tup_2$ where
    $\Gamma \vdash \tup_1 : \tau,  \Gamma \vdash \tup_2 : \tau$:
    \begin{itemize}
    \item $\tup_1 \xpointor{\bool} \tup_2 \defeq \tup_1 \lor \tup_2$.
    \item $(\tup_1^l, \tup_1^r) \xpointor{\tau_1 \times \tau_2}
      (\tup_2^l, \tup_2^r) \defeq (\tup_1^l \xpointor{\tau_1}
      \tup_2^l, \tup_1^r \xpointor{\tau_2} \tup_2^r)$.
    \end{itemize}
  \item[Pointwise iff] $\tup_1 \xLeftrightarrow{\tau} \tup_2$, where $\tup_1$ and $\tup_2$ are of
    type $\tau$.
    \begin{itemize}
    \item $\varphi_1 \xLeftrightarrow{\bool} \varphi_2 \defeq \varphi_1 \Leftrightarrow \varphi_2$.
    \item ${(\tup_1, \tup_2)}\xLeftrightarrow{\tau_1 \times \tau_2}{(\tup'_1, \tup'_2)} \defeq
      \left( {\tup_1}\xLeftrightarrow{\tau_1}{\tup_1'} \right) \land \left( {\tup_2}\xLeftrightarrow{\tau_2}{\tup_2'} \right)$.
    \end{itemize}
\end{description}

\section{Supplemental Experimental Results}
\begin{table}
 \begin{tabular}{lllll}
  \toprule
  <20 Nodes & 20---50 Nodes & 50---100 Nodes & 100---1000 Nodes & >1000 Nodes \\
   \midrule
   Survey (920B) & Alarm (18KB) & Hailfinder (91KB) & Pigs (213KB) & Munin (1.9MB) \\
   Cancer (725B) & Insurance (33KB) & Hepar2 (54KB) \\
              & Water (302KB) \\
                \bottomrule
\end{tabular}
\caption{Sizes of each discrete Bayesian network benchmark. Each network is
  annotated with a size in bytes (B) that is the size of the generated \dice{} program.}
\label{tbl:bnsize}
\end{table}

\begin{table*}

  \centering
  \begin{tabular}{l
    S[table-format=2.0, separate-uncertainty = true, table-figures-uncertainty=1]
    S[table-format=2.0, separate-uncertainty = true, table-figures-uncertainty=1]
    S[table-format=2.1, separate-uncertainty = true, table-figures-uncertainty=1]}
    \toprule
    Benchmark
    & {Psi (ms)}
    & {DP (ms)}
    & {\dice{} (ms)}
    \\
    \midrule
    Grass & 167 \pm 2 & 58 \pm 2 & 14.0 \pm 1.0 \\
    Burglar Alarm & 98 \pm 14 & 30 \pm 2 & 13 \pm 0.1 \\
    Coin Bias & 94 \pm 19 & 23 \pm 13 & 13.0 \pm 1.5 \\
    Noisy Or & 81 \pm 38 & 152 \pm 10 & 13.0 \pm 2.0 \\
    Evidence1 & 70 \pm 34 & 43 \pm 23 & 12.9 \pm 1.3 \\
    Evidence2 & 67 \pm 40 & 46 \pm 23 & 13.2 \pm 2.3  \\
    Murder Mystery & 193 \pm 33 & 75 \pm 10 & 13.6 \pm 1.6   \\
    \bottomrule
  \end{tabular}
  \caption{Comparison of inference algorithms on standard baselines (times are milliseconds). The
    reported time is the mean plus or minus a single standard deviation over 5 runs. }
  \label{tab:experiments-full}
\end{table*}

\begin{table*}
  \centering
  \begin{tabular}{l
    S[table-format=4.0, separate-uncertainty = true, table-figures-uncertainty=1]
    S[table-format=4.0, separate-uncertainty = true, table-figures-uncertainty=1]
    S[table-format=4.0, separate-uncertainty = true, table-figures-uncertainty=1]}
    \toprule
    Benchmark
    & {Psi (ms)}
    & {DP (ms)}
    & {\dice{} (ms)}
    \\
    \midrule
    Cancer~\citep{korb2010bayesian} & 772 \pm 60 & 46 \pm 2 & 13 \pm 3  \\
    Survey~\citep{scutari2014bayesian} & 2477 \pm 569 & 152 \pm 58 & 13 \pm 1 \\
    Alarm~\citep{beinlich1989alarm} & \xmark & \xmark & 25 \pm 3 \\
    Insurance~\citep{binder1997adaptive} & \xmark & \xmark & 212 \pm 12  \\
    Water~\citep{jensen1989expert} & \xmark & \xmark & 2590 \pm 21 \\
    Hailfinder~\citep{abramson1996hailfinder} &  \xmark & \xmark & 618 \pm 8 \\
    Hepar2~\citep{onisko2003probabilistic} & \xmark & \xmark & 48 \pm 6 \\
    Pigs & \xmark & \xmark & 72 \pm 2 \\
    Munin~\citep{andreassen1989munin} & \xmark & \xmark & 1866 \pm 27 \\
    \bottomrule
  \end{tabular}
  \caption{Comparison of inference algorithms on the single-marginal inference
    task (times are milliseconds). The reported time is the mean plus or minus a
    single standard deviation over 5 runs. A single standard deviation and the
    mean are reported.}
  \label{tab:bn-full}
\end{table*}

\begin{table}
  \centering
  \begin{tabular}{l
    S[table-format=10.0, separate-uncertainty = true, table-figures-uncertainty=1]
    S[table-format=10.0, separate-uncertainty = true, table-figures-uncertainty=1]}
    \toprule
    Benchmark
    & {\dice{} (ms)}
    & {\texttt{Ace} (ms)}
    \\
    \midrule
    Alarm & 159 \pm 12 & 422 \pm 32 \\
    Hailfinder & 1280 \pm 16 & 522 \pm 37 \\
    Insurance & 222 \pm 1 & 492 \pm 34 \\
    Hepar2& 163 \pm 3 & 495 \pm 17 \\
    Pigs & 11243 \pm 79 & 985 \pm 76 \\
    Water & 3320 \pm 118 & 605 \pm 10\\
    Munin & 4021194 \pm 2123290 & 3500 \pm 575 \\
    \bottomrule
  \end{tabular}
    \caption{\emph{All marginals}. A comparison between \dice{} and \texttt{Ace}
on the all-marginal discrete Bayesian network inference task. A single standard
deviation and the mean are reported.}
    \label{tbl:full-joint-full}
\end{table}

This section extends the experimental results by showing the mean and standard
deviation over at least 5 runs for all of the tables in the main body of the
paper. Table~\ref{tab:experiments-full} extends Table~\ref{tab:experiments},
Table~\ref{tab:bn-full} extends Table~\ref{tab:bnexperiments}, and
Table~\ref{tbl:full-joint-full} extends Table~\ref{tbl:full-joint}.

\section{Proofs}
\subsection{Key Lemmas}

\begin{lemma}[Independent Conjunction]
  \label{lem:wmcindconj}
  Let $\alpha$ and $\beta$ be Boolean sentences which share no variables; we
call such sentences \emph{independent}. Then, for any weight function $w$,
$\wmc(\alpha \land \beta, w) = \wmc(\alpha, w) \times \wmc(\beta, w)$.
\end{lemma}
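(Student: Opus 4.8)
The plan is to argue directly from the definition of weighted model counting (Definition~\ref{def:wmc}), exploiting the disjointness of the variable sets to set up a weight-multiplicative bijection between $\mods(\alpha \land \beta)$ and $\mods(\alpha) \times \mods(\beta)$. First I would fix notation: let $X$ be the set of variables occurring in $\alpha$ and $Y$ the set occurring in $\beta$, so that by hypothesis $X \cap Y = \emptyset$ and $\alpha \land \beta$ is a formula over $X \cup Y$. The key structural observation is that any model $\omega$ of $\alpha \land \beta$ --- a conjunction of literals mentioning every variable in $X \cup Y$ and entailing $\alpha \land \beta$ --- splits uniquely as $\omega = \omega_X \sqcup \omega_Y$, where $\omega_X$ collects the literals over $X$ and $\omega_Y$ the literals over $Y$; disjointness of $X$ and $Y$ is exactly what makes this split well-defined and the union disjoint. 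Since $\alpha$ mentions only variables in $X$, we have $\omega \models \alpha$ iff $\omega_X \models \alpha$, and symmetrically $\omega \models \beta$ iff $\omega_Y \models \beta$. Hence the map $\omega \mapsto (\omega_X, \omega_Y)$ is a bijection from $\mods(\alpha \land \beta)$ onto $\mods(\alpha) \times \mods(\beta)$ (surjectivity uses that for $\omega_X \in \mods(\alpha)$ and $\omega_Y \in \mods(\beta)$ the disjoint union $\omega_X \sqcup \omega_Y$ mentions every variable of $X \cup Y$ and entails both conjuncts).

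Next I would push this bijection through the weights. Because $\omega = \omega_X \sqcup \omega_Y$ as a set of literals, the definition $w(\omega) = \prod_{l \in \omega} w(l)$ factors as $w(\omega) = w(\omega_X)\cdot w(\omega_Y)$, and then the conclusion is a routine rearrangement of a finite double sum:
\begin{align*}
  \wmc(\alpha \land \beta, w)
  &= \sum_{\omega \in \mods(\alpha \land \beta)} w(\omega)
   = \sum_{\omega_X \in \mods(\alpha)} \; \sum_{\omega_Y \in \mods(\beta)} w(\omega_X)\, w(\omega_Y) \\
  &= \Big(\sum_{\omega_X \in \mods(\alpha)} w(\omega_X)\Big) \Big(\sum_{\omega_Y \in \mods(\beta)} w(\omega_Y)\Big)
   = \wmc(\alpha, w) \times \wmc(\beta, w).
\end{align*}

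I do not anticipate a serious obstacle here; the only thing requiring care is the bookkeeping about which variable set a ``model'' ranges over, together with the degenerate cases --- when $X$ or $Y$ is empty (the unique model is the empty conjunction, whose empty-product weight is $1$) and when $\alpha$ or $\beta$ is unsatisfiable (the corresponding factor is an empty sum equal to $0$, which matches the left-hand side). Both fall out of the same argument once the conventions are pinned down, so I would merely remark on them rather than treat them as separate cases. This lemma is then exactly the tool needed to reason about the \textsc{C-Let} and \textsc{C-Func} rules in the presence of conditional independence, as in Proposition~\ref{prop:composition}.
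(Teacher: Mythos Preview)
Your proposal is correct and follows essentially the same approach as the paper: split each model of $\alpha \land \beta$ into its $\alpha$- and $\beta$-parts along the disjoint variable sets, observe that the weight factors multiplicatively, and rearrange the resulting double sum. Your version is somewhat more careful about the bijection and edge cases than the paper's terse proof, but the argument is the same.
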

\begin{proof}
  The proof relies on the fact that, if two sentences $\alpha$ and $\beta$ share
no variables, then any model $\omega$ of $\alpha \land \beta$ can be split into
two components, $\omega_\alpha$ and $\omega_\beta$, such that $\omega =
\omega_\alpha \land \omega_\beta$, $\omega_\alpha \Rightarrow \alpha$, and
$\omega_\beta \Rightarrow \beta$, and $\omega_\alpha$ and $\omega_\beta$ share
no variables. Then:
    $\wmc(\alpha \land \beta, w)
    = \textstyle \sum_{\omega \in \mods(\alpha \land \beta)} \textstyle \prod_{l \in \omega} w(l)
    =\left[ \textstyle \sum_{\omega_\alpha \in \mods(\alpha)} \textstyle\prod_{a \in \omega_\alpha} w(a) \right] \times \left[  \textstyle \sum_{\omega_\beta \in \mods(\beta)} 
        \textstyle\prod_{b \in \omega_\beta } w(b)\right]
    = \wmc(\alpha, w) \times \wmc(\beta, w).$
\end{proof}
\begin{proposition}[Inclusion-Exclusion]
For any two formulas $\varphi_1$ and $\varphi_2$ and weight function $w$,
$\wmc(\varphi_1 \lor \varphi_2, w) = \wmc(\varphi_1, w) + \wmc(\varphi_2, w) -
\wmc(\varphi_1 \land \varphi_2, w)$. Note the important \emph{mutual exclusion}
case when $\varphi_1 \land \varphi_2 = \false$.
\end{proposition}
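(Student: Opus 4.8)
The plan is to reduce this to the elementary weighted inclusion–exclusion principle for finite sets, once we have arranged that all the relevant model sets live over a common set of variables. First I would fix a finite variable set $X$ containing every variable occurring in $\varphi_1$ or $\varphi_2$ and read $\mods(\cdot)$ throughout as the set of complete assignments over $X$; this is the natural reading since the statement concerns $\varphi_1$, $\varphi_2$, $\varphi_1 \lor \varphi_2$ and $\varphi_1 \land \varphi_2$ simultaneously, and $X$ may be taken to be the variable set of $\varphi_1 \lor \varphi_2$. (If one wants the statement for formulas declared over different variable sets, one first notes that padding a formula with an unconstrained fresh variable $z$ multiplies its weighted model count by $w(z) + w(\bar z)$, so after a common rescaling the identity is unaffected; but the common-$X$ reading is all that is needed where this proposition is used.)

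Next I would record the two purely propositional facts that, for a complete assignment $\omega$ over $X$, we have $\omega \models \varphi_1 \lor \varphi_2$ iff $\omega \models \varphi_1$ or $\omega \models \varphi_2$, and $\omega \models \varphi_1 \land \varphi_2$ iff $\omega \models \varphi_1$ and $\omega \models \varphi_2$. Hence, as subsets of the same finite set of assignments, $\mods(\varphi_1 \lor \varphi_2) = \mods(\varphi_1) \cup \mods(\varphi_2)$ and $\mods(\varphi_1 \land \varphi_2) = \mods(\varphi_1) \cap \mods(\varphi_2)$.

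The core step is then the observation that for any function $g$ defined on a finite set and any two finite subsets $A$ and $B$,
\[
\sum_{\omega \in A \cup B} g(\omega) = \sum_{\omega \in A} g(\omega) + \sum_{\omega \in B} g(\omega) - \sum_{\omega \in A \cap B} g(\omega),
\]
which follows by splitting $A \cup B$ into the disjoint pieces $A \setminus B$, $B \setminus A$, and $A \cap B$ and noting that each element of $A \cap B$ is counted once on the left-hand side but twice among the first two sums, so the last sum is exactly the needed correction. Instantiating $g(\omega) \defeq w(\omega) = \prod_{l \in \omega} w(l)$, $A \defeq \mods(\varphi_1)$, and $B \defeq \mods(\varphi_2)$, and unfolding Definition~\ref{def:wmc} together with the two set identities from the previous paragraph, yields exactly $\wmc(\varphi_1 \lor \varphi_2, w) = \wmc(\varphi_1, w) + \wmc(\varphi_2, w) - \wmc(\varphi_1 \land \varphi_2, w)$. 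For the mutual-exclusion remark, if $\varphi_1 \land \varphi_2 = \false$ then $\mods(\varphi_1 \land \varphi_2) = \emptyset$ and the last term vanishes, leaving $\wmc(\varphi_1 \lor \varphi_2, w) = \wmc(\varphi_1, w) + \wmc(\varphi_2, w)$.

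There is no genuinely hard step here; the only thing that needs care is the bookkeeping about variable sets in the first paragraph, since $\mods(\varphi)$ is only meaningful relative to a declared variable set and all three WMC quantities in the statement must be taken over the \emph{same} one for the model-set identities to hold literally.
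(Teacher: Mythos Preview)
Your proof is correct. The paper states this proposition without proof, treating it as a standard fact; your argument via the set identities $\mods(\varphi_1 \lor \varphi_2) = \mods(\varphi_1) \cup \mods(\varphi_2)$ and $\mods(\varphi_1 \land \varphi_2) = \mods(\varphi_1) \cap \mods(\varphi_2)$ together with ordinary weighted inclusion--exclusion is exactly the intended elementary justification, and your care about fixing a common variable set $X$ is appropriate.
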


\subsection{Correctness of Expression Compilation}
\label{sec:pf_expr_correct}
\begin{lemma}[Value Correctness]
  \label{lem:value_correct}
  For any values $v$ and $v'$ of type $\tau$, $\dbracket{v}(v') = \wmc(v
  \xLeftrightarrow{\tau} v', \emptyset)$.
\end{lemma}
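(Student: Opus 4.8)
The plan is to induct on the type $\tau$ (equivalently, on the structure of the value $v$), showing in each case that both sides equal the indicator of the event $v = v'$. Recall from the semantics that $\dbracket{v}(v') = \big(\delta(v)\big)(v')$, which is $1$ when $v = v'$ and $0$ otherwise, and note that since $v$ and $v'$ are values, $v \xLeftrightarrow{\tau} v'$ is built entirely from the Boolean constants $\true$ and $\false$ and so is a variable-free formula; hence its weighted model count does not depend on $w$ and collapses to either $0$ or $1$.

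For the base case $\tau = \bool$, the only values are $\true$ and $\false$, which (via the form function) are the constant formulas $\true$ and $\false$, so $v \xLeftrightarrow{\bool} v'$ unfolds to $v \Leftrightarrow v'$. This variable-free formula is logically equivalent to $\true$ when $v = v'$ and to $\false$ otherwise. A variable-free formula equivalent to $\true$ has exactly one model, the empty conjunction, whose weight is the empty product $1$; one equivalent to $\false$ has no models. Thus $\wmc(v \xLeftrightarrow{\bool} v', \emptyset)$ is $1$ if $v = v'$ and $0$ otherwise, matching $\big(\delta(v)\big)(v') = \dbracket{v}(v')$. For the inductive case $\tau = \tau_1 \times \tau_2$, both $v$ and $v'$ must be pairs, say $v = (v_1, v_2)$ and $v' = (v_1', v_2')$ with $v_i, v_i'$ of type $\tau_i$. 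By definition of pointwise iff, $v \xLeftrightarrow{\tau} v' = (v_1 \xLeftrightarrow{\tau_1} v_1') \land (v_2 \xLeftrightarrow{\tau_2} v_2')$, and the two conjuncts share no variables (each is variable-free), so Lemma~\ref{lem:wmcindconj} gives $\wmc(v \xLeftrightarrow{\tau} v', \emptyset) = \wmc(v_1 \xLeftrightarrow{\tau_1} v_1', \emptyset) \cdot \wmc(v_2 \xLeftrightarrow{\tau_2} v_2', \emptyset)$. Applying the induction hypothesis to each factor yields $\dbracket{v_1}(v_1') \cdot \dbracket{v_2}(v_2') = \big(\delta(v_1)\big)(v_1') \cdot \big(\delta(v_2)\big)(v_2')$, which is $1$ exactly when $v_1 = v_1'$ and $v_2 = v_2'$, i.e.\ when $(v_1,v_2) = (v_1',v_2')$, and $0$ otherwise; hence it equals $\big(\delta(v)\big)(v') = \dbracket{v}(v')$, closing the induction.

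I do not anticipate any real obstacle here: the proof is a short structural induction, and the only points needing a little care are (i) observing that $v \xLeftrightarrow{\tau} v'$ is always a constant (variable-free) formula, so its weighted model count is $0$ or $1$ independently of the weight function, and (ii) invoking the independent-conjunction lemma (Lemma~\ref{lem:wmcindconj}) with the empty weight function to split the product-type case. Both are routine, and this lemma is intended mainly as a leaf case feeding into the inductive proofs of Lemma~\ref{lem:bool_cor} and Lemma~\ref{lem:correct_no_proc}.
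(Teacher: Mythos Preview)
Your proposal is correct and follows essentially the same approach as the paper: induction on $\tau$, with the Boolean base case handled by observing that $v \Leftrightarrow v'$ is a constant formula (the paper does this via explicit case analysis on the four $\true/\false$ combinations), and the product case handled by unfolding pointwise iff and applying the independent-conjunction lemma (Lemma~\ref{lem:wmcindconj}) together with the induction hypothesis. Your framing via the indicator $\delta(v)(v')$ is slightly more uniform than the paper's, but the argument is the same.
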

\begin{proof}
  By induction on $\tau$:
  \begin{itemize}
  \item $\tau = \bool$. Then case analysis:
    \begin{itemize}
    \item $\dbracket{\true}(\true) = 1 = \wmc(\true \Leftrightarrow \true, \emptyset)$
    \item $\dbracket{\true}(\false) = 0 =\wmc(\true \Leftrightarrow \false, \emptyset)$
    \item $\dbracket{\false}(\false) = 1 = \wmc(\false\Leftrightarrow \false, \emptyset)$
    \item $\dbracket{\false}(\true) = 0 = \wmc(\false\Leftrightarrow \true, \emptyset)$
    \end{itemize}
  \item Inductive step: $\tau = \tau_1 \times \tau_2$. Then,
    \begin{align*}
      \dbracket{(v_1,v_2)}((v_1', v_2'))
      &= \dbracket{v_1}(v_1') \times \dbracket{v_2}(v_2') \\
      &= \wmc(v_1 \xLeftrightarrow{\tau_1} v_1', \emptyset) \times
        \wmc(v_2 \xLeftrightarrow{\tau_1} v_2', \emptyset) & \text{Induction Hyp.} \\
      &= \wmc(v_1 \xLeftrightarrow{\tau_1} v_1' \land v_2 \xLeftrightarrow{\tau_1} v_2', \emptyset)
      & \text{Independent Conj.} \\
      &= \wmc((v_1, v_2) \xLeftrightarrow{\tau_1 \times \tau_2} (v_1', v_2'), \emptyset).
    \end{align*}
  \end{itemize}
\end{proof}

\begin{lemma}[Typed Substitution]
  \label{lem:subst}
  For any values $v, v_x : \tau$, it holds that $(v \xLeftrightarrow{\tau} v_x) =
  (\form_{\tau}(x) \xLeftrightarrow{\tau} v)[\lx \xmapsto{\tau} v_x]$.
\end{lemma}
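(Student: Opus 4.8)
The plan is a straightforward structural induction on the type $\tau$, running the inductive definitions of the form function $\form$, pointwise iff $\xLeftrightarrow{}$, and typed substitution in lockstep. I would first strengthen the statement to hold for \emph{every} logical variable name rather than the fixed $x$, so that in the product case I can legitimately invoke the induction hypothesis on the subscripted variables $x_l$ and $x_r$; the proof as stated is then the special case.

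For the base case $\tau = \bool$, unfold $\form_\bool(x) = \lx$, so the right-hand side is $(\lx \Leftrightarrow v)[\lx \xmapsto{\bool} v_x] = (\lx \Leftrightarrow v)[\lx \mapsto v_x]$. Since $v_x$ is a Boolean constant, this substitution produces $v_x \Leftrightarrow v$, which is the same formula as $v \xLeftrightarrow{\bool} v_x = v \Leftrightarrow v_x$ by commutativity of $\Leftrightarrow$; commutativity is harmless, since this identity is only ever used inside a $\wmc$ call, which is invariant under logical equivalence.

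For the inductive step $\tau = \tau_1 \times \tau_2$, write $v = (v_1, v_2)$, $v_x = (v_{x,1}, v_{x,2})$ with $v_i, v_{x,i} : \tau_i$, and recall $\form_\tau(x) = (\form_{\tau_1}(x_l), \form_{\tau_2}(x_r))$. Unfolding pointwise iff and then typed substitution, the right-hand side becomes
\[
  \Big[(\form_{\tau_1}(x_l) \xLeftrightarrow{\tau_1} v_1) \land (\form_{\tau_2}(x_r) \xLeftrightarrow{\tau_2} v_2)\Big][\lx_l \xmapsto{\tau_1} v_{x,1}][\lx_r \xmapsto{\tau_2} v_{x,2}].
\]
I would then observe that $\form_{\tau_1}(x_l) \xLeftrightarrow{\tau_1} v_1$ mentions only logical variables descended from $x_l$ — the form function only introduces subscripted copies of its argument, and values carry no variables — and symmetrically the second conjunct mentions only variables descended from $x_r$. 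Hence the two substitutions act on disjoint conjuncts and commute, reducing the expression to
\[
  \big[(\form_{\tau_1}(x_l) \xLeftrightarrow{\tau_1} v_1)[\lx_l \xmapsto{\tau_1} v_{x,1}]\big] \land \big[(\form_{\tau_2}(x_r) \xLeftrightarrow{\tau_2} v_2)[\lx_r \xmapsto{\tau_2} v_{x,2}]\big].
\]
Applying the induction hypothesis to each conjunct (with variables $x_l, x_r$ and types $\tau_1, \tau_2$) rewrites this as $(v_1 \xLeftrightarrow{\tau_1} v_{x,1}) \land (v_2 \xLeftrightarrow{\tau_2} v_{x,2})$, which is exactly $v \xLeftrightarrow{\tau} v_x$ by the definition of pointwise iff.

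The only genuinely non-mechanical point is the commutation of the two substitutions, which rests on a small side fact that $\form_{\tau_1}(x_l)$ lives entirely in the $x_l$-namespace and $\form_{\tau_2}(x_r)$ in the $x_r$-namespace; this is immediate from the definition of $\form$ together with the convention, noted when $\form$ was introduced, that the $x_l/x_r$ subscripts lexically keep the left and right components disjoint. Everything else is routine unfolding of the three inductive definitions, so I do not anticipate any real difficulty.
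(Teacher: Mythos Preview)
Your proposal is correct and follows essentially the same approach as the paper's own proof: structural induction on $\tau$, unfolding the definitions of $\form$, pointwise iff, and typed substitution, and applying the induction hypothesis to each conjunct in the product case. Your explicit remarks about strengthening the claim to arbitrary variable names and about why the two substitutions act on disjoint conjuncts are welcome clarifications that the paper leaves implicit, but the overall argument is the same.
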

\begin{proof}
  By induction on $\tau$:
  \begin{itemize}
  \item $\tau = \bool$. Then, $(v \Leftrightarrow v_x) = (v \Leftrightarrow \lx)[\lx
    \mapsto v_x] = (v \Leftrightarrow \form_\bool(x))[\lx \mapsto v_x]$.
  \item $\tau = \tau_1 \times \tau_2$. Then, let $v = (v^l, v^r)$ and $v_x =
    (v_x^l, v_x^r)$. Then,
    \begin{align*}
      (v^l, v^r) \xLeftrightarrow{\tau_1 \times \tau_2} (v_x^l, v_x^r)
      &= (v^l \xLeftrightarrow{\tau_1} v_x^l) \land (v^r \xLeftrightarrow{\tau_2} v_x^r) \\
      &= (v^l \xLeftrightarrow{\tau_1} \form_{\tau_1}(x_l))[\lx_l \xmapsto{\tau_1} v_x^l] \land
        (v^r \xLeftrightarrow{\tau_2} \form_{\tau_2}(x_r))[\lx_r \xmapsto{\tau_1} v_x^r]  & \text{Ind. Hyp.}\\
      &=(v^l \xLeftrightarrow{\tau_1} \form_{\tau_1}(x_l) \land (v^r \xLeftrightarrow{\tau_2} \form_{\tau_2}(x_r)))
        [\lx_l \xmapsto{\tau_1} v_x^l][\lx_r \xmapsto{\tau_1} v_x^r]  & \\
      &= ((v^l, v^r) \xLeftrightarrow{\tau_1 \times \tau_2} \form_{\tau_1 \times \tau_2}(x))
        [\lx \xmapsto{\tau_1 \times \tau_2} (v_x^l, v_x^r)].
    \end{align*}
  \end{itemize}
\end{proof}

\begin{lemma}[Typed Correctness Without Procedures]
  Let $\te$ be a \dice{} expression without procedure calls.
  Let $\{x_i : \tau_i\} \vdash \te : \tau \rightsquigarrow (\tup, \obs, w)$. Then for
  any values $\{v_i : \tau_i\}$ and $v : \tau$, we have that $\dbracket{\te[x_i \mapsto v_i]}(v)
  = \wmc\left(\big(({v}\xLeftrightarrow{\tau} {\varphi}) \land \obs\big)[\lx_i \xmapsto{\tau_i} v_i], w\right)$.
\end{lemma}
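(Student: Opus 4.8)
The plan is structural induction on $\te$, case-splitting on the last compilation rule used (Figures~\ref{fig:comp_bool} and~\ref{fig:comp_typed}, restricted to the procedure-free fragment). Two bookkeeping facts about the compiled weight functions will be used throughout: every variable in the domain of a compiled $w$ is a fresh \texttt{flip} variable $f$ with $w(f)+w(\overline{f})=1$, and the free-variable encodings $\lx_i$ never occur in the domain of $w$, so after applying $[\lx_i\xmapsto{\tau_i} v_i]$ the formulas mention only \texttt{flip} variables. An immediate corollary of Lemma~\ref{lem:wmcindconj} is a \emph{weight-extension} fact: adjoining to $w$ extra \texttt{flip} variables that do not occur in $\psi$ leaves $\wmc(\psi,w)$ unchanged, so we may pass freely between $w_1$, $w_2$ and $w_1\cup w_2$ whenever a formula's variables permit.

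\emph{Base and atomic cases} (values, identifiers, $\Lfst x$, $\Lsnd x$, tuple formation) are routine: after $[x_i\mapsto v_i]$ the expression is, or immediately reduces to, a value, so its semantics is a Dirac distribution, and $[\lx_i\xmapsto{\tau_i} v_i]$ turns the compiled form-function tuple into the matching value tuple (componentwise Lemma~\ref{lem:subst}); both sides then collapse to Lemma~\ref{lem:value_correct}. For $\Lflip\theta\comp(f,\true,w)$ with $f$ fresh, a split on $v\in\{\true,\false\}$ matches $\dbracket{\Lflip\theta}(\true)=\theta=\wmc(f\Leftrightarrow\true,w)$ and likewise for $\false$. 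For $\Lobs{\tt aexp}\comp(\true,\varphi,\emptyset)$: after substitution $\varphi$ is a constant, $\dbracket{\Lobs{\tt aexp}}(v)=1$ exactly when ${\tt aexp}=\true$ and $v=\true$, and $\wmc((\true\Leftrightarrow v)\land\varphi,\emptyset)$ is $1$ in exactly the same situation.

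\emph{Conditional.} By A-normal form the guard of $\Lite{\tt aexp}{\te_T}{\te_E}$ is an \texttt{aexp}, so after $[\lx_i\xmapsto{\tau_i} v_i]$ the guard formula $\varphi_g$ is $\true$ or $\false$. If $\varphi_g=\true$, then $(\varphi_g\xbroadand{\tau}\xpointphi_T)\xpointor{\tau}(\overline{\varphi}_g\xbroadand{\tau}\xpointphi_E)$ is logically equivalent to $\xpointphi_T$ and $(\varphi_g\land\obs_T)\lor(\overline{\varphi}_g\land\obs_E)$ to $\obs_T$; using the weight-extension fact to replace $w_T\cup w_E$ by $w_T$, the right-hand side becomes exactly the quantity that the induction hypothesis for $\te_T$ equates with $\dbracket{\te_T[x_i\mapsto v_i]}(v)=\dbracket{(\Lite{\true}{\te_T}{\te_E})[x_i\mapsto v_i]}(v)$. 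The $\varphi_g=\false$ case is symmetric. (One could also dispense with the A-normal-form appeal by using the Inclusion-Exclusion proposition together with mutual exclusion of $\varphi_g\land\obs_T$ and $\overline{\varphi}_g\land\obs_E$.)

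\emph{Let --- the main obstacle.} For $\Llet{x:\tau_1=\te_1}{\te_2}$ the semantics is $\sum_{v':\tau_1}\dbracket{\te_1[x_i\mapsto v_i]}(v')\times\dbracket{\te_2[x_i\mapsto v_i][x\mapsto v']}(v)$; the induction hypotheses for $\te_1$ and $\te_2$ turn each summand into a product of two weighted model counts, and the remaining task is to collapse this sum of products into $\wmc(((\xpointphi_2[\lx\xmapsto{\tau_1}\xpointphi_1]\xLeftrightarrow{\tau} v)\land\obs_1\land\obs_2[\lx\xmapsto{\tau_1}\xpointphi_1])[\lx_i\xmapsto{\tau_i} v_i],w_1\cup w_2)$. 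I would factor this through an auxiliary \emph{WMC-substitution lemma}: if $\alpha$ is a formula over the form-variables of $x$ together with \texttt{flip} variables $V_2$, $\xpointphi_1$ a tuple of formulas over a disjoint \texttt{flip}-variable set $V_1$, and $w_1$, $w_2$ have domains $V_1$, $V_2$, then $\wmc(\alpha[\lx\xmapsto{\tau_1}\xpointphi_1],w_1\cup w_2)=\sum_{v':\tau_1}\wmc((\xpointphi_1\xLeftrightarrow{\tau_1} v'),w_1)\cdot\wmc(\alpha[\lx\xmapsto{\tau_1} v'],w_2)$. Its proof splits each model $\omega$ of $\alpha[\lx\xmapsto{\tau_1}\xpointphi_1]$ along $V_1\uplus V_2$ (Lemma~\ref{lem:wmcindconj}): the $V_1$-part of $\omega$ pins down the unique value $v'$ with $\omega\models(\xpointphi_1\xLeftrightarrow{\tau_1} v')$ --- this is where the definitions of the form function and pointwise-iff enter --- and then $\omega\models\alpha[\lx\xmapsto{\tau_1}\xpointphi_1]$ iff the $V_2$-part of $\omega$ models $\alpha[\lx\xmapsto{\tau_1} v']$, so grouping the model-sum by $v'$ gives the identity. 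Instantiating with $\alpha=(\xpointphi_2\xLeftrightarrow{\tau} v)\land\obs_2$ (after first peeling off $\obs_1$, which shares no \texttt{flip} variables with the rest, via Lemma~\ref{lem:wmcindconj}), and matching the two factors against the induction hypotheses for $\te_1$ and $\te_2$, closes the case. The genuinely delicate point is that the two substitutions $[\lx_i\xmapsto{\tau_i} v_i]$ and $[\lx\xmapsto{\tau_1}\xpointphi_1]$ must not interfere: they act on disjoint variable sets and commute, which I would record as a short syntactic sublemma up front; this commutation, together with the variable-disjointness bookkeeping from the first paragraph, is where essentially all the care in the proof is concentrated.
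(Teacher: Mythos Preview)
Your approach is essentially the paper's: structural induction with the same case split, the atomic cases handled via Lemmas~\ref{lem:value_correct} and~\ref{lem:subst}, and the \texttt{let} case reduced to a Shannon-expansion identity that the paper asserts inline and you package as a ``WMC-substitution lemma.'' Your treatment of the conditional (case-splitting on the concrete value of the substituted guard) is a legitimate shortcut of the paper's sum-of-products derivation; the paper instead applies the induction hypothesis to both branches, then Independent Conjunction, then Mutual Exclusion.

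There is one slip in the \texttt{let} case. You write that you will ``peel off $\obs_1$, which shares no \texttt{flip} variables with the rest, via Lemma~\ref{lem:wmcindconj}'' before applying your WMC-substitution lemma with $\alpha=(\xpointphi_2\xLeftrightarrow{\tau} v)\land\obs_2$. But $\obs_1$ is built over the \texttt{flip} variables $V_1$ of $\te_1$, and after the substitution $[\lx\xmapsto{\tau_1}\xpointphi_1]$ the formula $\alpha[\lx\xmapsto{\tau_1}\xpointphi_1]$ \emph{also} mentions $V_1$ through $\xpointphi_1$; Independent Conjunction therefore does not apply and the peeling step fails. Relatedly, the induction hypothesis for $\te_1$ yields $\wmc((\xpointphi_1\xLeftrightarrow{\tau_1} v')\land\obs_1,w_1)$, not $\wmc((\xpointphi_1\xLeftrightarrow{\tau_1} v'),w_1)$, so your lemma as stated does not match it. The fix is to absorb $\obs_1$ into the lemma:
\[
\wmc\big(\obs_1\land\alpha[\lx\xmapsto{\tau_1}\xpointphi_1],\,w_1\cup w_2\big)
=\sum_{v':\tau_1}\wmc\big((\xpointphi_1\xLeftrightarrow{\tau_1} v')\land\obs_1,\,w_1\big)\cdot\wmc\big(\alpha[\lx\xmapsto{\tau_1} v'],\,w_2\big),
\]
proved by your same model-splitting argument with $\obs_1$ carried on the $V_1$ side. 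This is precisely what the paper's inline chain (Independent Conjunction, then Mutual Exclusion, then the final unjustified logical simplification) accomplishes.
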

\begin{proof}
The proof is by structural induction on the syntax of Boolean \dice{} programs.
First, we prove that the theorem holds for the non-inductive terms:
  \begin{itemize}[leftmargin=*]
  \item $\te = \true$ and $\te = \false$ follow directly from Lemma~\ref{lem:value_correct}.
  \item $\te = \Lflip{\theta}.$  Then, $\Gamma \vdash \Lflip{\theta} : \bool \rightsquigarrow
    (\mathbf{f}, \true, w)$ for a fresh $\mathbf{f}$. Then, $\wmc(\mathbf{f}
    \land \true, w) = \theta = \dbracket{\Lflip{\theta}}(\true)$ and
    $\wmc(\overline{f}, w) = 1-\theta = \dbracket{\Lflip{\theta}}(\false)$.
  \item $\te = x$. Then, $\Gamma \vdash x : \tau \rightsquigarrow (\tup, \true,
    \emptyset)$, and let $v_x : \tau$ be the value substituted for $x$.
    \begin{align*}
      \dbracket{x[x \xmapsto{\tau} v_x]}(v)
      &= \dbracket{v_x}(v)\\
      &=  \wmc(({v_x} \xLeftrightarrow{\tau} {v}) \land \true, \emptyset) & \text{Lemma~\ref{lem:value_correct}}\\
      &= \wmc\Big(((\form_\tau(x) \xLeftrightarrow{\tau} v) \land \true)[\lx \xmapsto{\tau} v_x], \emptyset\Big) & \text{Lemma~\ref{lem:subst}}
    \end{align*}
  \item $\te = \Lfst{x}$. Assume $\Gamma(x) = \tau_1 \times \tau_2$. Then, $\Gamma \vdash \Lfst{x} : \tau_1 \rightsquigarrow
    (\form_{\tau_1}(x_l), \true, \emptyset)$. Let $v_x = (v_x^l, v_x^r) : \tau_1
    \times \tau_2$ be the value substituted for $x$. Then,
    \begin{align*}
      \dbracket{\Lfst{x}[x \xmapsto{\tau \times \tau'} v_x]}(v)
      &= \dbracket{v_x^l}(v)\\
      &=\wmc((v_x^l \xLeftrightarrow{\tau} v) \land \true, \emptyset) &\text{Lemma~\ref{lem:value_correct}}\\
      &= \wmc\Big(\big((\form_{\tau}(x_l) \xLeftrightarrow{\tau} v)  \land \true\big)[\lx \xmapsto{\tau \times \tau'} v_x], \emptyset\Big)
        & \text{Lemma~\ref{lem:subst}}
    \end{align*}
    An analogous argument holds for $\Lsnd{x}$.
  \item $\te = (x_1, x_2)$. Then, $\Gamma \vdash (x_1, x_2) : \tau_1 \times
    \tau_2 \rightsquigarrow ((\form_{\tau_1}(x_1), \form_{\tau_2}(x_2)), \true, \emptyset)$. Let $v_1 : \tau_1$
    and $v_2: \tau_2$ be the value substituted for $x_1$ and $x_2$ respectively,
    and let $v = (v^l, v^r)$.
    Then,
    \begin{align*}
      &\dbracket{(x_1, x_2)[x_1 \xmapsto{\tau_1} v_1, x_2 \xmapsto{\tau_2} v_2]}((v^l, v^r))\\
      &= \dbracket{(v_1, v_2)}(v^l, v^r)\\
      &= \wmc\Big(({v_1} \xLeftrightarrow{\tau_1} {v^l}) \land (v_2 \xLeftrightarrow{\tau_2} v^r) \land \true, \emptyset\Big) 
      & \text{Lemma~\ref{lem:value_correct}} \\
      &= \wmc\Big(\big(\form_{\tau_1}(x_1) \xLeftrightarrow{\tau_1} v^l \big) \land
        \big(\form_{\tau_2}(x_2) \xLeftrightarrow{\tau_2} v^r \big) \land \true
        [x_1 \xmapsto{\tau_1} v_1, x_2 \xmapsto{\tau_2} v_2], \emptyset\Big) & \text{Lemma~\ref{lem:subst}}
    \end{align*}
  \end{itemize}

  \noindent
  Now for the inductive terms:

  \begin{itemize}[leftmargin=*]
  \item $\te = \Llet{\te_1}{\te_2}$. Assume $\Gamma \vdash \te_1 : \tau_1 \rightsquigarrow (\tup_1,
    \obs_1, w_1)$ and $\Gamma \cup \{x : \tau_1\} \vdash \te_2 : \tau_2 \rightsquigarrow (\tup_2, \obs_2, w_2)$.
    For notational simplicity, assume that the substitution $[x_i \xmapsto{\tau_i} v_i]$
    has been applied to $\tup_1, \obs_1, \tup_2, \obs_2$, and that all
    weighted model counts are performed with the weight $w_1 \cup w_2$. Then,
    \begin{align*}
      &\dbracket{(\Llet{x=\te_1}{\te_2})[x_i \mapsto v_i]}(\true)\\
      &= \sum_{v}\dbracket{\te_1[x_i \mapsto v_i]}(v) \times \dbracket{\te_2[x_i \mapsto v_i, x \mapsto v]}(\true)\\
      &= \sum_{v_x \in \tau_1}\wmc\big((\tup_1 \xLeftrightarrow{\tau_1} v_x) \land \obs_1 \big)
        \times \wmc\big(((\tup_2 \xLeftrightarrow{\tau_2} v)\land \obs_2)[\lx \xmapsto{\tau_1} v_x]\big)
      & \text{Ind. Hyp.} \\
      &= \sum_{v_x \in \tau_1}\wmc\Big((\tup_1 \xLeftrightarrow{\tau_1} v_x) \land \obs_1 
        \land \big((\tup_2 \xLeftrightarrow{\tau_2} v)\land \obs_2\big)[\lx \xmapsto{\tau_1} v_x] \Big)
      &\text{Indep. Conj.}\\
      &= \wmc\Big(\bigvee_{{v_x \in \tau_1}}(\tup_1 \xLeftrightarrow{\tau_1} v_x) \land \obs_1 
        \land \big((\tup_2 \xLeftrightarrow{\tau_2} v)\land \obs_2\big)[\lx \xmapsto{\tau_1} v_x] \Big)
      &\text{Mut. Excl.}\\
      &= \wmc\left( ((\tup_2 \xLeftrightarrow{\tau_2} v_2) \land \obs_1 \land \obs_2)[\lx \xmapsto{\tau_1} \tup_1] \right)
    \end{align*}

  \item $\te = \Lobs{g}$. Assume $\Gamma \vdash g : \bool \rightsquigarrow (\varphi, \true, w)$.
    This case relies on interpreting the semantics of $\dbracket{\Lobs{g}[x_i
      \mapsto v_i]}(v)$ as $\dbracket{g[x_i \mapsto v_i]}(\true) \times \dbracket{\true}(v)$.
    Then,
    \begin{align*}
      \dbracket{\Lobs{g}[x_i \mapsto v_i]}(v)
      &= \dbracket{g[x_i \mapsto v_i]}(\true) \times \dbracket{\true}(v) \\
      &= \wmc(\varphi \land \true, w) \times \wmc(v \land \true). & \text{Ind. Hyp.}\\ 
      &= \wmc(\varphi \land v, w). & \text{Indep. Conj.}
    \end{align*}
  \item 
    $\te = \Lite{g}{\te_T}{\te_E}$. Assume $\Gamma \vdash g : \bool \rightsquigarrow(\varphi_g, \true,
    w_g)$, $\Gamma \vdash \te_T : \tau \rightsquigarrow (\tup_T, \obs_T, w_T)$, 
    $\Gamma \vdash \te_E : \tau \rightsquigarrow (\tup_E, \obs_E, w_E)$. Again assume for
    notational simplicity that all weighted model counts are performed with the
    weight function $w_g \cup w_2 \cup w_g$ and that the substitutions $[x_i
    \xmapsto{\tau_i} v_i]$ have been performed on the compiled formulas. Then,
    \begin{align*}
      &\dbracket{\Lite{g}{\te_T}{\te_E}}(v)\\
      &= \dbracket{g}(\true) \times \dbracket{\te_T}(v) +
         \dbracket{g}(\false) \times \dbracket{\te_E}(v) \\
      &= \wmc(\varphi_g \land \true) \times \wmc((\tup_T \xLeftrightarrow{\tau} v) \land \obs_T) +
        \wmc(\overline{\varphi}_g \land \true) \times \wmc((\tup_E \xLeftrightarrow{\tau} v) \land \obs_E) & \text{Ind. Hyp.}\\
      &= \wmc(\varphi_g \land (\tup_T \xLeftrightarrow{\tau} v) \land \obs_T) +
        \wmc(\overline{\varphi}_g \land (\tup_E \xLeftrightarrow{\tau} v) \land \obs_E) & \text{Indep. Conj.}\\
      &= \wmc((\varphi_g \land (\tup_T \xLeftrightarrow{\tau} v) \land \obs_T) \lor
        (\overline{\varphi}_g \land (\tup_E \xLeftrightarrow{\tau} v) \land \obs_E)) & \text{Mut. Excl.}\\
      &= \wmc\left(\Big((\varphi_g \xbroadand{\tau} \tup_T) \xpointor{\tau}
        (\overline{\varphi}_g \xbroadand{\tau} \tup_E)\Big) \xLeftrightarrow{\tau} v \land
        \Big((\varphi_g \land \obs_T) \lor (\overline{\varphi}_g\land\obs_E)\Big)\right)
    \end{align*}
  \end{itemize}
\end{proof}

\subsection{Theorem~\ref{thm:prog_correctness}}
\label{app:prog_correct}
First we extend Lemma~\ref{lem:correct_no_proc} to show that Boolean function call
compilation is correct. First we need some preliminaries. The semantics and
compilation of an expression can only be compared if the function context they
are compiled in is \emph{compatible}:
\begin{definition}[Table Compatibility]
  Let $\Phi$ be a compiled function table, $T$ be a function table, and $\Gamma$
be a type environment. Then we say $T$ and $\Phi$ are \emph{compatible} if for
any function identifier $x$, where $\Gamma(x) = \tau_1 \rightarrow \tau_2$ and
$\Phi(x) = (\lx, \tup, \obs, w)$, it holds for any argument value $v^x:
\tau_1$ and value $v:\tau_2$, $T(x)(v^x)(v) = \wmc\big(((\tup
\xLeftrightarrow{\tau_2} v) \land \obs)[\lx \xmapsto{\tau_1} v^x], w\big)$.
\end{definition}

Then, we can extend Lemma~\ref{lem:correct_no_proc} to assume compatible tables:
\begin{theorem}[Boolean Correctness with Procedure Calls]
  \label{thm:correctness_fcall}
  Let $\te$ be a \dice{} expression with function calls, $T$ and $\Phi$ be
compatible tables, let $\{x_i: \tau_i\}, \Phi \vdash \te : \tau \comp (\varphi, \obs,
w)$. Then, for any values $\{v_i : \tau_i\}$ and $v:\tau$, we have that 
$\dbracket{\te[x_i \mapsto v_i]}(v) = \wmc\Big(\big((\varphi \xLeftrightarrow{\tau}
v) \land \obs\big) [\lx_i \xmapsto{\tau_i} v_i]\Big)$.
\end{theorem}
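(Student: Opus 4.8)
The plan is to prove this by structural induction on the syntax of $\te$, reusing the proof of Lemma~\ref{lem:correct_no_proc} almost verbatim. The key observation is that the only expression form not already treated there is the function call $f(x_1)$; for every other form---values, $\Lflip{\theta}$, identifiers, $\Lfst{x}$/$\Lsnd{x}$, tuple construction, $\Lite{g}{\te_T}{\te_E}$, $\Llet{x=\te_1}{\te_2}$, and $\Lobs{g}$---the compilation rules of Figure~\ref{fig:comp_typed} are unchanged, and none of the arguments in Lemma~\ref{lem:correct_no_proc} relied on subexpressions being function-call-free. So I would carry those cases over directly, with the induction hypothesis now ranging over subexpressions that may themselves contain function calls. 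All the new content lies in the base case for $f(x_1)$.

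For that case, suppose $\Gamma(f) = \tau_1 \to \tau_2$, $\Gamma(x_1) = \tau_1$, $\Phi(f) = (\lx_{arg}, \xpointphi_f, \obs_f, w_f)$, and that \textsc{C-FuncCall} yields the compiled triple $(\xpointphi'[\lx_{arg} \xmapsto{\tau_1} \lx_1],\ \obs'[\lx_{arg} \xmapsto{\tau_1} \lx_1],\ w')$ where $(\xpointphi', \obs', w') = \texttt{RefreshFlips}(\lx_{arg}, \xpointphi_f, \obs_f, w_f)$. Let $v_1 : \tau_1$ be the value substituted for $x_1$ and fix $v : \tau_2$. Using the function-call clause of the semantics, $\dbracket{f(x_1)[x_1 \mapsto v_1]}(v) = \dbracket{f(v_1)}(v) = \big(T(f)(v_1)\big)(v)$, and by table compatibility of $T$ and $\Phi$ this equals $\wmc\big(((\xpointphi_f \xLeftrightarrow{\tau_2} v) \land \obs_f)[\lx_{arg} \xmapsto{\tau_1} v_1], w_f\big)$. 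It then remains to show that this matches $\wmc$ of the compiled formulas of $f(x_1)$ after substituting $v_1$ for $\lx_1$. Two small auxiliary facts bridge the gap: (i) composing the typed substitutions $[\lx_{arg} \xmapsto{\tau_1} \lx_1]$ and $[\lx_1 \xmapsto{\tau_1} v_1]$ agrees with the single substitution $[\lx_{arg} \xmapsto{\tau_1} v_1]$ on formulas not mentioning $\lx_1$, provable by induction on $\tau_1$ exactly as in Lemma~\ref{lem:subst}; and (ii) \texttt{RefreshFlips} renames the non-argument variables of $\xpointphi_f$ and $\obs_f$ by a bijection and renames the domain of $w_f$ correspondingly, and weighted model count is invariant under such a weight-preserving renaming (the renaming induces a weight-preserving bijection on $\mods(\cdot)$). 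I would state (ii) as a standalone lemma. Applying (i) and (ii) rewrites the target into $\wmc\big(((\xpointphi_f \xLeftrightarrow{\tau_2} v) \land \obs_f)[\lx_{arg} \xmapsto{\tau_1} v_1], w_f\big)$, completing the case.

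The main obstacle I expect is the bookkeeping around \texttt{RefreshFlips}: one must state precisely that the variables it introduces are fresh with respect to $\lx_1$ and the ambient context, and that the renaming is threaded consistently into the weight function, so that the WMC-invariance lemma genuinely applies. Once those hygiene conditions are pinned down, the rest is either inherited from the proof of Lemma~\ref{lem:correct_no_proc} or follows directly from the compatibility hypothesis, which was designed precisely so that the function-call case goes through by a single unfolding.
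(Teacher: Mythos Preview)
Your proposal is correct and takes essentially the same approach as the paper: structural induction with all non-call cases inherited from Lemma~\ref{lem:correct_no_proc}, and the function-call case discharged by one unfolding of table compatibility together with the fact that \texttt{RefreshFlips} preserves $\wmc$. Your treatment is in fact more explicit than the paper's, which invokes ``Defn.\ of \texttt{RefreshFlips}'' without stating the renaming-invariance lemma and silently absorbs your substitution-composition fact (i).
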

\begin{proof}
  The proof is identical to the proof of Lemma~\ref{lem:correct_no_proc} except
  for the addition of the function call syntax, which we prove here.

  Assume $\te = x_1(x_2)$ and assume $\Phi(x_1) = (\lx_{arg}, \tup, \obs,
  w)$. Assume $(\tup', \obs', w) = \texttt{RefreshFlips}(\tup, \obs, w)$
  Then, $x_1(x_2) \rightsquigarrow (\tup[\lx_{arg} \mapsto \lx_2],
  \obs[\lx_{arg} \mapsto \lx_2], w)$. Then the result follows directly from
  table compatibility:
  \begin{align*}
    \dbracket{x(v^x)}(\true)
    &= T(x)(v^x)(\true) \\
    &= \wmc\big(((\tup \xLeftrightarrow{\tau_2} v) \land \obs)[\lx \xmapsto{\tau_1} v^x], w\big) & \text{Table Compatibility}\\
    &= \wmc\big(((\tup' \xLeftrightarrow{\tau_2} v) \land \obs')[\lx \xmapsto{\tau_1} v^x], w\big)
    & \text{Defn. of \texttt{RefreshFlips}}
  \end{align*}
\end{proof}

Now we are ready for the main theorem:
\begin{theorem}[Typed Program Correctness]
  Let $\prog$ be a \dice{} program $\Gamma \vdash \prog : \tau
\rightsquigarrow (\tup, \obs, w)$. Then for any $v : \tau$, we have that $\dbracket{\prog}(v) =
\wmc((\tup \xLeftrightarrow{\tau} v) \land \obs, w)$.
\end{theorem}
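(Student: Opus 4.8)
The plan is to prove Theorem~\ref{thm:prog_correctness} by structural induction on the program $\prog$, following the two compilation rules \textsc{C-Prog1} and \textsc{C-Prog2}. The crucial move is to strengthen the statement so that the induction hypothesis is available for \emph{any} pair of compatible tables $(T,\Phi)$ rather than only the empty ones: I will prove that whenever $T$ and $\Phi$ are compatible and $\Gamma, \Phi \vdash \prog : \tau \rightsquigarrow (\xpointphi, \obs, w)$, then $\dbracket{\prog}^T(v) = \wmc((\xpointphi \xLeftrightarrow{\tau} v) \land \obs, w)$ for all $v : \tau$. The theorem is the special case $T = \Phi = \emptyset$, which are trivially compatible since there are no function identifiers to check. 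Throughout, the workhorse is Theorem~\ref{thm:correctness_fcall} (Boolean Correctness with Procedure Calls), which already handles an arbitrary expression with function calls relative to a compatible table pair.

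For the base case $\prog = \bullet~\te$, the program semantics gives $\dbracket{\bullet~\te}^T = \dbracket{\te}^T$ and \textsc{C-Prog1} gives that $\prog$ compiles to the same triple as $\te$; so the claim follows immediately from Theorem~\ref{thm:correctness_fcall} applied to $\te$ with $(T,\Phi)$. For the inductive step $\prog = {\tt func}~{\tt p}'$ with ${\tt func} = \texttt{fun}~f(x_1 : \tau_1) : \tau_2~\{\te\}$, rule \textsc{C-Prog2} compiles ${\tt func}$ to $(\xpointphi_f, \obs_f, w_f)$ and then compiles ${\tt p}'$ under the extended contexts $\Gamma' = \Gamma \cup \{f \mapsto \tau_1 \rightarrow \tau_2\}$ and $\Phi' = \Phi \cup \{f \mapsto (\lx_1, \xpointphi_f, \obs_f, w_f)\}$; the semantics likewise evaluates ${\tt p}'$ under $T' = T \cup \{\eta({\tt func}) \mapsto \dbracket{\tt func}^T\}$. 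The key sublemma for this step is that $T'$ and $\Phi'$ are again compatible. For function identifiers other than $f$ this is inherited from compatibility of $(T, \Phi)$. For $f$ itself, using the function semantics $\dbracket{\tt func}^T(v^x) = \dbracket{\te[x_1 \mapsto v^x]}^T$ and the rule \textsc{C-Func}, which compiles the body $\te$ under $\Gamma \cup \{x_1 : \tau_1\}, \Phi$, I apply Theorem~\ref{thm:correctness_fcall} to $\te$ (legitimate because $\te$ only calls functions already recorded in $\Phi$ and $(T,\Phi)$ is compatible) to obtain $\dbracket{\te[x_1 \mapsto v^x]}^T(v) = \wmc(((\xpointphi_f \xLeftrightarrow{\tau_2} v) \land \obs_f)[\lx_1 \xmapsto{\tau_1} v^x], w_f)$, which is exactly the compatibility condition for $f$. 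With $(T', \Phi')$ compatible, the induction hypothesis applied to ${\tt p}'$ yields $\dbracket{{\tt p}'}^{T'}(v) = \wmc((\xpointphi \xLeftrightarrow{\tau} v) \land \obs, w)$, and since $\dbracket{\prog}^T = \dbracket{{\tt p}'}^{T'}$ and $\prog$ compiles to the same triple as ${\tt p}'$, the step is complete.

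The main obstacle is getting the bookkeeping around weight functions and fresh variables exactly right. First, the statement implicitly relies on $w_1 \cup w_2$ being a well-defined union of maps throughout the derivation: I must argue that distinct subexpressions, distinct function bodies, and the main expression never introduce the same \texttt{flip} variable, so that no conflicts arise — this is precisely what the \texttt{RefreshFlips} operation in \textsc{C-FuncCall} ensures, and I will use the fact (already invoked inside the proof of Theorem~\ref{thm:correctness_fcall}) that \texttt{RefreshFlips} only renames non-argument variables and therefore preserves both the weighted model count and the semantics. Second, a minor but necessary point is that invoking Theorem~\ref{thm:correctness_fcall} and the definition of table compatibility requires quantifying uniformly over all argument and result values, and keeping the substitution $[\lx_1 \xmapsto{\tau_1} v^x]$ and the pointwise-iff operator $\xLeftrightarrow{\tau_2}$ at the correct types. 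Apart from these, the remaining work is routine unfolding of the \textsc{C-Prog} rules and the program semantics.
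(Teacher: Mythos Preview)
Your proposal is correct and follows essentially the same route as the paper: structural induction on $\prog$ via \textsc{C-Prog1}/\textsc{C-Prog2}, using Theorem~\ref{thm:correctness_fcall} both for the base case and to establish that the extended tables $(T',\Phi')$ remain compatible in the inductive step. Your explicit strengthening to arbitrary compatible $(T,\Phi)$ and your remarks on the \texttt{RefreshFlips}/weight-union bookkeeping make the argument slightly more careful than the paper's presentation, but the underlying structure is identical.
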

\begin{proof}
  
  \begin{itemize}[leftmargin=*]
  \item \emph{Base case}: $\prog = \te$. Assume $\Gamma, \Phi ~~\bullet \te :
    \tau \rightsquigarrow (\tup, \obs, w)$. Then,
      $\dbracket{\bullet \te}(v)
      = \dbracket{\te}(v)
      = \wmc((\tup \xLeftrightarrow{\tau} v) \land \obs, w)$, by Theorem~\ref{thm:correctness_fcall}.
  \item \emph{Inductive step}: The program is of the form
    $\prog_1 = \texttt{fun}~x_1(x_2)  ~ \{\te\} ~\prog_2$.

Assume that $\Gamma, \Phi \vdash \texttt{fun}~x_1(x_2) ~ \{\te\} : \tau_1
\rightarrow \tau_2 \rightsquigarrow (\tup_f, \obs_f, w_f)$. Let $T' = T \cup
\{x_1 \mapsto \dbracket{\texttt{func}}\} $ and $\Phi' = \Phi \cup \big\{x_1
\mapsto (\lx_2, \tup_f, \obs_f, w_f)\big\}$. Then,
Theorem~\ref{thm:correctness_fcall} guarantees that $T'$ and $\Phi'$ are
compatible tables. Let $\Gamma \cup \{x_1 \mapsto \tau_1 \rightarrow \tau_2\},
\Phi' \vdash \prog_2 : \tau \rightsquigarrow (\tup, \obs, w)$. Then,
\begin{align*}
  \dbracket{\texttt{fun}~x_1(x_2)  ~ \{\te\} ~\prog_2}^T(v)
  =&\dbracket{\prog_2}^{T'}(v) \\
  =& \wmc\Big(\big(\tup \xLeftrightarrow{\tau} v \big) \land \obs, w\Big) & \text{By Ind. Hyp.}
\end{align*}
  \end{itemize}

\end{proof}

Finally we prove Theorem~\ref{thm:compilation_correct}, restated here for
convenience:
\begin{theorem}[Compilation Correctness]
  Let $\prog$ be a \dice{} program and $ \emptyset, \emptyset \vdash \prog : \tau \rightsquigarrow
(\xpointphi, \obs, w)$.  Then:
\begin{itemize}
\item $\dbracket{\prog}_A
= \wmc(\obs, w)$

\item for any value $v : \tau$, $\dbracket{\prog}_D(v)
= \wmc((\xpointphi \xLeftrightarrow{\tau} v) \land \obs, w) / \wmc(\obs, w)$.
\end{itemize}
\end{theorem}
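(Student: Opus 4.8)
The plan is to obtain both claims as direct consequences of Theorem~\ref{thm:prog_correctness}, which already gives $\dbracket{\prog}(v) = \wmc\big((\xpointphi \xLeftrightarrow{\tau} v) \land \obs, w\big)$ for every $v : \tau$, together with the definitions $\dbracket{\te}_A \defeq \sum_v \dbracket{\te}(v)$ and $\dbracket{\te}_D(v) \defeq \dbracket{\te}(v)/\dbracket{\te}_A$. The only genuinely new ingredient is a \emph{coverage lemma}: for a tuple of Boolean formulas $\xpointphi$ of shape $\tau$ over a variable set $X$, the family $\{\,\xpointphi \xLeftrightarrow{\tau} v\,\}_{v:\tau}$ is pairwise mutually exclusive and $\bigvee_{v:\tau}(\xpointphi \xLeftrightarrow{\tau} v)$ is valid. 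I would prove this by induction on $\tau$ using the inductive definition of pointwise iff: for $\tau = \bool$ the family is $\{\varphi, \overline{\varphi}\}$, which is exclusive and exhaustive; for $\tau = \tau_1 \times \tau_2$ it is the componentwise conjunction of two exclusive-and-exhaustive families, hence again exclusive and exhaustive. I also use the standard fact, provable by type soundness, that $\dbracket{\prog}(v) = 0$ for any value $v$ not of type $\tau$, so $\sum_v \dbracket{\prog}(v)$ effectively ranges over the finitely many values of type $\tau$.

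For the first claim, I would compute
\[
  \dbracket{\prog}_A = \sum_{v:\tau} \dbracket{\prog}(v) = \sum_{v:\tau} \wmc\big((\xpointphi \xLeftrightarrow{\tau} v) \land \obs, w\big),
\]
the second step by Theorem~\ref{thm:prog_correctness}. By the coverage lemma the formulas $(\xpointphi \xLeftrightarrow{\tau} v)\land\obs$ are pairwise mutually exclusive and $\bigvee_{v:\tau}\big((\xpointphi \xLeftrightarrow{\tau} v)\land\obs\big)$ is logically equivalent to $\obs$, so the mutual-exclusion case of the Inclusion-Exclusion proposition above gives
\[
  \sum_{v:\tau}\wmc\big((\xpointphi \xLeftrightarrow{\tau} v)\land\obs, w\big) = \wmc\Big(\textstyle\bigvee_{v:\tau}\big((\xpointphi \xLeftrightarrow{\tau} v)\land\obs\big),\, w\Big) = \wmc(\obs, w),
\]
which is exactly $\dbracket{\prog}_A = \wmc(\obs, w)$. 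For the second claim, combining the definition of $\dbracket{\cdot}_D$ with Theorem~\ref{thm:prog_correctness} for the numerator and the first claim for the denominator,
\[
  \dbracket{\prog}_D(v) = \frac{\dbracket{\prog}(v)}{\dbracket{\prog}_A} = \frac{\wmc\big((\xpointphi \xLeftrightarrow{\tau} v)\land\obs, w\big)}{\wmc(\obs, w)}.
\]
The final point to check is the division-by-zero convention: if $\wmc(\obs,w) = 0$ then by the first claim $\dbracket{\prog}_A = 0$, so $\dbracket{\prog}_D$ is by definition the always-zero function, and the right-hand side above is also defined to be $0$; the two conventions agree, so the equality holds in all cases.

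I expect the coverage lemma to be the only step requiring real work, and even that is a short structural induction on $\tau$ with the definition of $\xLeftrightarrow{\cdot}$ unfolded; the rest is bookkeeping — matching the definitions of $\dbracket{\cdot}_A$ and $\dbracket{\cdot}_D$ against Theorem~\ref{thm:prog_correctness} and invoking the mutual-exclusion special case of inclusion-exclusion. One minor subtlety worth stating carefully in the write-up is that $\xpointphi$ and $\obs$ may range over different variables, so the coverage and mutual-exclusion arguments must be phrased over the union of their variable sets, but this does not affect the reasoning.
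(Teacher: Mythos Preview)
Your proposal is correct and follows essentially the same route as the paper's proof: invoke Theorem~\ref{thm:prog_correctness} to rewrite $\dbracket{\prog}_A$ as $\sum_{v:\tau}\wmc((\xpointphi \xLeftrightarrow{\tau} v)\land\obs, w)$, use mutual exclusion to collapse this into a single $\wmc$ over the disjunction, and simplify that disjunction to $\obs$. Your write-up is in fact more careful than the paper's, which asserts the final simplification $\bigvee_v((\xpointphi \xLeftrightarrow{\tau} v)\land\obs) \equiv \obs$ without comment; your explicit coverage lemma (pairwise exclusivity and exhaustiveness of $\{\xpointphi \xLeftrightarrow{\tau} v\}_{v:\tau}$, by induction on $\tau$) fills exactly that gap, and your handling of the division-by-zero convention is likewise a detail the paper leaves implicit.
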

\begin{proof}
  Let $\{\}, \{\} \vdash \prog : \tau \rightsquigarrow (\tup, \obs, w)$. Then,
  \begin{align*}
    \dbracket{\prog}_A &= \sum_{v} \wmc((\tup \xLeftrightarrow{\tau} v) \land \obs, w) & \text{Theorem~\ref{thm:prog_correctness}} \\
                       &= \wmc\left( \bigvee_v ((\tup \xLeftrightarrow{\tau} v) \land \obs), w \right) & \text{Mut. Excl.} \\
                       &= \wmc(\gamma, w).
  \end{align*}
  Then, $\dbracket{\prog}_D(v) = \dbracket{\prog}(v) / \sum_{v'}
  \dbracket{\prog}(v') = \wmc((\tup \xLeftrightarrow{\tau} v) \land \obs,
  w)/\wmc(\gamma, w)$ by Theorem~\ref{thm:prog_correctness} and the above argument.
\end{proof}

\section{Implementation Details}
\label{sec:impl_details}
\subsection{Variable Ordering} The \emph{variable ordering} --- the order in
which variables are branched on in a BDD --- is a critical parameter that
determines how compactly a BDD can represent a particular logical
formula~\citep{meinel1998algorithms, Bryant86}. Finding the optimal order --- the
one that minimizes the size of the BDD --- is NP-hard, so one must typically
resort to heuristics for choosing orderings that work well in practice. 
\dice{} orders variables according to the syntactic order in which they occur
in the program, mirroring the topological variable ordering heuristic from
Bayesian networks~\citep{Darwiche09}. We anticipate future work in deriving more
sophisticated variable ordering heuristics from static program analyses. 

\subsection{Multi-rooted BDDs}
\begin{wrapfigure}{R}{0.1\textwidth}
     \begin{tikzpicture}
    \def\lvl{22pt}
    \node (f2) at (0bp,0bp) [bddnode] {$f_2$};
    \node (f1) at ($(f2) + (-20bp, -\lvl)$) [bddnode] {$f_1$};

    \node (l) at ($(f1) + (0bp, \lvl)$) {$\mathtt{fst}$};
    \node (r) at ($(f2) + (0bp, \lvl)$) {$\mathtt{snd}$};
    \node (true) at ($(f1) + (0bp, -\lvl)$) [bddterminal] {$\true$};
    \node (false) at ($(f2) + (0bp, -2*\lvl)$) [bddterminal] {$\false$};
    
    \begin{scope}[on background layer]
      \draw [highedge] (f1) -- (true);
      \draw [lowedge] (f1) -- (false);
      \draw [highedge] (f2) -- (f1);
      \draw [lowedge] (f2) -- (false);
      \draw [->] (l) -- (f1);
      \draw [->] (r) -- (f2);
    \end{scope}
  \end{tikzpicture}
  \caption{}
  \label{fig:mult_root}
  \vspace{-1cm}
\end{wrapfigure}
\dice{} typically needs to represent many BDDs at the same time that share
structure. The accepting and unnormalized formulas may share
sub-formulas, or tuples may compile to formulas that share some substructure. 
\emph{Multi-rooted BDDs} naturally exploit this repeated substructure to
compactly represent multiple Boolean formulas in a single data structure. For
instance, the following example program that returns a tuple is compiled into
the multi-rooted BDD in Figure~\ref{fig:mult_root}:
\begin{center}
\texttt{let x = flip$_1$ 0.6 in let y = x $\land$ flip$_2$
  0.4 in (x, y)}
\end{center}

\subsection{Proposition~\ref{prop:composition}}

\label{sec:comp_proof}
\begin{proof}[Proof Sketch]
  The proof is by construction. For instance, for conjunction, BDDs for $B_1$ and
  $B_2$ are of the form:
  \begin{align*}
      B_1 =
  \scalebox{0.7}{
  \begin{tikzpicture}[baseline=-3ex]
      \def\lvl{20pt}
    \node (b) at (0, 0) [bddnode] {$B_1'$};
    \node (z1) at ($(b) + (-20bp, -\lvl)$) [bddnode] {$z$};
    \node (z2) at ($(b) + (20bp, -\lvl)$) [bddnode] {$z$};
    \node (true) at ($(z1) + (0bp, -\lvl)$) [bddterminal] {$\true$};
    \node (false) at ($(z2) + (0bp, -\lvl)$) [bddterminal] {$\false$};
    \begin{scope}[on background layer]
      \draw [highedge] (b) -- (z1);
      \draw [lowedge] (b) -- (z2);
      \draw [highedge] (z1) -- (true);
      \draw [lowedge] (z1) -- (false);
      \draw [highedge] (z2) -- (false);
      \draw [lowedge] (z2) -- (true);
    \end{scope}
  \end{tikzpicture}}
    \qquad
    B_2=
  \scalebox{0.7}{
  \begin{tikzpicture}[baseline=-3ex]
      \def\lvl{20pt}
    \node (z) at (0, 0) [bddnode] {$z$};
    \node (b2z) at ($(z) + (-20bp, -\lvl)$) [bddnode] {$B_2 \mid z$};
    \node (b2nz) at ($(z) + (20bp, -\lvl)$) [bddnode] {$B_2 \mid \bar{z}$};
    \begin{scope}[on background layer]
      \draw [highedge] (z) -- (b2z);
      \draw [lowedge] (z) -- (b2nz);
    \end{scope}
  \end{tikzpicture}}
  \end{align*}
  where $B_1'$ is the BDD for $B_1$ with $z$ separated out and $B_2 \mid z$ is
the BDD for $B_2$ with $z=\true$. The BDD for $B_1 \land B_2$ can be constructed
in linear time by traversing $B_1'$ and rerouting all high edges coming from $z$
to that end in $\true$ to $B_2 \mid z$, and all low edges from $z$ that end in
$\true$ to $B_2 \mid \bar{z}$.
\begin{tikzpicture}
  
\end{tikzpicture}
\end{proof}

\subsection{Theorem~\ref{thm:hard}}
\label{sec:hard}
\begin{proof}[Proof Sketch]
  The \textsf{PSPACE}-hardness of \dice{} inference follows directly from the
expressiveness of non-recursive Boolean programs. In particular, there is a
polynomial-time reduction from the {\em quantified Boolean formula} (QBF)
problem, which is \textsf{PSPACE}-complete, to such a program. This reduction
can also be used to reduce QBF to the problem of determining the probability
that a \dice{} program outputs true.
In particular, the construction relies on the expressiveness of nested function calls.
Each nested function call corresponds to either a universal or existential
quantifier, and the innermost call can be such that it evaluates a
fully-quantified CNF.
\end{proof}

\end{document}